\definecolor{shadecolor}{rgb}{.9, .9, .9} 
  \newenvironment{myframe}{%
  \MakeFramed {\hsize\linewidth\advance\hsize-\width \FrameRestore}}%
  {\endMakeFramed}
\newcommand{\exclude}[1]{}
\newcommand{\showComments}{yes}
\newcommand{\note}[2]{
    \ifthenelse{\equal{\showComments}{yes}}{\textcolor{#1}{#2}}{}
}
\let\OldStatex\Statex
\renewcommand{\Statex}[1][3]{%
  \setlength\@tempdima{\algorithmicindent}%
  \OldStatex\hskip\dimexpr#1\@tempdima\relax}
\newcommand*{\inlineequation}[2][]{%
  \begingroup
    \refstepcounter{equation}%
    \ifx\\#1\\%
    \else
      \label{#1}%
    \fi
    \relpenalty=10000 %
    \binoppenalty=10000 %
    \ensuremath{%
      #2%
    }%
    ~\@eqnnum
  \endgroup
}
\algnewcommand{\LeftComment}[1]{\OldStatex \(\triangleright\) #1}
\algnewcommand{\LineComment}[1]{\OldStatex \(\triangleright\) #1}
\newcommand{\psass}{\ensuremath{\mathbin{{=}}\ }}
\newcommand{\addeq}{\ensuremath{\mathbin{{+}{=}}\ }}
\newcommand{\subeq}{\ensuremath{\mathbin{{-}{=}}\ }}
\newcommand{\muleq}{\ensuremath{\mathbin{{\times}{=}}\ }}
\newcommand{\eqeq}{\ensuremath{\mathbin{{=}{=}}\ }}
\newcommand{\name}{{\sc{TopoOpt}}\xspace}
\newcommand{\algo}{{\sc{TopologyFinder}}\xspace}
\newcommand{\algobf}{\bfseries{\scshape{TopologyFinder}}\xspace}
\newcommand{\fattree}{{Fat-tree}\xspace}
\newcommand{\fattrees}{{Fat-trees}\xspace}
\newcommand{\LBE}{{\fattree}\xspace}
\newcommand{\SBE}{{Ideal Switch}\xspace}
\newcommand{\OBE}{{Oversub. \fattree}\xspace }
\newcommand{\namebf}{\bfseries{\scshape{TopoOpt}}\xspace}
\newcommand{\para}[1]{{\textbf{{#1}}}}
\newcommand{\net}{{Meta}\xspace}
\newcommand{\MP}{{MP}\xspace}
\newcommand{\reduce}{{AllReduce}\xspace}
\newcommand{\fancyperms}{{TotientPerms}\xspace}
\newcommand{\fancypermsnospace}{{TotientPerms}}
\newcommand{\franksim}{\textit{FlexNetPacket}\xspace}
\newcommand{\fnet}{\textit{FlexNet}\xspace}
\newcommand{\captionvspace}{0em}
\algnewcommand{\IOComment}[1]{\OldStatex \(\triangleright\) #1}
\algnewcommand{\firstLeftComment}[1]{\OldStatex \(\indent\triangleright\) #1}
\algnewcommand{\secondLeftComment}[1]{\OldStatex \(\indent\indent\triangleright\) #1}
\algnewcommand{\thirdLeftComment}[1]{\OldStatex \(\indent\indent\indent\triangleright\) #1}
\definecolor{LightCyan}{rgb}{0.88,1,1}
\definecolor{celadon}{rgb}{0.67, 0.88, 0.69}
\date{}
\setlist{nosep} 
\begin{document}

\title{\Large \bf \namebf: Co-optimizing Network Topology and \\ Parallelization Strategy for Distributed Training Jobs}

\author{
Weiyang Wang$^*$ \quad Moein Khazraee$^*$ \quad Zhizhen Zhong$^*$ \qquad Manya Ghobadi$^*$  \\
Zhihao Jia$^{\dagger,\ddagger}$ \quad Dheevatsa Mudigere$^\dagger$ \quad Ying Zhang$^\dagger$ \quad  Anthony Kewitsch$^\S$\\\\
 $^*$Massachusetts Institute of Technology\quad
$^\dagger$Meta \quad
$^\ddagger$CMU \quad
$^\S$Telescent \\
 }

\setcounter{page}{1}
\maketitle
\begin{abstract}
We propose \name, a novel direct-connect fabric for deep neural network (DNN) training workloads. \name co-optimizes the distributed training process across three dimensions: computation, communication, and network topology. We demonstrate the mutability of \reduce traffic, and leverage this property to construct efficient network topologies for DNN training jobs. \name then uses an alternating optimization technique and a group theory-inspired algorithm called \fancyperms to find the best network topology and routing plan, together with a parallelization strategy. 
We build a fully functional 12-node direct-connect prototype with remote direct memory access (RDMA) forwarding at 100~Gbps. Large-scale simulations on real distributed training models show that, compared to similar-cost \fattree interconnects, \name reduces DNN training time by up to 3.4$\times$. 
\end{abstract}

\section{Introduction}
\label{sec:intro}

Our society is rapidly becoming reliant on deep neural networks (DNNs). New datasets and models are invented frequently, increasing the memory and computational requirements for training. This explosive growth has created an urgent demand for efficient distributed DNN training systems. 

Today's DNN training systems are built on top of traditional datacenter clusters, with electrical packet switches arranged in a multi-tier \fattree topology~\cite{fat-tree}. \fattree topologies are traffic oblivious fabrics, allowing uniform bandwidth and latency between server pairs. They are ideal when the workload is unpredictable and consists mostly of short transfers--two inherent properties of legacy datacenter workloads~\cite{pfabric, dc_benson, dctcp, resource_disaggregation, projector}. But \fattree networks are becoming a bottleneck for distributed DNN training workloads~\cite{pipedream, flex_flow, gpipe, blueconnect, goyal, wang2020blink, mudigere2021highperformance}.

Previous work has addressed this challenge by reducing the size of parameters to transmit through the network~\cite{horovod, ako, qsgd, lin2017deep, imagenet_in_four, blueconnect, tictac, firecaffe, goyal, pipedream, dcscale_ml, parameter_propagation} and developing techniques to discover faster parallelization strategies while considering the available network bandwidth~\cite{flex_flow, pipedream, placeto, efficient_comp_comm_neurips, qsgd}. These proposals co-optimize computation and communication as two important dimensions of distributed DNN training, but they do not consider the \textit{physical layer topology} as an optimization dimension.

In this paper, we analyze DNN training jobs from production clusters of Meta. We demonstrate that training workloads do not satisfy common assumptions about datacenter traffic that underlie the design of \fattree interconnects. Specifically, we show that ($i$) the communication overhead of large DNN training jobs increases dramatically as we increase the number of workers; and ($ii$) the traffic pattern of a DNN training job depends on its parallelization strategies.

Motivated by these observations, we propose \name, a direct-connect DNN training system that co-optimizes network topology and parallelization strategy. \name creates dedicated partitions for each training job using reconfigurable optical switches and patch panels, `and jointly optimizes the topology and parallelization strategy within each partition. To achieve our goal, we grapple with the \emph{algorithmic} challenges of finding the best topology, such as how to navigate the large search space across computation, communication, and topology dimensions, and also with various \emph{operational} challenges, such as which optical switching technologies match well with the traffic patterns of DNN models.

We cast the topology and parallelization strategy co-optimization problem as an off-line alternating optimization framework. Our optimization technique \textit{alternates} between optimizing the parallelization strategy and optimizing the network topology. It searches over the parallelization strategy space assuming a fixed topology, and feeds the traffic demand to a \algo algorithm. The updated topology is then fed back into the parallelization strategy search algorithm. This alternating process repeats until the system converges to an optimized parallelization strategy and topology. 

We demonstrate that finding an optimized network topology for DNNs is challenging because the ideal network topology needs to meet two goals simultaneously: ($i$) to complete large \reduce transfers efficiently, and ($ii$) to ensure a small hop-count for Model Parallel transfers. To meet these goals, we propose a novel \textit{group theory-based technique}, called \fancyperms, that exploits the \textit{mutability} of \reduce transfers. Our \fancyperms approach builds a series of \textit{\reduce permutations} that  not only carry \reduce transfers efficiently, but are also well-positioned to carry Model Parallel transfers and, hence, improve the overall training performance. 

Optical circuit-switched networks traditionally support point-to-point traffic across hosts with direct circuits between them. As a result, for a given set of circuits, only directly connected hosts can communicate leaving the rest of the hosts wait for new circuits to be established. To support arbitrary communication across all hosts participating in a job, we enable \name's hosts to act as relays and forward the traffic that does not belong to them. Host-based forwarding introduces a new challenge for RDMA flows since RDMA NICs drop packets that do not belong to them. To enable host-based RDMA forwarding, we exploit the network partition (NPAR) function of modern NICs, creating an efficient logical overlay network for RDMA packet forwarding (\S\ref{sec:prototype}).

To evaluate \name, we build a 12-server prototype with NVIDIA A100 GPUs~\cite{a100}, 100~Gbps NICs and a Telescent reconfigurable optical patch panel~\cite{telescent}. Moreover, we integrate our \fancyperms\ \reduce permutations into NCCL and enable it to load-balance parameter synchronization across multiple ring-\reduce sub-topologies. Our evaluations with six representative DNN models (DLRM~\cite{dlrm}, CANDLE~\cite{candle_uno}, BERT~\cite{transformer}, NCF~\cite{ncf}, ResNet50~\cite{resnet}, and VGG~\cite{vgg}) show that \name reduces the training iteration time by up to 3.4$\times$ compared to a similar-cost \fattree. Moreover, we demonstrate that \name is, on average, 3.2$\times$ cheaper than an ideal full bisection bandwidth \fattree. \name is the first system that co-optimizes topology and parallelization strategy for ML workloads and is currently being evaluated for deployment at  Meta. The source code and scripts of \name are available at \url{https://topoopt.csail.mit.edu}.

\section{Motivation}
\label{sec:measurement}

Prior research has illustrated that \textit{demand-aware} network fabrics are flexible and cost-efficient solutions for building efficient datacenter-scale networks~\cite{helios,mordia, projector}. However, predicting the upcoming traffic distribution is challenging in a traditional datacenter setting. This section demonstrates that DNN training workloads present a unique opportunity for demand-aware networks, as the jobs are long-lasting, and the traffic distribution can be \textit{pre-computed} before the jobs start to run. First, we provide the necessary background to understand distributed DNN training and introduce three types of data dependencies between accelerator nodes in training jobs (\S\ref{sec:types_of_data_transfers}). Then, we present measurements from production clusters in \net (\S\ref{sec:production_measurement}), and discuss the important properties of DNN training traffic. 

\subsection{Background on Distributed DNN training}
\label{sec:types_of_data_transfers}

\para{Training iteration.} A common approach to training DNNs is stochastic gradient descent (SGD)~\cite{sgd}. Each SGD \textit{iteration} involves selecting a random batch of labeled training data, computing the error of the model with respect to the labeled data, and calculating gradients for the model's weights through backpropagation. The SGD algorithm seeks to update the model weights so that the next evaluation reduces the error~\cite{brownlee2018better}. Training iterations are repeated with new batch of data until the model converges to the target accuracy. 

\para{Data parallelism.} Data parallelism is a popular parallelization strategy, whereby a batch of training samples is distributed across training accelerators. Each accelerator holds a replica of the DNN model and executes the forward and backpropagation steps locally. 
In data parallelism, all accelerators synchronize their model weights during each training iteration. This step is commonly referred to as \emph{\reduce} and can be performed using various techniques, such as broadcasting~\cite{sufficient_broadcast}, parameter servers~\cite{osdi_parameter_server}, ring-\reduce~\cite{thakur, baidu,imagenet_in_four}, tree-\reduce~\cite{decision_tree}, or hierarchical ring-\reduce~\cite{10.1177/1094342005051521, 8752949}.

\para{Hybrid data and model parallelism.} 
Large DNN models cannot fit in the  memory of a single accelerator or even a single server with multiple accelerators. As a result, the model needs to be divided across multiple accelerators using {\em model parallelism}~\cite{NIPS2014, pmlr-v80-jia18a}. 
Moreover, pure data parallelism is becoming suboptimal for large training jobs because of the increasing cost of synchronizing model parameters across accelerators~\cite{flex_flow, naumov2020deep, bert_blog, shoeybi2020megatronlm, dlrm, huang2019gpipe}. As a result, large DNNs are distributed using a hybrid of data and model parallelism, where different parts of a DNN and its dataset are processed on different accelerators in parallel. 

\para{Types of data dependencies in DNN training.} Each training iteration includes two major types of \textit{data dependencies}. Type (1) refers to \textit{activations} and \textit{gradients} computed during the Forward and Backpropagation steps. This data dependency is required for each input sample. Type (2) refers to synchronizing the {\em model weights} across accelerators through the \reduce step once a batch of samples is processed. Depending on the parallelization strategy, these data dependencies may result in local memory accesses or cross-accelerator traffic. For instance, in a hybrid data and model parallelization strategy, type (1) and (2) both result in cross-accelerator traffic, depending on how the model is distributed across accelerators. Given that type (1) is related to model parallelism, we refer to the network traffic created by type (1) as \textit{\MP transfers}. Similarly, we refer to the network traffic created by type (2) as \textit{\reduce transfers}. Note that \reduce transfers do not strictly mean data parallelism traffic, as model parallelism can also create \reduce transfers across a subset of nodes.

\begin{figure}[t]
\centering
\includegraphics[width=0.45\textwidth]{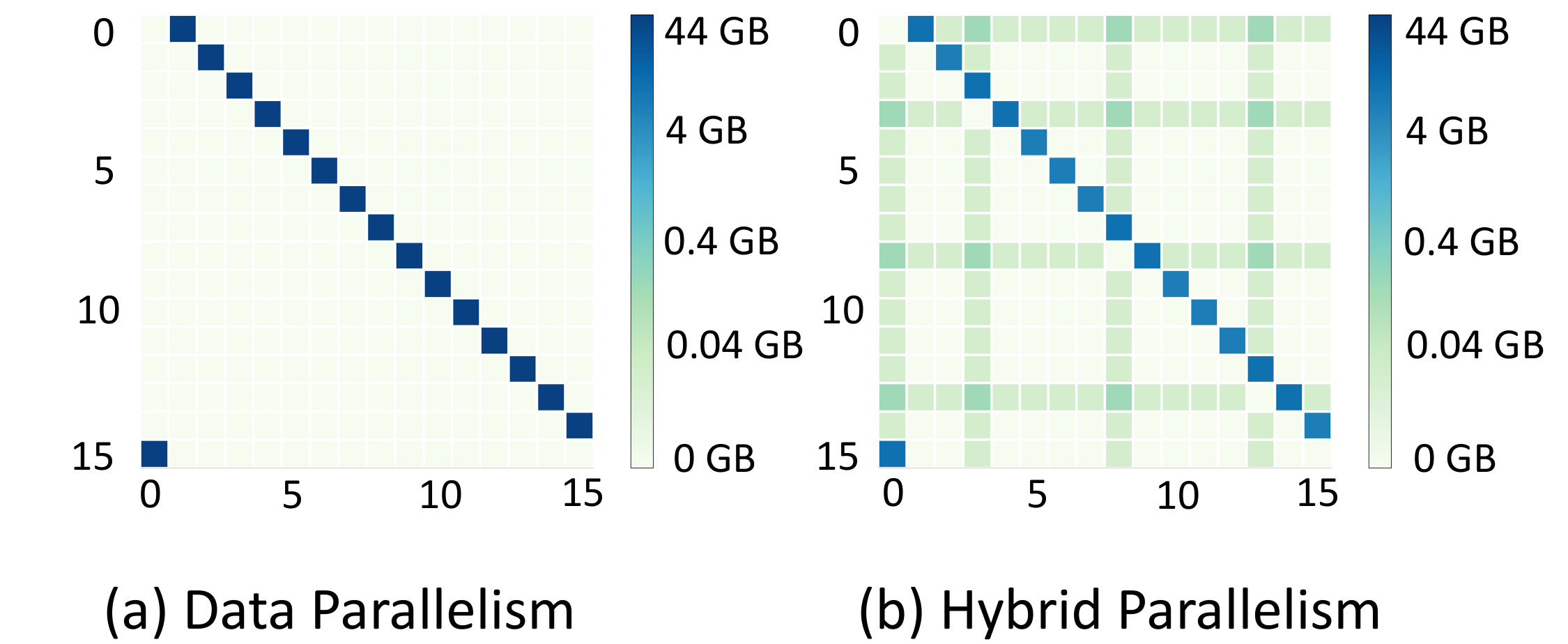}
\caption{DLRM traffic heatmaps for different parallelization strategies.}
\label{fig:dlrm_traffic_pattern_dpmp}
\end{figure}

\para{Example: DLRM traffic pattern.} 
Deep Learning Recommendation Models (DLRMs) are a family of personalization and recommendation models based on embedding table lookups that capitalize on categorical user data~\cite{naumov2019deep}. DLRMs are large, typically with 100s of billions of parameters, primarily because of their large embedding tables. Using pure data parallelism to distribute a DLRM results in massive \reduce transfers. For instance, consider a DLRM architecture with four embedding tables $E_0, \cdots, E_3$, each with embedding dimensions of 512 columns and $10^7$ rows (total size 22~GB for the model) distributed across 16 servers $S_0, \cdots, S_{15}$ with data parallelism. We compute the resulting traffic distribution, and Figure~\ref{fig:dlrm_traffic_pattern_dpmp}a illustrates the traffic pattern for a single training iteration. The rows and columns indicate source and destination servers, while the color encodes the amount of traffic between server pairs. The heatmap shows that using ring-\reduce for synchronization, a pure data parallelism strategy results in 44~GB of \reduce transfers. 

Hence, a common parallelization strategy for DLRMs is to use a hybrid of data and model parallelism where the embedding tables are divided across nodes, while the rest of the model is replicated on all nodes~\cite{mudigere2021highperformance}. Following the parallelization strategy used at \net, we place $E_0$ on $S_0$, $E_1$ on $S_3$, $E_2$ on $S_8$, and $E_3$ on $S_{13}$, and replicate the rest of the model on all servers. This parallelization strategy creates a mix of \MP and \reduce traffic, shown in Figure~\ref{fig:dlrm_traffic_pattern_dpmp}b. It reduces the maximum transfer size from 44~GB to 4~GB. 

Note that \MP transfers in DLRM form one-to-many broadcast and many-to-one incast patterns to transfer the activation and gradients to all nodes because the servers handling embedding tables must communicate with \textit{all other servers}. In this example, the size of each \reduce transfer is 4~GB, whereas the size of \MP transfers is 32~MB, as shown by darker green elements in the heatmap.

\subsection{Production Measurements}
\label{sec:production_measurement}

We study traffic traces from hundreds of production DNN training jobs running on multiple clusters at \net. We instrument each job to log its training duration, number of workers, and the total amount of data transferred across its workers during training.

\para{Number of workers and job duration.} Figure~\ref{fig:fblearner_job_profile}a shows the cumulative distribution function (CDF) of the number of workers for different models in \net's clusters. Most jobs are distributed across 32 to 700 workers, agreeing with recent announcements by other major players in the industry~\cite{acun2020understanding, bert_blog}, where each worker is a single GPU. Figure~\ref{fig:fblearner_job_profile}b demonstrates the CDF of total training job duration; as the figure shows, most jobs last over 10 hours. In fact, the top 10\% of jobs take more than 96 hours (four days) to finish. This measurement shows production DNN jobs at Meta are long-lasting, and take up to weeks to finish.

\para{Network overhead.} Figure~\ref{fig:network_bottleneck} illustrates the percentage of network overhead as the number of GPUs is increased from 8 to 128 for six DNN jobs in production. We use RDMA to transmit packets between servers and measure the percentage of time consumed by communication during training as network overhead. The figure shows that as the number of GPUs increases, the network quickly takes up a significant portion of training iteration time. In fact, the network overhead accounts for up to 60\% of a DNN training iteration time in Meta's production environment. Similar observations have been made in prior work~\cite{pipedream, sip-ml, gpipe, horovod, dcscale_ml, commlatencyimc20}. Such bottleneck suggests the existing datacenter networks are insufficient for the emerging DNN training workloads.

\begin{figure}[t]
\vspace{\captionvspace}
\captionsetup[subfigure]{justification=centering}
\centering
\subfloat[Number of workers] {
\includegraphics[width=0.49\columnwidth]{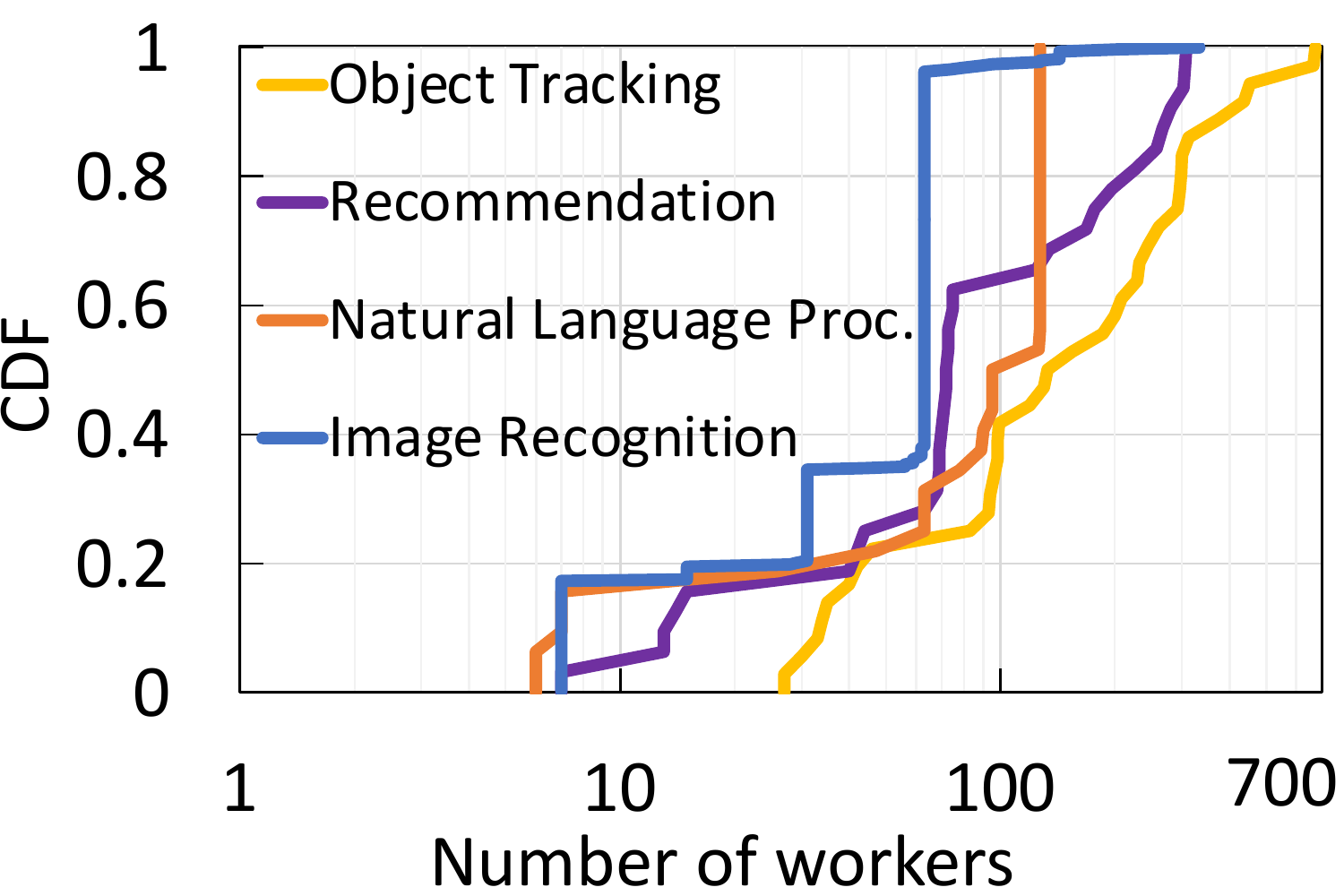}
\label{fig:fblearner_numgpus}
}
\subfloat[Training job duration] {
\includegraphics[width=0.49\columnwidth]{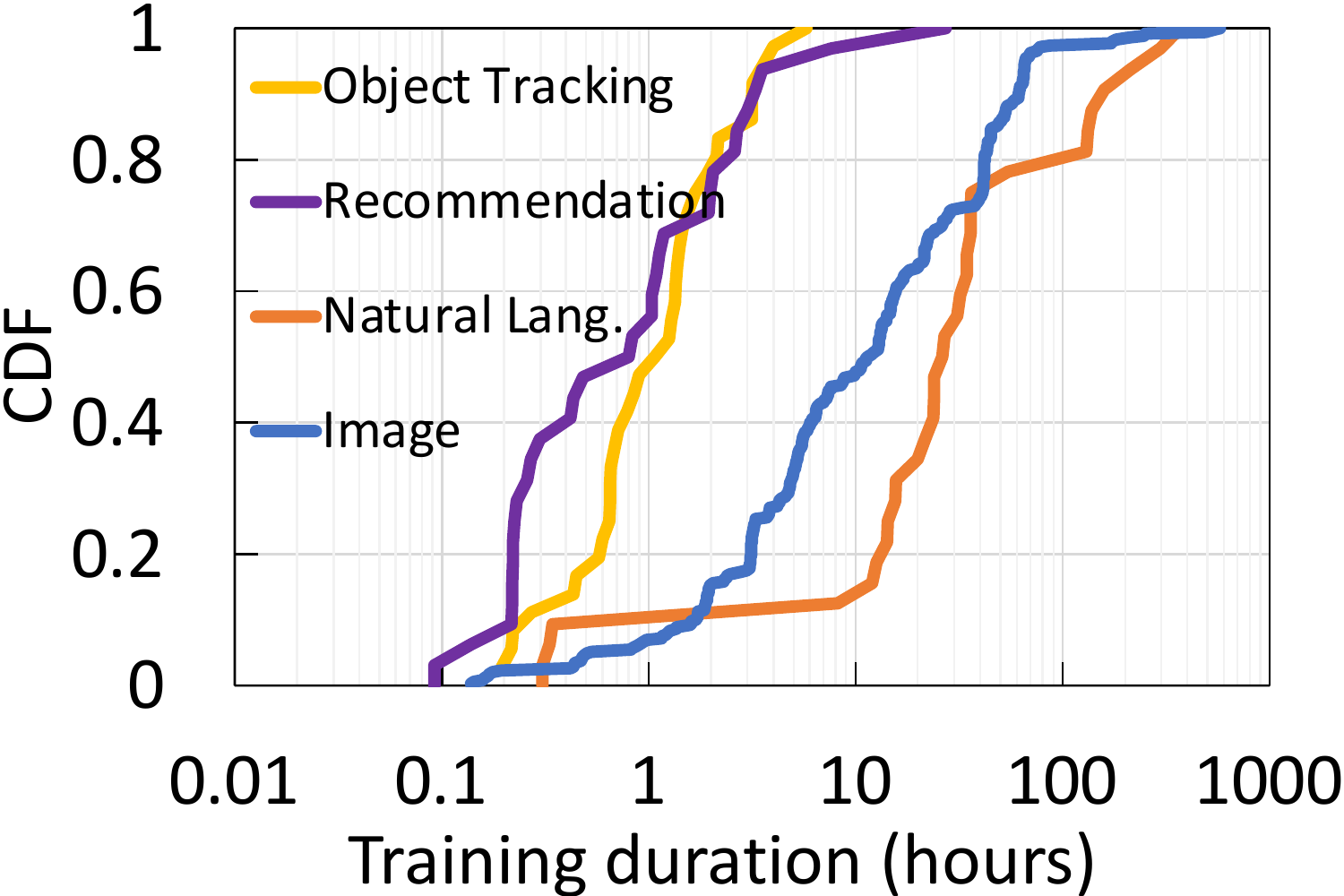}
\label{fig:fblearner_duration}
}
\caption{Profiling distributed DNN training jobs in \net.}
\label{fig:fblearner_job_profile}
\end{figure}

\begin{figure*}[t]
\centering
\begin{minipage}{0.24\textwidth}
\includegraphics[width=\textwidth]{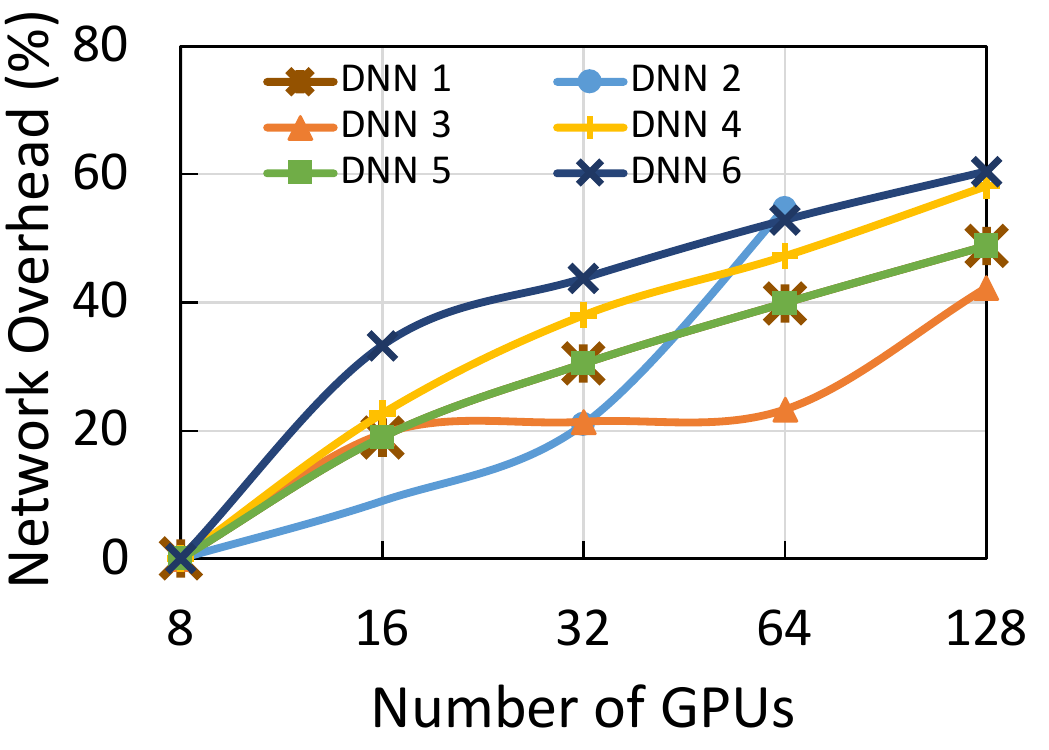}
 \vspace{\captionvspace}
\caption{Network overhead measurements in \net.} 
 \vspace{\captionvspace}
\label{fig:network_bottleneck}
\end{minipage}
\hspace{0.7cm}
\begin{minipage}{0.6\textwidth}
\includegraphics[width=\textwidth]{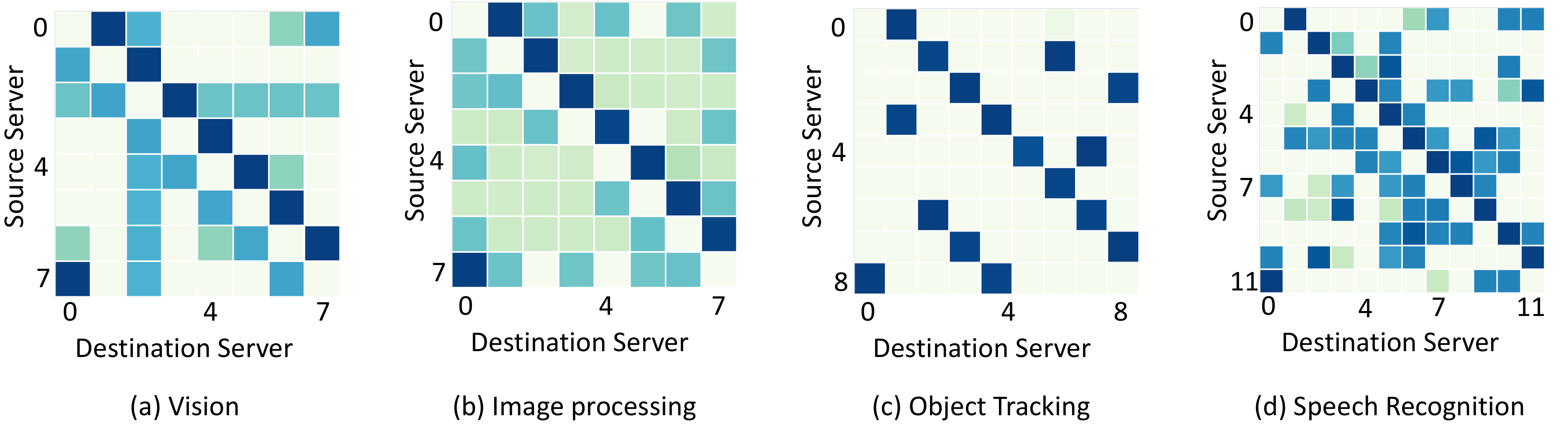}
\vspace{\captionvspace}
\caption{Traffic heatmaps of production jobs in \net.}
 \vspace{\captionvspace}
\label{fig:production_heatmap}
\end{minipage}
\end{figure*}

\para{Traffic heatmaps.} Figure~\ref{fig:production_heatmap} shows the heatmap of server-to-server traffic for four training jobs running in Meta's production GPU clusters. The values on the colormap and the exact names of DNN models are not shown for confidentiality reasons. All heatmaps in the figure contain diagonal squares (in dark blue), indicating a ring communication pattern between servers. This is expected, as ring-\reduce is the common \reduce communication collective at \net. But the \MP transfers (light blue and green squares) are \textit{model-dependent} because \MP transfers depend on the parallelization strategy and device placement of a training job. Moreover, we find that the traffic patterns of training jobs do not change between iterations \textit{for the entire training duration}, resulting in the same per-iteration heatmap throughout the training. Once a training job starts, the same parallelization strategy and synchronization method are used across training iterations, resulting in a periodic and predictable traffic pattern. Similar observations have been made in previous work~\cite{gandiva}. In particular, the traffic heatmap is identical \textit{across} training iterations. Note that the traffic pattern changes \textit{within} a training iteration during forward, backward, and  \reduce phases.  
\section{\namebf System Design}
\label{sec:design}

The observations in the previous section suggest that demand-aware fabrics are excellent candidates for a DNN training cluster. In this section, we seek to answer the following question: \textit{``Can we build a demand-aware network to best support distributed training?"} To answer this question, we propose \name, a novel system based on optical devices that jointly optimizes DNN parallelization strategy and topology to accelerate today's training jobs.

\para{\namebf interconnect.} A \name cluster is a \textit{shardable} direct-connect fabric where each server has $d$ interfaces connected to a core layer of $d$ optical switches, as shown in Figure~\ref{fig:flat_conn}. The optical switches enable \name to shard the cluster into dedicated partitions for each training job. The size of each shard depends on the number of servers the job requests. Given a DNN training job and a set of servers, \name first finds the best parallelization strategy and topology between the servers off-line (\S\ref{sec:alt_opt}). Then, it reconfigures the optical switches to realize the target topology for the job. Appendix~\ref{app:sharding} provides details on how \name achieves sharding and dynamic job arrivals in shared clusters.

\para{Degree of each server.} We denote the number of interfaces on each server (i.e., the degree of the server) by $d$. Typically, $d$ is the same as the number of NICs installed on the server. In cases where the number of NICs is limited, the degree can be increased using NICs that support break-out cables or the next generation of co-packaged optical NVLinks~\cite{optical_nvlink}. In our testbed, we use one 100~Gbps HPE NIC~\cite{hpe-nic} with 4$\times$25~Gbps interfaces to build a system with degree four ($d=4$). 

\para{Direct-connect topology.} In \name, optical switches connect the servers directly, forming a \textit{direct-connect topology}. To further scale a \name cluster, we create a hierarchical interconnect by placing the servers under Top-of-Rack (ToR) switches and connecting ToR switches to the optical layer, creating a direct-connect topology at the ToR or spine layers, similar to previous work~\cite{sirius, rotornet, firefly, bcube, jupiter_evo}.

\para{Host-based forwarding.} In DNN training workloads, the degree of each server is typically smaller than the total number of neighbors with whom the server communicates during training. To ensure traffic is not blocked when there is no direct link between two servers, we use a technique called \textit{host-based forwarding}, where hosts act as switches and forward incoming traffic toward the destination. Previous work used similar technique at the ToR switch level~\cite{rotornet, opera, sirius}.

\begin{figure}[t]
\centering
\includegraphics[width=0.8\columnwidth]{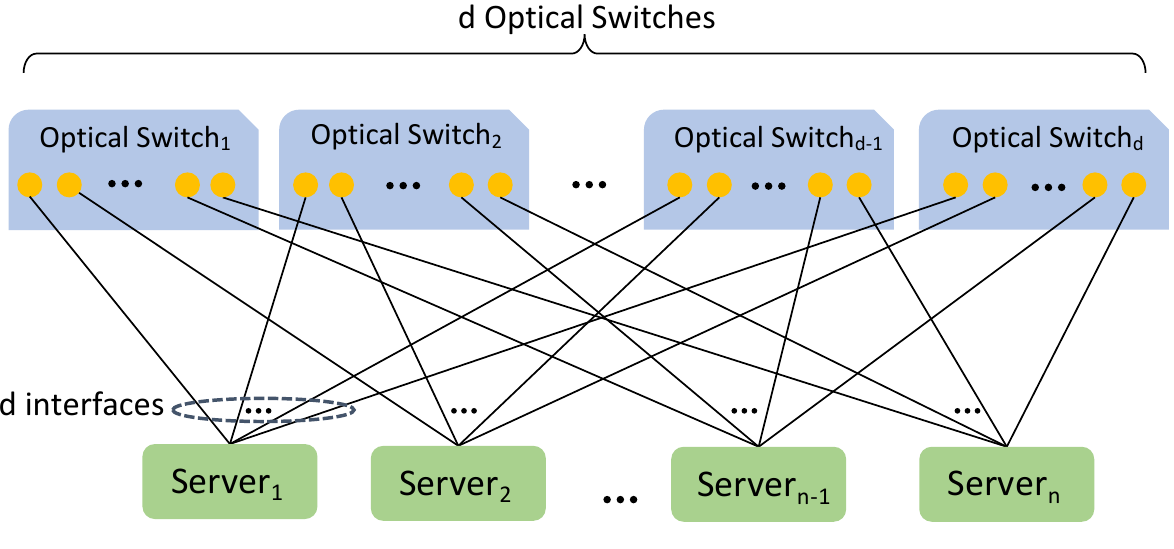}
\caption{Illustration of \name's interconnect.}
\label{fig:flat_conn}
\end{figure}

\para{Optical switching technologies.} A wide range of optical switches is suitable for \name, including commodity available optical patch panels~\cite{telescent} and 3D-MEMS~\cite{calient, polatis}, as well as futuristic designs such as Mordia~\cite{mordia}, MegaSwitch~\cite{megaswitch}, and Sirius~\cite{sirius_ecoc, sirius}. Table~\ref{tab:optical_technologies} lists the characteristics of these devices. \name is compatible with any of these technologies. Appendix~\ref{sec:patch_panel_details} provides details about these devices.

\para{One-shot reconfiguration.} Patch panel and OCS are both applicable for an immediate deployment of \name, as shown in Table~\ref{tab:optical_technologies}. The choice of which technology to use depends on several factors, including scale of the cluster, iteration time of jobs, and frequency of job arrivals. For instance, OCSs can potentially be used to reconfigure the topology of a job \textit{within} training iterations, whereas patch panels are only suitable when the topology remains intact throughout the entire training of a particular job. Our evaluations demonstrate that the reconfiguration latency of today's OCSs is too high for today's DNNs, leading to sub-optimal performance when the topology is reconfigured within iterations (\S\ref{sec:sims}). As a result, given that faster technologies are not yet available, \name uses a one-shot reconfiguration technique based on an offline co-optimization framework (\S\ref{sec:topoopt_algorithms}) that jointly optimizes the parallelization strategy and topology. \name then reconfigures the interconnection between training servers of each job before the job starts and keeps the topology intact until the training is complete (or to recover from failures).
 \section{Co-optimizing Parallelization Strategy and Network Topology}
 \label{sec:topoopt_algorithms}
 
 \begin{table}[tb!]
   \begin{center}
   \scriptsize
   \centering
   \resizebox{\linewidth}{!}{%
   \setlength{\tabcolsep}{4pt} 
     \begin{tabular}{|p{2.7cm}|p{0.5cm}|p{0.9cm}|p{1cm}|p{1.6cm}|}
     \hline
     \textbf{Technology} & \textbf{Port-count} & \textbf{Reconfig. latency} & \textbf{Insertion Loss (dB)}& \textbf{Cost /port}  \\ \hline
      \rowcolor{celadon}
      Optical Patch Panels~\cite{telescent} & 1008 & minutes & 0.5& \$100 \\ \hline
      \rowcolor{celadon}
      3D MEMS~\cite{calient, polatis} & 384 & 10~ms & 1.5--2.7& \$520 \\ \hline
      2D MEMS~\cite{mordia, megaswitch} & 300 & 11.5~$\mu$s & 10--20& Not commercial \\ \hline
      Silicon Photonics \cite{sip-ml, wu_switch} & 256 & 900~ns & 3.7& Not commercial \\ \hline
      Tunable Lasers~\cite{sirius_ecoc, sirius} & 128 & 3.8~ns & 7-13& Not commercial \\ \hline
      RotorNet~\cite{rotornet, opera} & 64 & 10~$\mu$s & 2& Not commercial \\ \hline
     \end{tabular}
   }
   \end{center}
     \vspace{\captionvspace}
     \caption{Comparison of optical switching technologies.}
     \vspace{\captionvspace}
  \label{tab:optical_technologies}
 \end{table}
 
 This section describes \name's co-optimization framework for finding a network topology and parallelization strategy for a given DNN training job. 
 
 \subsection{Alternating Optimization}
 \label{sec:alt_opt}
 
 \para{The search space is too large.} Finding the optimal parallelization strategy alone is an NP-complete problem~\cite{flex_flow}, and adding network topology and routing makes the problem even harder. An extreme solution is to jointly optimize compute, communication, and topology dimensions using a \textit{cross-layer optimization formulation}. Theoretically, this approach finds the optimal solution, but the search space quickly explodes, even at modest scales (e.g., six nodes~\cite{efficient_comp_comm_neurips}).

 \para{Naive approach.} The other extreme is to optimize the network topology \textit{sequentially after} the parallelization strategy has been found. While this approach is able to reconfigure the network to better match its traffic demand, the eventual combination of topology and parallelization strategy is likely to be sub-optimal in the global configuration space. 
 
 \para{\namebf's approach: alternating optimization.} In \name, we seek to combine the best of both worlds. To make the problem tractable, we divide the search space into two planes: $Comp.\times Comm.$ and $Comm.\times Topo.$ We use an alternating optimization technique to iteratively search in one plane while keeping the result of the other plane constant. Figure~\ref{fig:joint_optimization} illustrates our alternating optimization framework. We use FlexFlow's MCMC (Markov Chain Monte Carlo) search algorithm~\cite{flex_flow} to find the best parallelization strategy for a given network topology while considering the communication cost. If the parallelization strategy improves the training iteration time, we feed it to the $Comm. \times Topo.$ plane to find the efficient network topology and routing using our \algo algorithm. 
 The discovered topology is then fed back into the $Comp. \times Comm.$ plane, which further optimizes the parallelization strategy and device placement based on the new topology.
 This optimization loop repeats until convergence or after $k$ iterations, where $k$ is a configurable hyper-parameter.

 \begin{figure}[t]
 \centering
 \includegraphics[width=0.9\columnwidth]{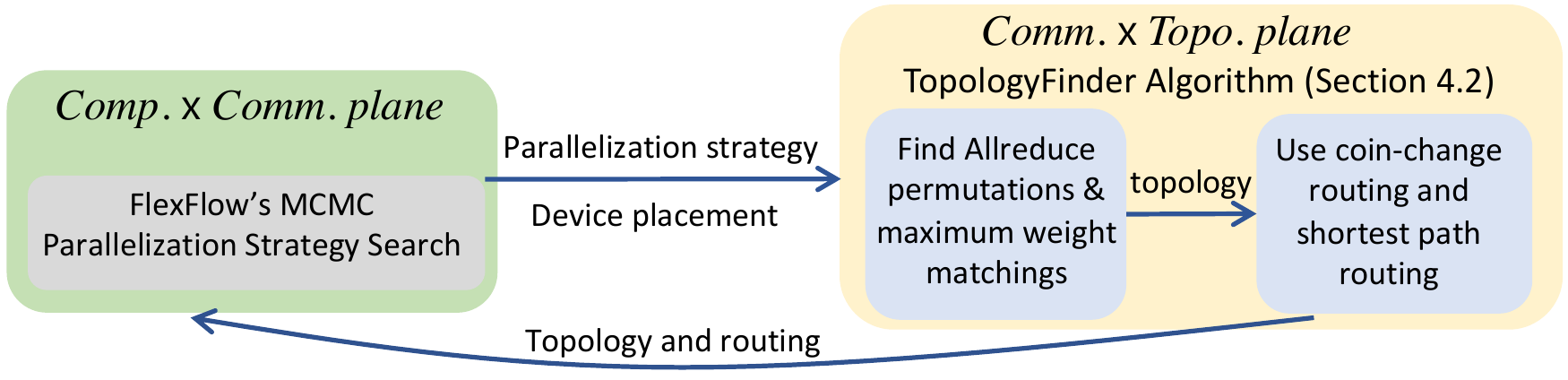}
 \caption{\name searches for the best parallelization strategy, jointly with routing, and topology.}
 \label{fig:joint_optimization}
 \vspace{\captionvspace}
 \end{figure}

 \subsection{\algobf Algorithm}
 
 \para{\algobf steps.} Algorithm \ref{alg:topo_finder} presents the pseudocode of our \algo procedure. The algorithm takes the following inputs: $n$ dedicated servers for the training job, each with degree $d$, as well a list of \reduce and \MP transfers ($T_{\reduce}$ and $T_{\MP}$) based on  the parallelization strategy and device placement obtained from the $Comp. \times Comm.$ plane. The algorithm then finds the best topology ($G$) and routing rules ($R$) and returns them  to the $Comp. \times Comm.$ plane for the next round of alternating optimization. Our algorithm consists of the following four steps.

 \para{Step 1: Distribute the degree.} This step distributes the degree $d$ between \reduce and \MP sub-topologies proportionally, based on their share of total traffic. We specifically start with \reduce transfers and allocate at least one degree to the \reduce sub-topology  to ensure the network remains connected (line~\ref{algo:d_allreduce}). The remaining degrees, if any, are allocated to the \MP sub-topology (line~\ref{algo:d_MP}). 
 
 \begin{algorithm}[t]
 \scriptsize
 \begin{algorithmic}[1]
     \Procedure{\algo}{$n$, $d$, $T_{\reduce}$, $T_{\MP}$}
         \IOComment{\textbf{Input} $n$: Number of dedicated training servers for the job.}
         \IOComment{\textbf{Input} $d$: Degree of each server.}
         \IOComment{\textbf{Input} $T_{\reduce}$: \reduce transfers.}
         \IOComment{\textbf{Input} $T_{\MP}$: \MP transfers.} 
         \IOComment{\textbf{Output} $G$: Topology to give back to the $Comp. \times Comm.$ plane.}
         \IOComment{\textbf{Output} $R$: Routing rules to give back to the $Comp. \times Comm.$ plane.}
         \textcolor{cyan}{\OldStatex \ \ \ \ \ \ \(\triangleright\) \textit{Distribute degree $d$ between \reduce and \MP sub-topologies}}
         \State $d_{A}$ \psass max(1, $\lceil d \times \frac{sum(T_{reduce})}{sum(T_{reduce})+sum(T_{\MP})} \rceil$) \label{algo:d_allreduce}
         \State $d_{\MP} = d - d_{A}$ \label{algo:d_MP}
         \textcolor{cyan}{\OldStatex \ \ \ \ \ \ \(\triangleright\) \textit{Construct the \reduce sub-topology $G_{\reduce}$}}
          \State $G_{\reduce}$ \psass $\{\}$ \label{algo:g_allreduce}
         \For {each \reduce group $k$ with set of transfers $T_k$} \label{algo:g_all_reduce_for}
                 \textcolor{cyan}{\OldStatex \hspace{0.3cm}\ \ \ \ \ \ \ \ \(\triangleright\) \textit{Assign degree $d_k$ to group $k$ according to its total traffic}}
             \State $d_k$ \psass $\lceil d_{A} \times \frac{sum(T_{k})}{sum(T_{reduce})} \rceil$ \label{algo:g_all_reduce_degree}
             \State $d_{A}$ \psass $d_{A} - d_k$
                 \textcolor{cyan}{\OldStatex \hspace{0.3cm}\ \ \ \ \ \ \ \ \(\triangleright\) \textit{Find all the permutations between servers in group $k$}}
             \State P$_k$ =  \texttt{\fancypermsnospace($n$, $k$)} 
             \textcolor{gray}{\hspace{0.05cm} \(\triangleright\) \textit{(Details in \S\ref{sec:mutability_reducetopo})}}
             \label{algo:find_permutations}
                 \textcolor{cyan}{\OldStatex \hspace{0.3cm}\ \ \ \ \ \ \ \ \(\triangleright\) \textit{Select $d_k$ permutations from $P_k$}}
             \State $G_{\reduce} = G_{\reduce} \cup$ \texttt{SelectPermutations($n$, $d_k$, $P_k$)}  	        \textcolor{gray}{\hspace{0.05cm} \(\triangleright\) \textit{(\S\ref{sec:mutability_reducetopo})}}\label{algo:select_top_permutations}
             \If {$d_\reduce \eqeq 0$}
             \State break 
             \EndIf
     \EndFor
         \textcolor{cyan}{\OldStatex \ \ \ \ \ \ \(\triangleright\) \textit{Construct the \MP sub-topology $G_{\MP}$}}  
         \State $G_{\MP}$ \psass $\{\}$ \label{algo:g_MP}
         \For {$i: i < d_{\MP}$}  \label{algo:g_MP_loop}
                 \textcolor{cyan}{\OldStatex \hspace{0.3cm}\ \ \ \ \ \ \ \ \(\triangleright\) \textit{Find a maximum weight matching according to $T_{\MP}$}}
             \State $g$ \psass \texttt{BlossomMaximumWeightMatching($T_{\MP}$)} \label{algo:maximum_matching}
             \State $G_{\MP} = G_{\MP} \cup g$ \label{algo:union_g_matching}
                 \textcolor{cyan}{\OldStatex \hspace{0.3cm}\ \ \ \ \ \ \ \ \(\triangleright\) \textit{Reduce the amount of demand for each link $l$ in graph $g$}}
             \For {$l \in g$} \label{algo:link_loop}
                 \State $T_{\MP}[l]$ \psass $T_{\MP}[l] /$ 2  \label{algo:diminishing_return}
             \EndFor
         \EndFor
         \textcolor{cyan}{\OldStatex \ \ \ \ \ \ \(\triangleright\) \textit{Combine the \reduce and \MP topologies}}        
         \State $G$ \psass $G_{\reduce} \cup G_{\MP}$ \label{algo:g_mp_reduce_union}
         \textcolor{cyan}{\OldStatex \ \ \ \ \ \ \(\triangleright\) \textit{Compute routes on $G_{\reduce}$ using the coin change algorithm~\cite{coin_change}}}        
         \State $R$ \psass \texttt{CoinChangeMod($n$, $G_{\reduce}$)} \textcolor{gray}{\hspace{0.05cm} \(\triangleright\) \textit{(Appendix~\S\ref{app:algo_details})}} \label{algo:coin_change}\label{algo:g_reduce_end}
         \textcolor{cyan}{\OldStatex \ \ \ \ \ \ \(\triangleright\) \textit{Compute routes on $G_{\MP}$ with shortest path}}        
         \State $R$ \addeq \texttt{ShortestPath($G$, $T_{\MP}$)} \label{algo:shortest_path}
         \State \Return $G, R$
     \EndProcedure
 \end{algorithmic}
 \caption{\algo pseudocode \label{alg:topo_finder}}
 \end{algorithm}

 \para{Step 2: Construct the \reduce sub-topology.} To find the \reduce sub-topology, the algorithm iterates over every \reduce group $k$ and allocates degree $d_k$ to each group proportionally based on the amount of traffic (line~\ref{algo:g_all_reduce_degree}). Note that in hybrid data and model parallelism strategies, the \reduce step can be performed across a subset of servers when a DNN layer is replicated across a few servers instead of all servers. To efficiency serve both \reduce and \MP transfers, \name constructs the \reduce sub-topology such that the diameter of the cluster is minimized. Section~\ref{sec:mutability_reducetopo} explains two algorithms, called \texttt{\fancyperms} (line~\ref{algo:find_permutations}) and \texttt{SelectPermutations} (line~\ref{algo:select_top_permutations}) to construct the \reduce sub-topology.

 \para{Step 3: Construct the \MP sub-topology.} 
 We use the Blossom maximum weight matching algorithm~\cite{edmonds_1965} to find the best connectivity between servers with \MP transfers (line~\ref{algo:maximum_matching}). We repeat the matching algorithm until we run out of degrees. To increase the likelihood of more diverse connectivity across server pairs, we divide the magnitude of $T_{\MP}$ for pairs that already have an edge between them by two (line~\ref{algo:diminishing_return}). In general, division by two can be replaced by a more sophisticated function with a diminishing return.

 \para{Step 4: Final topology and routing.} Finally, we combine the \MP and \reduce sub-topologies to obtain the final topology (line~\ref{algo:g_mp_reduce_union}). We then use a modified version of the coin-change algorithm~\cite{coin_change} (details in Appendix \ref{app:algo_details}) to route \reduce on the \reduce sub-topology (line~\ref{algo:coin_change}). Further, we use k-shortest path routing for the \MP transfers to take advantage of the final combined topology  (line~\ref{algo:shortest_path}).

 \subsection{Traffic Mutability and \reduce Topology}
 \label{sec:mutability_reducetopo}

 \para{Finding an efficient \reduce sub-topology.} At first blush, finding an \reduce sub-topology for a given DNN seems straightforward: we just need to translate the parallelization strategy and device placement from the $Comp. \times Comm.$ plane into a traffic matrix and map the traffic matrix into circuit schedules. Several papers have used this technique for datacenter networks~\cite{solstice, helios, mordia, reactor, cthrough, firefly, projector, sip-ml, megaswitch, quartz}. However, the conventional wisdom in prior work is to allocate as many direct parallel links as possible to \textit{elephant flows} and leave \textit{mice flows} to take multiple hops across the network. In principle, this approach works well for datacenters but it leads to sub-optimal topologies for distributed DNN training. While the size of \reduce transfers is larger than \MP transfers, \MP transfers have a higher communication degree than \reduce (Appendix~\ref{sec:dnn_details}). Hence, the conventional approach creates parallel direct links for carrying \reduce traffic and forces \MP flows to have a large hop-count, thereby degrading the training performance. 
 
 \para{\namebf's novel technique.} In \name, we seek to meet two goals simultaneously: ($i$) allocate ample bandwidth for \reduce transfers, as the bulk of the traffic belongs to them, but ($ii$) ensure a small hop-count for \MP transfers. We meet both goals by demonstrating a unique property of DNN training traffic -- the \reduce traffic is \textit{mutable}.
 
 \begin{figure}[t]
 \centering
 \includegraphics[width=\columnwidth]{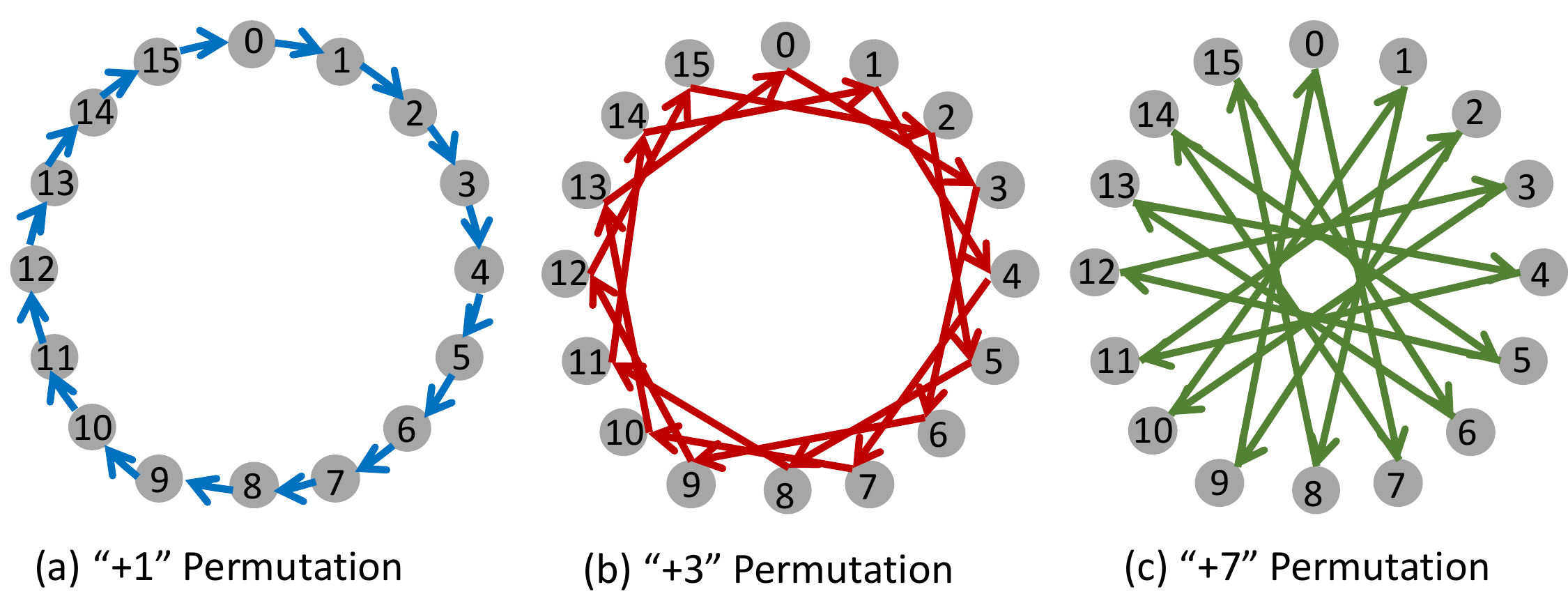}
  \vspace{-0.5cm}
 \caption{Ring-\reduce permutations.} 
 \label{fig:ring_permutations}
 \end{figure}
 \begin{figure}[t]
 \includegraphics[width=\columnwidth]{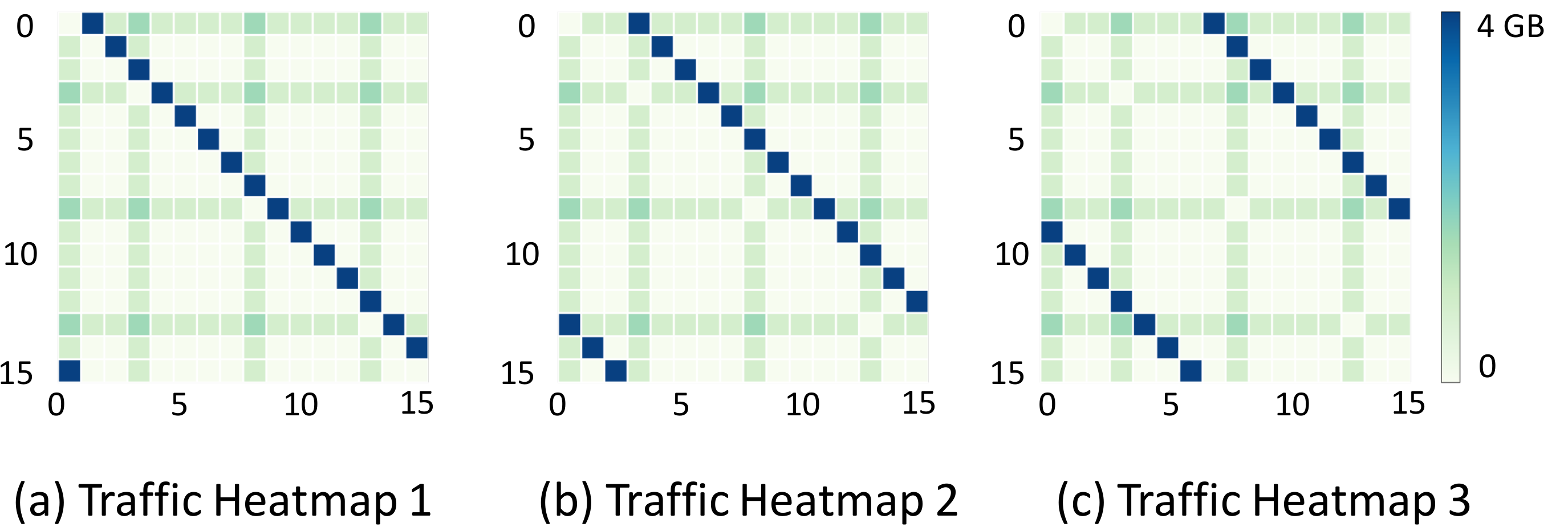}
 \vspace{-0.5cm}
 \caption{DLRM traffic heatmaps.}
 \label{fig:dlrm_traffic_pattern}
 \end{figure}
 
 \para{Mutability of \reduce transfers.} 
 We define traffic mutability as the ability to change the traffic pattern without altering parallelization strategy or device placement while maintaining correctness, and demonstrate that \reduce transfers are \textit{mutable} whereas \MP transfers are not. Intuitively, this is because \MP traffic is composed of network flows among nodes that contain \textit{different} parts of a DNN model thus creating immutable data dependencies, while \reduce transfers contain network flows among nodes that handle the \textit{same} part of the model, providing flexibility in the order of nodes participating in \reduce.  For instance, consider a DLRM distributed across 16 servers each with three NICs. The common \reduce pattern is shown as a  ring with consecutive node IDs, as shown in Figure~\ref{fig:ring_permutations}a. However, \textit{this is not the only possible permutation}. Each heatmap in \ref{fig:dlrm_traffic_pattern}a, \ref{fig:dlrm_traffic_pattern}b, and \ref{fig:dlrm_traffic_pattern}c corresponds to a different ring-\reduce permutation, shown in Figures~\ref{fig:ring_permutations}a, \ref{fig:ring_permutations}b, and \ref{fig:ring_permutations}c.  We denote each of these permutations as $+p$, where server $S_i$ connects to server $S_{(i+p)\%n}$, and $n$ is the number of servers, as shown in Figure~\ref{fig:ring_permutations}. Although all three heatmaps correspond to the \textit{exact same parallelization strategy and device placement}, the blue diagonal lines appear at different parts of the heatmaps, depending on the order of servers in the ring-\reduce permutation. But \MP transfers (green vertical and horizontal lines in each heatmap) are dictated by the parallelization strategy and device placement; thus, they remain at the same spot in all three heatmaps. 
 
 \para{Leveraging \reduce traffic mutability.} Traffic mutability implies that if a group of servers is connected in a certain order, simply permuting the label of the servers gives another ordering that will finish the \reduce operation with the same latency while potentially providing a smaller hop-count for \MP transfers. Instead of selecting just one \reduce order, \name finds multiple permutations for each \reduce group and overlaps their corresponding sub-topologies. In doing so, \name efficiently serves the \reduce traffic while decreasing the hop-count for \MP transfers.

 \begin{algorithm}[t]
 \footnotesize
 \begin{algorithmic}[1]
     \Procedure{\fancyperms}{$n$, $k$}
         \IOComment{\textbf{Input} $n$: Total number of nodes}
         \IOComment{\textbf{Input} $k$: \reduce group size}
         \IOComment{\textbf{Output} $P_k$: Set of permutations for \reduce group of size $k$}
         \textcolor{cyan}{\OldStatex \ \ \ \ \ \ \(\triangleright\) \textit{Initially, $P_k$ is empty}}
         \State $P_k$ \psass $\{\}$
         \textcolor{cyan}{\OldStatex \ \ \ \ \ \ \ \ \ \(\triangleright\) \textit{This loop runs $\phi(p)$ times, where}}
         \textcolor{cyan}{\OldStatex \ \ \ \ \ \ \ \ \ \(\triangleright\) \textit{$\phi$ is the Euler Totient function, $\phi(p)=|\{k<p:gcd(k,p)=1\}|$}}
         \textcolor{cyan}{\OldStatex \ \ \ \ \ \ \ \ \ \(\triangleright\) \textit{one can also restrict $p$ to be prime only}}
         \For {$p \leq k,\ gcd(p, k) \eqeq 1$} \label{alg:totient_for}
             \State $one\_perm$ \psass []
             \For {$i \text{ in } 0 \text{ to } n/k$}
                 \State $one\_perm$ \addeq $[i + j \times p\ \mathbf{for}\ j\ \text{ in } 0 \text{ to } k]$
             \EndFor
             \State $P_k$ \addeq $one\_perm$
         \EndFor
         \State \Return $P_k$
     \EndProcedure
 \end{algorithmic}
 \caption{\texttt{\fancyperms} pseudocode \label{alg:fancyperms}}
 \end{algorithm}

 \begin{algorithm}[t]
 \footnotesize
 \begin{algorithmic}[1]
     \Procedure{SelectPermutations}{$n$, $d_k$, $P_k$}
         \IOComment{\textbf{Input} $n$: Total number of nodes}
         \IOComment{\textbf{Input} $d_k$: Degree allocated for group this \reduce group of size $k$}
         \IOComment{\textbf{Input} $P_k$: Candidate permutations for this \reduce group of size $k$}
         \IOComment{\textbf{Output} $G_k$: Parameter synchronization topology, given as a set of permutations}
         \textcolor{cyan}{\OldStatex \ \ \ \ \ \ \(\triangleright\) \textit{Initially, $G_k$ is empty}}
         \State $G_k$ \psass $\{\}$
         \textcolor{cyan}{\OldStatex \ \ \ \ \ \ \(\triangleright\) \textit{$q$ now is the minimum candidate in $P_k$}}
         \State $q$ \psass $P_k[0]$
         \textcolor{cyan}{\OldStatex \ \ \ \ \ \ \(\triangleright\) \textit{\texttt{GetConn}($q$) gives the connection  described }}
         \textcolor{cyan}{\OldStatex \ \ \ \ \ \ \(\triangleright\) \textit{by the permutation corresponding to $q$}}
         \State $G_k$ \psass $G_k\cup$\texttt{GetConn}($q$)
         \textcolor{cyan}{\OldStatex \ \ \ \ \ \ \(\triangleright\) \textit{Ratio of the geometric sequence to fit}}
         \State $x$ \psass $\sqrt[d_k]{n}$ \label{alg:selperm_geo}
         \For {$i \in \{1, \cdots, d_k-1\}$} 
             \textcolor{cyan}{\OldStatex \ \ \ \ \ \ \ \ \ \ \ \ \ \ \ \ \(\triangleright\) \textit{Select the next candidate based on the ratio}}
             \State $q'$ \psass $x\times q$ 
             \textcolor{cyan}{\OldStatex \ \ \ \ \ \ \ \ \ \ \ \ \ \ \ \ \(\triangleright\) \textit{Project $q'$ onto $P_k\setminus G_k$ with minimal distance (L1-norm) }} \label{alg:selperm_l1}
             \State $q'$ \psass $\text{argmin}_{r\in P_k \setminus G_k}|r-q'|$ 
             \textcolor{cyan}{\OldStatex \ \ \ \ \ \ \ \ \ \ \ \ \ \ \ \ \(\triangleright\) \textit{Add this candidate to final topology}}
             \State $G_k$ \psass $G_k\cup$\texttt{GetConn}($q'$)
             \State $q$ \psass $q'$
         \EndFor
         \State \Return $G_k$
     \EndProcedure
 \end{algorithmic}
 \caption{\texttt{SelectPermutations} pseudocode \label{alg:SelectPermutations}}
 \end{algorithm}
 
 \para{\fancyperms algorithm.} While overlapping multiple permutations sounds straightforward, navigating through the set of all possible \reduce orderings is non-trivial, since the number of possible permutations is $O(n!)$. To reduce the search space of all possible permutations, we design the \fancyperms algorithm to find the ring generation rule for all \textit{regular rings}, based on group theory. Regular rings are those where the distance between indices of consecutive servers is equal; i.e., server $S_i$ is connected to server $S_{(i+p)\%n}$ for some $p$. Algorithm~\ref{alg:fancyperms} presents the pseudocode of \texttt{\fancyperms}. Inspired by Euler's totient function~\cite{totientfn}, we find all integer numbers $p < n$, where $p$ is co-prime with $n$ (i.e. $gcd(p ,n) = 1$, line~\ref{alg:totient_for}, Algorithm~\ref{alg:fancyperms}), represent a valid ring-\reduce permutation (\S\ref{app:algo_details}). For instance, for $n=12$ servers, the ring generation rule for $p=1, 5, 7, 11$ will lead into four distinct ring-\reduce permutations between the servers. Note that each $p$ describes a unique regular permutation.  To handle large scale clusters, we restrict $p$ to be a prime number, thereby reducing the search space size to only $O(\frac{n}{\ln~n})$, as per the Prime Number Theorem~\cite{num_theory}.

 \begin{figure}[t]
 \centering
 \subfloat[\name topology]{
 \includegraphics[width=0.4\columnwidth]{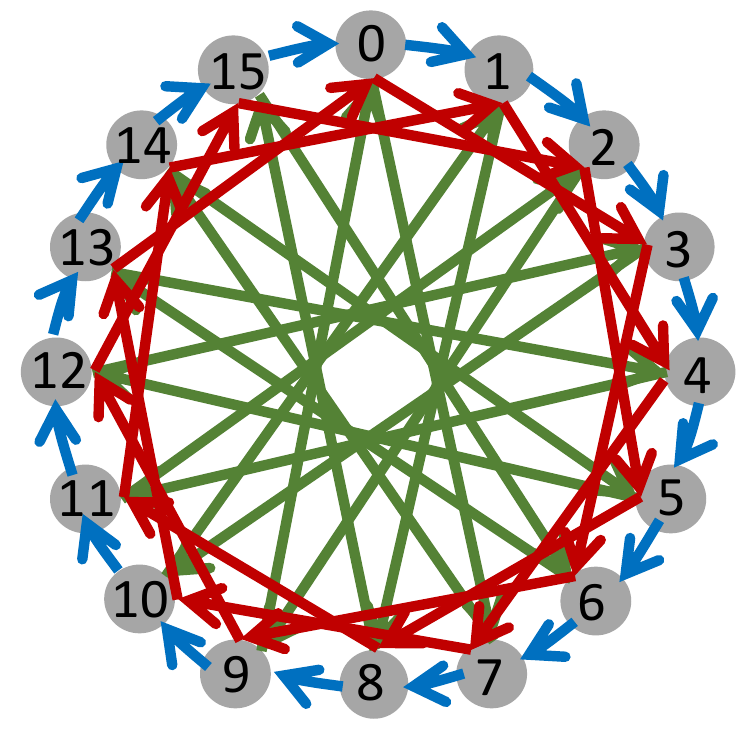}
 \label{fig:topoopt_topology}
 }
 \subfloat[\name traffic pattern]{
  \includegraphics[width=0.45\columnwidth]{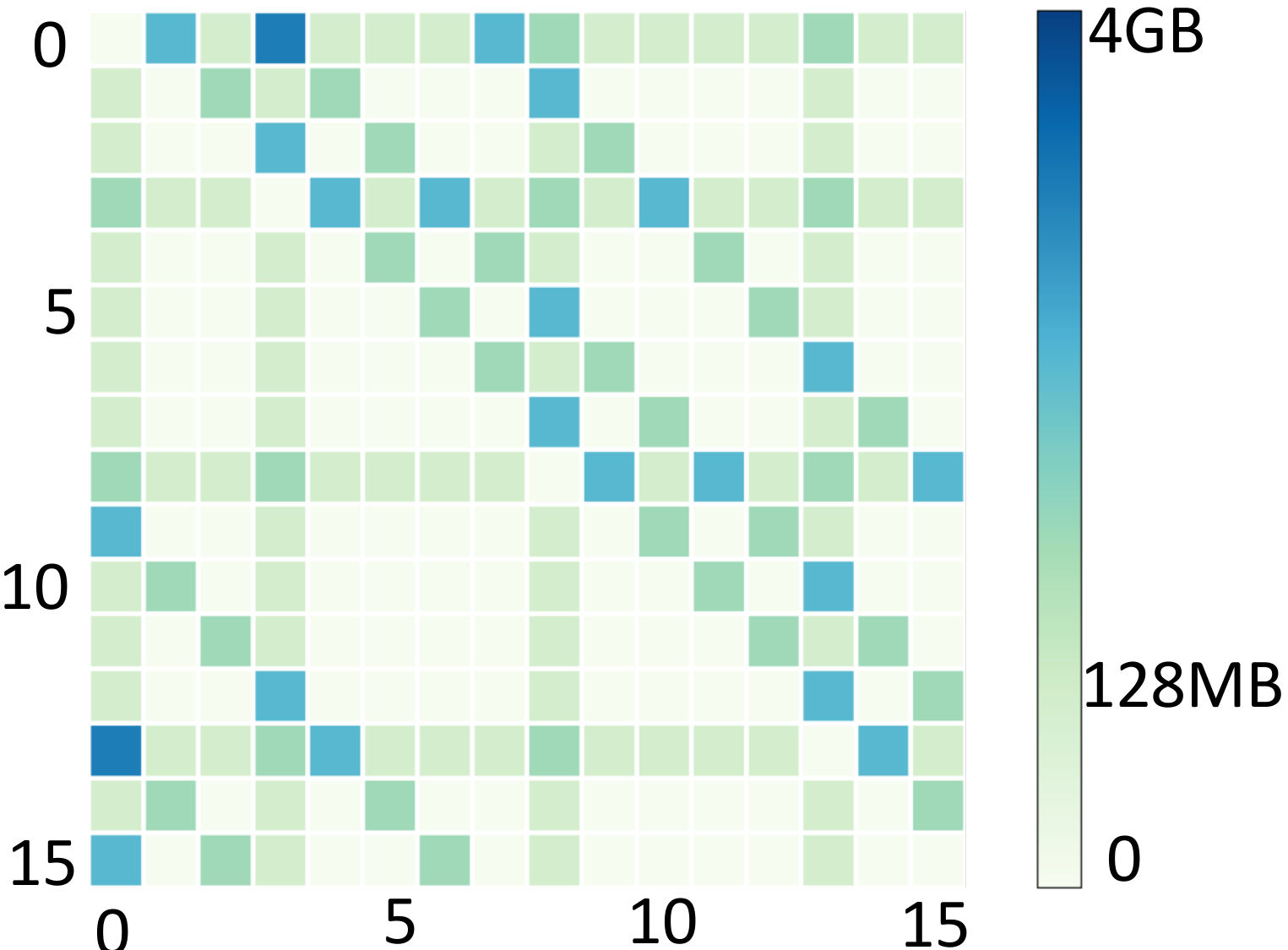}
  \label{fig:topoopt_heatmap}
  }
 \caption{\name's topology and traffic matrix.}
 \vspace{\captionvspace}
 \label{fig:fancy_allreduce_example}
 \end{figure}
 
 \para{SelectPermutations algorithm.} For a group of $n$ servers participating in \reduce, \texttt{\fancyperms} finds a set of regular permutations $P_k=\cup_{p:gcd(p ,n) = 1}\{p\}$ across them. \algo then selects $d_k$ permutations using a module called \texttt{SelectPermutations}, where $d_k$ is the number of degree allocated to the group of nodes running \reduce (line~\ref{algo:g_all_reduce_degree}, Algorithm~\ref{alg:topo_finder}). Algorithm~\ref{alg:SelectPermutations} presents the pseudocode of \texttt{SelectPermutations}. Several metrics can be used in the \texttt{SelectPermutations} module. In our implementation, \texttt{SelectPermutations} aims to reduce the cluster diameter to benefit the \MP transfers. To this end, \texttt{SelectPermutations} chooses $\{p_1,\cdots,p_{d_k}\}\subset P_k$, such that $\{p_1,\cdots,p_{d_k}\}$ is close (in L1-norm) to a geometric sequence (line~\ref{alg:selperm_l1}, Algorithm~\ref{alg:SelectPermutations}). 
 
 \newtheorem{theorem}{Theorem}
 \begin{theorem}
 \label{thm:cluster_bnd}
 \name's \texttt{SelectPermutations} algorithm bounds the diameter of the \reduce sub-topology to $O(d_A \cdot n^{1/d_A})$, under certain assumptions.
 \end{theorem}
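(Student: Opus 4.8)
The plan is to recognize the \reduce sub-topology as a circulant (Cayley) graph on $\mathbb{Z}_n$ and then reduce the diameter computation to a number-representation problem. Since each selected permutation with step $p_j$ makes server $S_i$ adjacent to $S_{(i\pm p_j)\bmod n}$ for all $i$, and $\gcd(p_j,n)=1$ guarantees this step visits all $n$ nodes in a single cycle, the union of the $d_A$ selected rings is exactly the circulant graph with symmetric generator set $\{p_1,\dots,p_{d_A}\}$. First I would establish the standard fact that in such a Cayley graph a path from node $0$ to node $m$ corresponds to a sequence of $\pm p_j$ steps whose signed sum is $m$ modulo $n$, so the distance equals $\min\{\sum_j |a_j| : \sum_j a_j p_j \equiv m \pmod n\}$; by vertex-transitivity the diameter is the maximum of this word length over all $m\in\mathbb{Z}_n$.

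Next I would exploit the fact that \texttt{SelectPermutations} chooses the $p_j$ to lie close to the geometric sequence $1, x, x^2, \dots, x^{d_A-1}$ with ratio $x=n^{1/d_A}$ (line~\ref{alg:selperm_geo}). Under the assumption that the generators are exactly this geometric sequence (so that $x$ is an integer and the co-primality requirement is simultaneously met), the generator set is precisely a mixed-radix, base-$x$ numeral basis. Every target residue $m\in\{0,\dots,n-1\}$ then admits an expansion $m=\sum_{j=0}^{d_A-1} c_j x^j$, and by passing to balanced digits $c_j\in\{-\lfloor x/2\rfloor,\dots,\lfloor x/2\rfloor\}$ I can realize $m$ with $\sum_j |c_j| \le d_A\lfloor x/2\rfloor$ steps. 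Since $x=n^{1/d_A}$, this bounds every pairwise distance, and hence the diameter, by $O(d_A\cdot n^{1/d_A})$, which is exactly the claimed bound. This gives the clean result in the idealized case.

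The main obstacle is the gap between the idealized geometric generators and the generators \texttt{SelectPermutations} actually returns: the algorithm projects onto the nearest available co-prime (or prime) residue, so $p_j = x^j + \varepsilon_j$ for some rounding error $\varepsilon_j$, and in general $x=n^{1/d_A}$ need not be an integer. I expect this to be precisely where the ``certain assumptions'' of the statement enter. I would discharge it in two stages: (i) use the density of residues co-prime to $n$—via the Euler totient count, or the Prime Number Theorem when $p$ is restricted to primes as invoked in \S\ref{sec:mutability_reducetopo}—to guarantee a valid generator exists within a constant multiplicative factor of each $x^j$; and (ii) show that propagating these bounded perturbations through the base-$x$ expansion inflates the word length by at most a constant factor, since the accumulated offset $\sum_j c_j\varepsilon_j$ is itself a residue of controlled magnitude that can be corrected using $O(d_A\cdot n^{1/d_A})$ additional low-order steps. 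Keeping these corrective steps from cascading is the delicate part, so the cleanest route is to fix the assumption that $x$ is an integer with exact geometric generators available, prove the clean $O(d_A\cdot n^{1/d_A})$ bound there, and then remark that the rounding perturbs only the hidden constant.
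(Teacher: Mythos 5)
Your proposal matches the paper's own argument in Appendix~\ref{app:hopcount_bound} essentially step for step: the paper likewise reduces the diameter bound to a coin-change problem with the geometric ``coins'' $S=\{x^0,\dots,x^{d-1}\}$, $x=\sqrt[d]{n}$, assumes $x$ integral so the generators form a base-$x$ numeral basis, bounds the greedy digit count by $\kappa(n-1)=dx$, and---just as you do---relegates the non-integer ratio and the projection onto co-prime (or prime) residues to an informal remark that ``the intuition still holds,'' which is exactly the ``certain assumptions'' in the theorem statement. One small caveat: your balanced-digit refinement $c_j\in\{-\lfloor x/2\rfloor,\dots,\lfloor x/2\rfloor\}$ presupposes a symmetric generator set $\pm p_j$, whereas the paper's rings are directed (its \texttt{CoinChangeMod} routing in Algorithm~\ref{alg:coin_change} only takes forward $+p$ steps), but this affects only the constant, since the plain nonnegative base-$x$ expansion already yields the claimed $O(d_A\cdot n^{1/d_A})$ bound.
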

 
 We list the assumptions and proof of Theorem~\ref{thm:cluster_bnd} in Appendix~\ref{app:hopcount_bound}. Intuitively, each server in the topology is able to reach a set of servers with a geometrically distributed hop-count distance (line~\ref{alg:selperm_geo}, Algorithm~\ref{alg:SelectPermutations}), creating a topology similar to Chord~\cite{stoica2001chord}. 
 
 \para{Example.} Consider the DLRM model in Figure~\ref{fig:dlrm_traffic_pattern}. Instead of choosing one of the \reduce permutations in Figure~\ref{fig:ring_permutations}, \name combines the three ring-\reduce permutations to load-balance the \reduce transfers while providing a short hop-count for \MP transfers. Figure~\ref{fig:fancy_allreduce_example} illustrates \name's topology and traffic matrix and shows a more balanced traffic matrix than Figure~\ref{fig:dlrm_traffic_pattern}.
 

\section{Large Scale Simulations}
\label{sec:sims}

This section evaluates the performance of a large-scale \name interconnect. 
First, we explain our simulation software and methodology (\S\ref{sec:sim_methodology}). Then, we provide a cost analysis of \name to inform our simulations when comparing different interconnects (\S\ref{sec:cost_model}). Next, we demonstrate the  performance of \name when a cluster is dedicated to a single distributed DNN training job (\S\ref{sec:sims_single_job}). We perform a sensitivity analysis to quantify the impact of all-to-all traffic (\S\ref{eval:a2a}) and host-based forwarding (\S\ref{eval:indirect_fw}). We extend this setting to a case where a training cluster is shared among multiple jobs (\S\ref{sec:sims_multi_job}). Finally, we evaluate the impact of reconfiguration latency (\S\ref{sec:reconfig_time}) on \name's performance.

\subsection{Methodology \& Setup}
\label{sec:sim_methodology}

We implement two simulators to evaluate \name.

\para{\fnet simulator.} We augment FlexFlow's simulator~\cite{flexflow_github} to be network-aware and call it \fnet. Given a DNN model and a batch size, FlexFlow's simulator explores  different parallelization strategies and device placements to minimize iteration training time. The output of this simulator is a \textit{task graph} describing the set of computation and communication tasks on each GPU and their dependencies. The current implementation of FlexFlow ignores the network topology by assuming servers are connected in a \textit{full-mesh} interconnect.  Our \fnet simulator extends the FlexFlow simulator and enables it to consider multiple networks, including \fattrees, \name, and expander networks. Moreover, \fnet implements our alternating optimization framework (\S\ref{sec:topoopt_algorithms}) to find an optimized network topology and routing rules for \name.

\para{\franksim simulator.} FlexFlow's simulator only provides course-grind estimation of training iteration time, because it does not simulate individual packets traversing through a network. Extending \fnet to become a packet-level simulator is computationally infeasible, because FlexFlow generally requires thousands of MCMC iterations to converge. To faithfully simulate per-packet behavior of network switches, buffers, and multiple jobs sharing the same fabric, we build a second event-based packet simulator, called \franksim, on top of htsim~\cite{htsim}. \franksim takes the output of \fnet (i.e., the optimized parallelization strategy, device placement of each operator, network topology, and routing rules) and simulates several training iterations. The link propagation delay is set to 1~$\mu$s throughout this section.

\para{Simulated network architectures.} We simulate distributed training clusters with $n$ servers equipped with four NVIDIA A100 GPUs~\cite{a100}. We vary $n$ in different experiments and simulate the following network architectures: 

\begin{itemize}[align=left, leftmargin=0pt, labelindent=0pt, listparindent=\parindent, labelwidth=0pt, itemindent=!]

\item \textbf{\namebf.} A \name interconnect where each server is equipped with $d$ NICs, each with bandwidth $B$ connected via a flat layer of optical devices.  At the beginning of each job, a shard of the network is selected, and the topology of the shard is reconfigured based on the output of our alternating optimization framework (\S\ref{sec:topoopt_algorithms}) and remains unchanged throughout the entire training job. Both OCS and patch panels are suitable for this architecture.

\item \textbf{OCS-reconfig.} To study the impact of changing the network topology within training iterations, we simulate a reconfigurable \name interconnect. We rely on commercially available Optical Circuit Switches (OCSs) for this design and assume the reconfiguration latency is 10~ms. Given that FlexFlow's parallelization strategy search is not aware of dynamically reconfigurable networks, following  prior work~\cite{sip-ml}, we measure the traffic demand every 50~ms and adjust the circuits based on a heuristic algorithm to satisfy the current traffic demand as much as possible. We also enable host-based forwarding such that the communication is not blocked even when a direct link is not available (Appendix~\ref{app:reconfig_heuristic}).

\item \textbf{\SBE.} An ideal electrical switch that scales to any number of servers, where each server is connected to the switch via a link with $d \times B$ bandwidth. For any pair of $d$ and $B$, no network can communicate faster than this ideal case. In practice, the \SBE can be approximated with a full-bisection bandwidth \fattree where the bandwidth of each link is $d \times B$.

\item \textbf{\LBE.} To compare the performance of \name to that of a similar-cost \fattree architecture, we simulate a full bisection bandwidth \fattree where each server has one NIC and the bandwidth of each link is $d \times B'$, where $B'$ is lower than $B$ and is selected such that \fattree's cost is similar to \name (\S\ref{sec:cost_model}).

\item \textbf{\OBE.} This is a 2:1 oversubscribed \fattree interconnect, where the bandwidth of each link is $d \times B$ but half of the links in the ToR uplink layer are omitted.

\item \textbf{SiP-ML~\cite{sip-ml}.} SiP-ML is a futuristic DNN training cluster with Tbps of bandwidth per GPU. While having a Tbps network is beneficial, our goal is to compare the algorithmic contributions of \name and SiP-ML. Hence, to make a fair comparison, we allocate $d$ wavelengths, each with bandwidth $B$, to each SiP-ML GPU and follow its SiP-Ring algorithm to find a topology with a reconfiguration latency of 25~$\mu$s. Appendix~\ref{app:sipml} elaborates on our modifications to SiP-ML.

\item \textbf{Expander~\cite{expander, jellyfish}.} Finally, we simulate a fabric where each server has $d$ NICs with bandwidth $B$ interconnected via an Expander topology.

\end{itemize}

\begin{figure}[t] 
\centering
\includegraphics[width=\columnwidth]{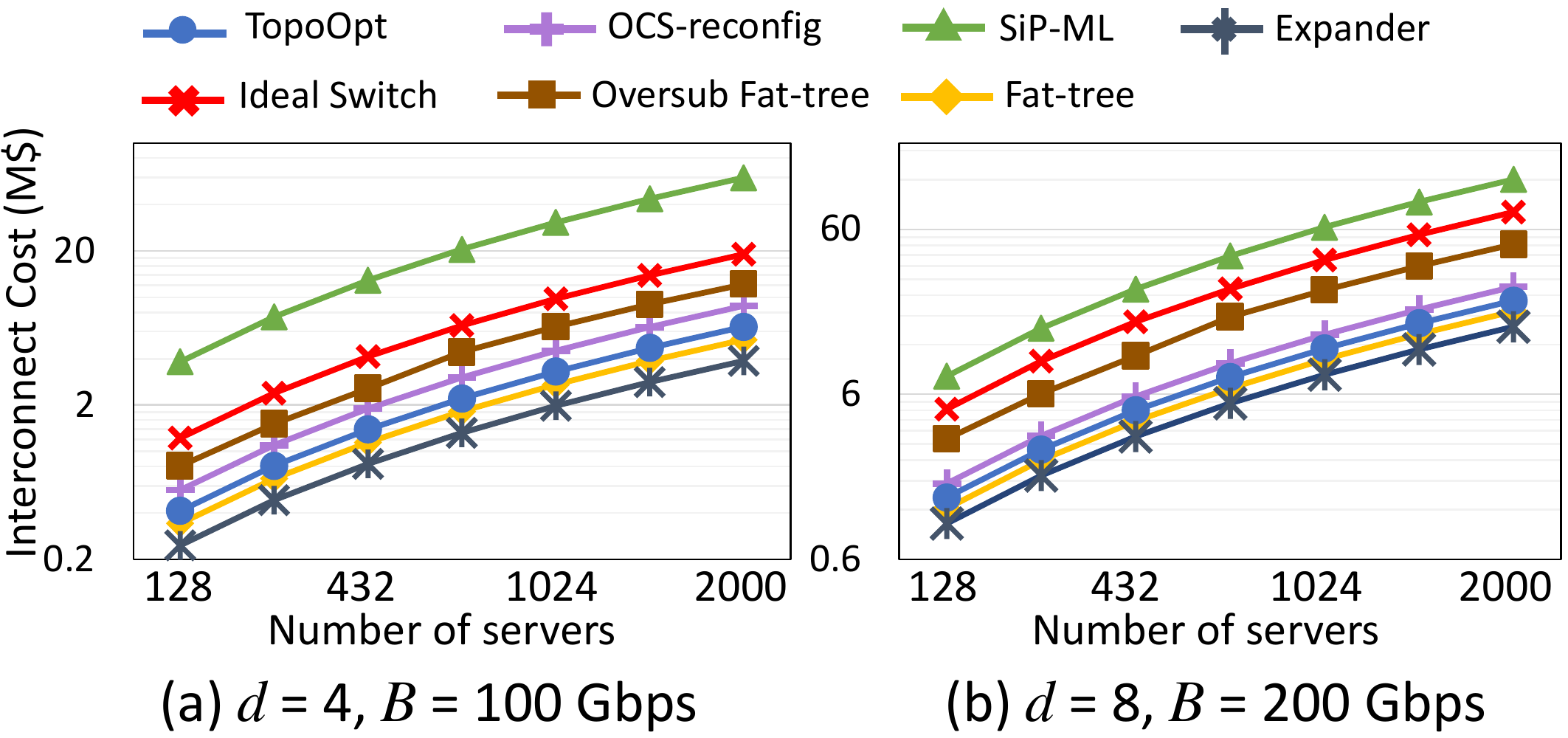}
\caption{Interconnect cost comparison.} 
\label{fig:cost}
\end{figure}

\begin{figure*}[t]
\centering
\includegraphics[width=\textwidth]{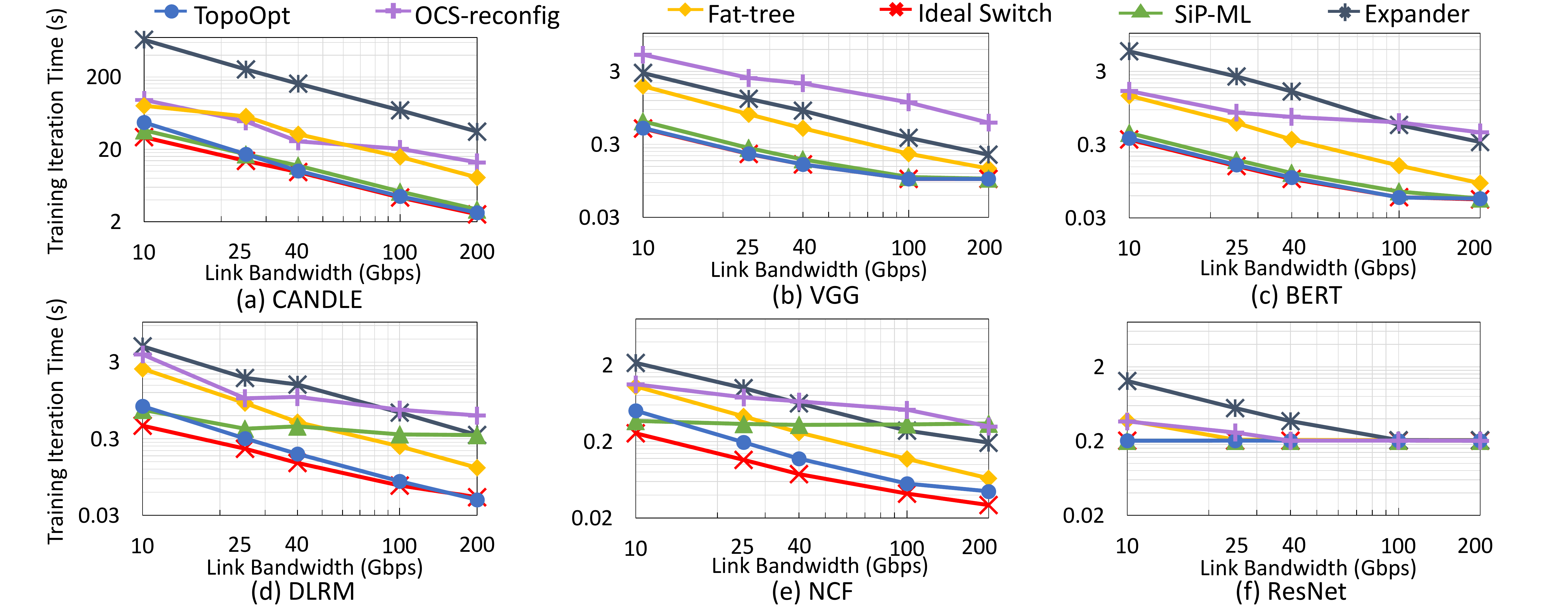}
\vspace{\captionvspace}
\caption{Dedicated cluster of 128 servers ($d$ = 4).}
\vspace{\captionvspace}
\label{fig:128_d4_single_job}
\end{figure*}

\para{DNN Workloads.} We simulate six real-world DNN models: DLRM~\cite{dlrm}, CANDLE~\cite{candle_uno}, BERT~\cite{bert}, NCF~\cite{ncf}, ResNet50~\cite{resnet} , and VGG~\cite{vgg}. List~\ref{list:model_parameters} (Appendix~\ref{sec:dnn_details}) provides details about model configurations and batch sizes used in this paper. 

\para{Parallelization strategy.} We use \fnet's   topology-aware parallelization strategy search for \SBE, \LBE, \OBE, SiP-ML, and Expander networks. For \name, we use \fnet's alternating optimization framework to find the best parallelization strategy jointly with topology, where the final parallelization strategy is either hybrid or pure data-parallel. We use ring-\reduce and distributed parameter server~\cite{osdi_parameter_server} as default \reduce communication collectives between servers and within servers, respectively. Each data point averages 5--10 simulation runs. 

\subsection{Cost Analysis}
\label{sec:cost_model}

We begin our evaluations by comparing the cost of various network architectures. Details about the cost of each component used in each architecture are given in Appendix~\ref{sec:cost_table}.

Figure~\ref{fig:cost} compares the interconnect cost across various network architectures as the number of servers is increased. We estimate the cost of \SBE with a full-bisection \fattree of the same bandwidth. We make the following observations. First, using OCSs for \name is more expensive (1.33$\times$, on average) than patch panels. Note that OCSs can be used in both \name and OCS-reconfig interconnects. Second, the cost of \name overlaps with that of the \LBE. This is intentional, because having a cost-equivalent architecture enables us to compare the performance of \name to a cluster at the same price point. Third, the ratio of \SBE's cost to \name's cost is 3.2$\times$ on average. Finally, the most and least expensive fabrics are SiP-ML and Expander, respectively, and as this section shows, they both perform worse than \name for certain workloads.

We acknowledge that estimating the cost of networking hardware is challenging because prices are subject to significant discounts with bulk orders. Assuming all components in this analysis are subject to similar bulk order discounts, the relative comparison across architectures remains valid. As a point of comparison, we compute the cost of a cluster with 4,394 servers ($k=26$ \fattree) by following the discounted cost trends in Sirius~\cite{sirius} and with 50\% discounts for patch panels. For a cluster at this scale, the cost of full-bisection bandwidth \fattree (which approximates our \SBE baseline) relative to the cost of \name changes from  3.0$\times$ to 3.6$\times$, indicating our estimates are reasonable. Moreover, a \name cluster incurs lower energy cost than \fattrees, as optical switches are passive.

\subsection{Performance Comparison on Dedicated Clusters}
\label{sec:sims_single_job}

This section compares the training iteration time of \name with that of other network architectures when the cluster is dedicated to serving one DNN training job.

Figure~\ref{fig:128_d4_single_job}a compares the training iteration times of various architectures for CANDLE distributed on a dedicated cluster of 128 servers with a server degree of four ($d = 4$). We vary the link bandwidth ($B$) on the x-axis. The figure shows that \SBE, \name, and SiP-ML architectures achieve similar performance because the best parallelization strategy for CANDLE at this scale is mostly data parallel, with few \MP transfers. The OCS-reconfig architecture performs poorly because it uses the instantaneous demand as the baseline to estimate the future traffic to schedule circuits. This estimation becomes inaccurate during training, in particular when the current \reduce traffic is about to finish but the next round of \reduce has not started. The Expander architecture has the worst performance, as its topology is not optimized for DNN workloads. Averaging across all link bandwidths, compared to \LBE interconnect, \name improves the training iteration time of CANDLE by 2.8$\times$; i.e., the ratio of CANDLE's iteration time on \LBE to \name is 2.8. \name's servers have more raw bandwidth, resulting in faster completion time.\footnote{It is possible to improve the performance of the Expander fabric by augmenting Blink's approach~\cite{wang2020blink} to a cluster-level solution.} 

Figures~\ref{fig:128_d4_single_job}b and \ref{fig:128_d4_single_job}c show the training iteration times for VGG and BERT. The trends are similar to CANDLE, as these models have similar degree requirements. Compared to \LBE, on average, \name improves the iteration time of VGG and BERT by 2.8$\times$ and 3$\times$, respectively.
 
The cases of DLRM and NCF are more interesting, as they have more \MP transfers than the other DNNs. As shown in Figures~\ref{fig:128_d4_single_job}d and~\ref{fig:128_d4_single_job}e, \name's performance  starts to deviate from \SBE, especially for NCF, because it uses host-based forwarding for the many-to-many \MP transfers (\S\ref{eval:a2a} and \S\ref{eval:indirect_fw}). For DLRM (and NCF), \name is 2.8$\times$ (and 2.1$\times$) faster than \LBE, while \SBE further improves the training iteration time by 1.3$\times$ (and 1.7$\times$) compared to \name.
SiP-ML performs poorly, and even when we increase the link bandwidth, its training iteration time stays flat. This happens because \MP transfers in DLRM and NCF require several circuit reconfigurations to meet the traffic demand.

Finally, Figure~\ref{fig:128_d4_single_job}f shows most architectures achieve similar training iteration times for ResNet50 since it is not a  communication-heavy model. The Expander architecture performs poorly when the link bandwidth is lower than 100~Gbps, as the topology does not match the \reduce traffic pattern.   

We repeat this simulation with $d=8$ and observe a similar performance trend (Appendix~\ref{sec:impact_of_degree}).

\subsection{Impact of All-to-all Traffic}
\label{eval:a2a}

This section evaluates the impact of all-to-all traffic patterns on \name's performance. In particular, \name's host-based forwarding approach incurs bandwidth tax~\cite{opera} exacerbated by \textit{all-to-all} and \textit{many-to-many} communication patterns. This tax is defined as the ratio of the traffic volume in the network (including forwarded traffic) to the volume of logical communication demand. Hence, the bandwidth tax for a full bisection bandwidth \fattree topology is always one, because hosts do not act as relays for each other.

Consider a DNN model with $R$~bytes of \reduce traffic and $A$~bytes of all-to-all traffic, distributed on a full bisection bandwidth topology with total network bandwidth $n\cdot B_{F}$ (i.e., number of servers multiplied by the bisection bandwidth). The training iteration time of this DNN is: $T_{F}=\frac{R}{n\cdot B_{F}} + \frac{A}{n\cdot B_{F}} + C_{bs}$, where $C_{bs}$ is the computation time of the model with batch size $bs$.\footnote{For clarify of presentation, this formulation assumes no overlap between communication and computation stages and no competing traffic.}

Now suppose the same DNN is distributed on a \name topology with total network bandwidth $n\cdot B_T$. In this case, assuming the entire \reduce traffic is carried on \fancyperms with direct links, the training iteration time becomes  $T_{T}=\frac{R}{n\cdot B_{T}} + \frac{\alpha \cdot A}{n\cdot B_{T}} + C_{bs}$ (Eq. 1), where $\alpha$ represents the slow-down factor that all-to-all transfers create in the network, due to host-based forwarding. The value of $\alpha$ depends on the amount of bandwidth tax and routing strategy (\S\ref{eval:indirect_fw}).

Increasing the amount of all-to-all traffic ($A$) increases the iteration time for both $T_{F}$ and $T_{T}$. But when $n\cdot B_{F}$ and $n\cdot B_T$ are equal, \name's performance degrades faster because of the $\alpha$ factor in the numerator. To quantify this behavior concretely, we distribute a DLRM training task with 128 embedding tables on a cluster with 128 servers. We choose large embedding tables and distribute each table on each server, creating worst-case all-to-all traffic.

Figure~\ref{fig:sim_a2a_impact} compares the training iteration times of \name, \SBE, and \LBE as the batch size is increased. The top x-axis lists the ratio of all-to-all to \reduce traffic for each batch size value given on the bottom x-axis. As shown in Figure~\ref{fig:sim_a2a_impact}a, when the batch size is 128 and $d = 4$, \name's performance matches that of \SBE, while \LBE is a factor of 2.7 slower. This result agrees with the performance gains in Figure~\ref{fig:128_d4_single_job}d, as the batch sizes are the same. 

Increasing the batch size increases $A$, and this, in turn, increases the training iteration times in all three architectures. As predicted by Eq.~(1), \name's iteration time increases faster. Specifically, when the batch size is 2048 and all-to-all traffic is $80\%$ of \reduce traffic, \name performs poorly, and the iteration time is a factor of 1.1 higher than that of the \LBE architecture. Increasing the server degree $d$ mitigates the problem, as shown in Figure~\ref{fig:sim_a2a_impact}b. Note that increasing the batch size does not always result in faster training time~\cite{sip-ml, shallue2018measuring, strong_scaling}. Moreover, publicly available data suggest 2048 is the largest batch size for training DLRM~\cite{mudigere2021highperformance}. 
The number of columns in the embedding tables and the number of servers are smaller in their workload: (92, 16) vs. (128, 128), respectively. Hence, the DLRM workload we evaluate contains more all-to-all traffic than the state-of-the-art model used in industry.

\begin{figure}[t]
\centering
\includegraphics[width=0.9\columnwidth]{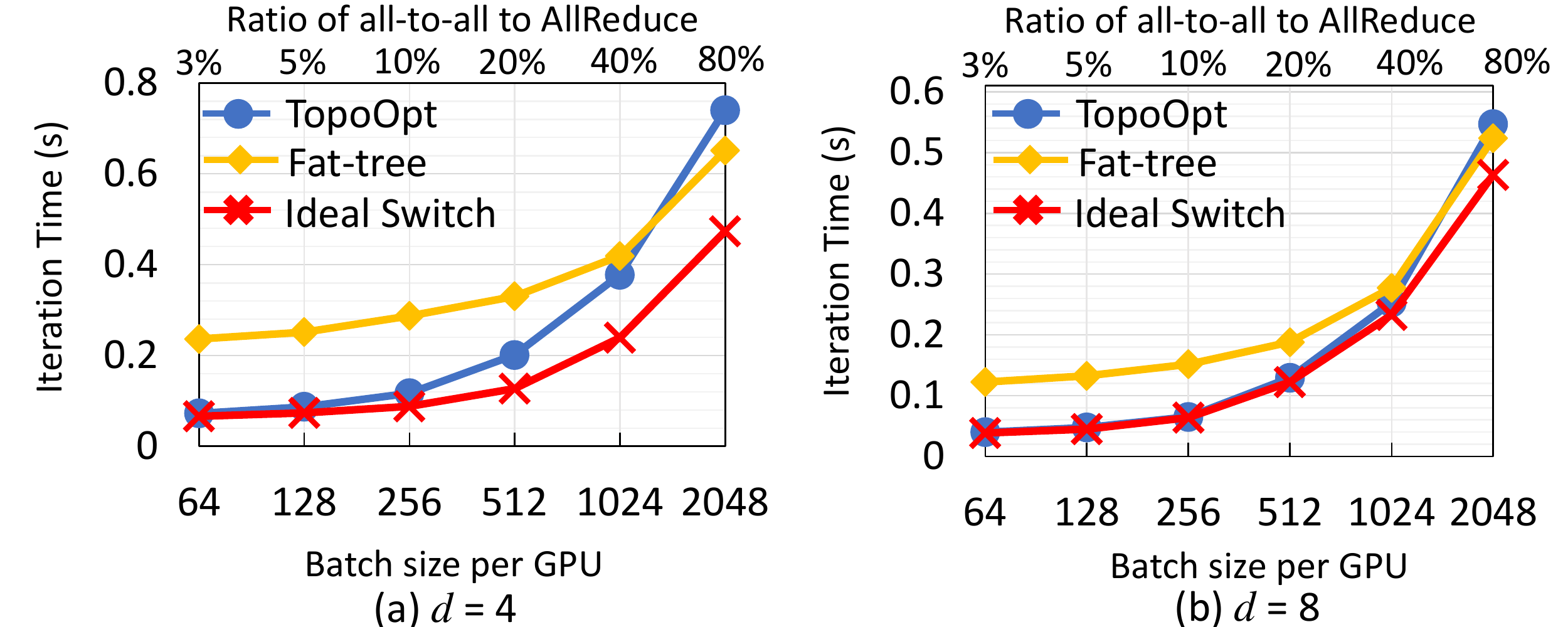}
\vspace{\captionvspace}
\caption{Impact of all-to-all traffic on a dedicated cluster of 128 servers ($B=$~100 Gbps).}
\vspace{\captionvspace}
\label{fig:sim_a2a_impact}
\end{figure}

\subsection{Impact of Host-based Forwarding}
\label{eval:indirect_fw}

Two factors impact the performance of host-based forwarding in \name: bandwidth tax and routing strategy.  

\para{Bandwidth tax.} Figure~\ref{fig:sim_a2a_bwtax} shows the amount of bandwidth tax experienced by the DLRM job in the previous section. Each bar represents a different batch size. 
At batch size 64 with $d=4$, \name experiences a bandwidth tax of 1.11, indicating that host-based forwarding creates $11\%$ extra traffic in the network. Increasing the degree to $d=8$ further improves this number to 1.05. In the worst-case scenario with batch size 2048, \name pays a bandwidth tax of 3.03 when $d=4$, causing it to perform worse than \LBE, as shown in Figure~\ref{fig:sim_a2a_impact}a. Determining the value of tolerable bandwidth tax is challenging for a \name cluster, as it depends on the compute time and the amount of compute-communication overlap, and this varies for different DNN models.

\para{Impact of path length.} Intuitively, the amount of bandwidth tax grows with the path length~\cite{opera}. Figure~\ref{fig:sim_hop_cdf} shows the CDF of path length across all server pairs. When $d=4$, the average path length is 5.7, resulting in at least 5.7$\times$ overhead of host-based forwarding relative to \SBE for all-to-all traffic. Based on Eq.~(1), and since the total network bandwidth in \name is higher than \LBE ($n\cdot B_{T} > n\cdot B_{F}$), the overhead of host-based forwarding becomes at least 1.4$\times$ for the \LBE architecture. Increasing the server degree to 8 reduces the average path length to 3, thereby reducing  the overhead bound. Appendix~\ref{sec:impact_of_degree} evaluates the impact of increasing node degree on performance for other models.

\para{Routing strategy.} Building a topology with a small path length is necessary but not sufficient to reduce the impact of host-based forwarding. To handle forwarded traffic with minimum performance impact, the routing strategy also needs to be efficient. The best routing strategy \textit{minimizes the maximum link utilization} for a given network topology, similar to WAN traffic engineering solutions~\cite{smore}. However, finding the optimal routing strategy requires solving a set of linear equations with a centralized controller~\cite{msoftswan, b4}. To quantify the load imbalance in \name, Figure~\ref{fig:load_balancing} illustrates the CDF of the amount of traffic carried by each physical link for an all-to-all traffic matrix. When the batch size is 128 (Figure~\ref{fig:load_balancing}a), the link with the least traffic carries $39\%$ and $59\%$ less traffic than the link with the most traffic, for $d=4$ and $d=8$, respectively. This imbalance in load suggests further opportunities to improve the performance of \name. Achieving optimal routing makes $\alpha$  (Eq.~(1)) equal to the average path length. Without a centralized controller, however, the link utilization becomes non-uniform, and the average path length only serves as a lower bound. We leave optimizing the routing strategy in \name to future work.

\begin{figure}[t]
\centering
\begin{minipage}{0.22\textwidth}
\includegraphics[width=\textwidth]{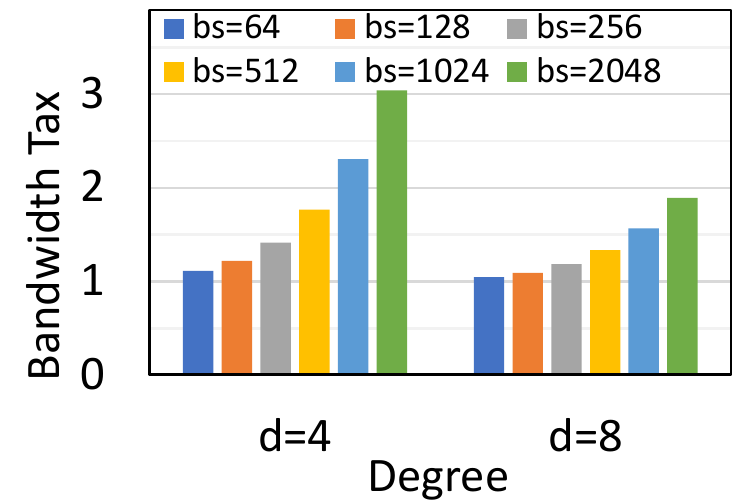}
\vspace{\captionvspace}
\caption{Bandwidth tax.}
\vspace{\captionvspace}
\label{fig:sim_a2a_bwtax}
\vspace*{-0.2cm}
\end{minipage} 
\hspace{0.02cm}
\begin{minipage}{0.22\textwidth}
\includegraphics[width=\textwidth]{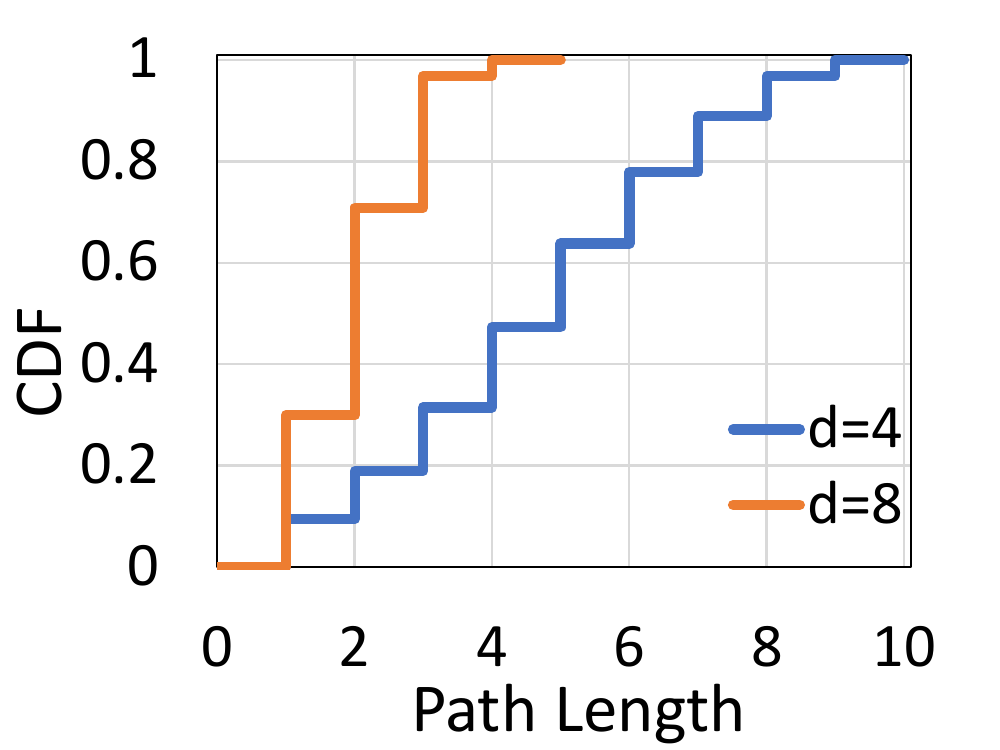}
\vspace{\captionvspace}
\caption{Path length CDF.} 
\vspace{\captionvspace}
\label{fig:sim_hop_cdf}

\end{minipage}
\end{figure}

\begin{figure}[t]
\centering
\includegraphics[width=0.9\columnwidth]{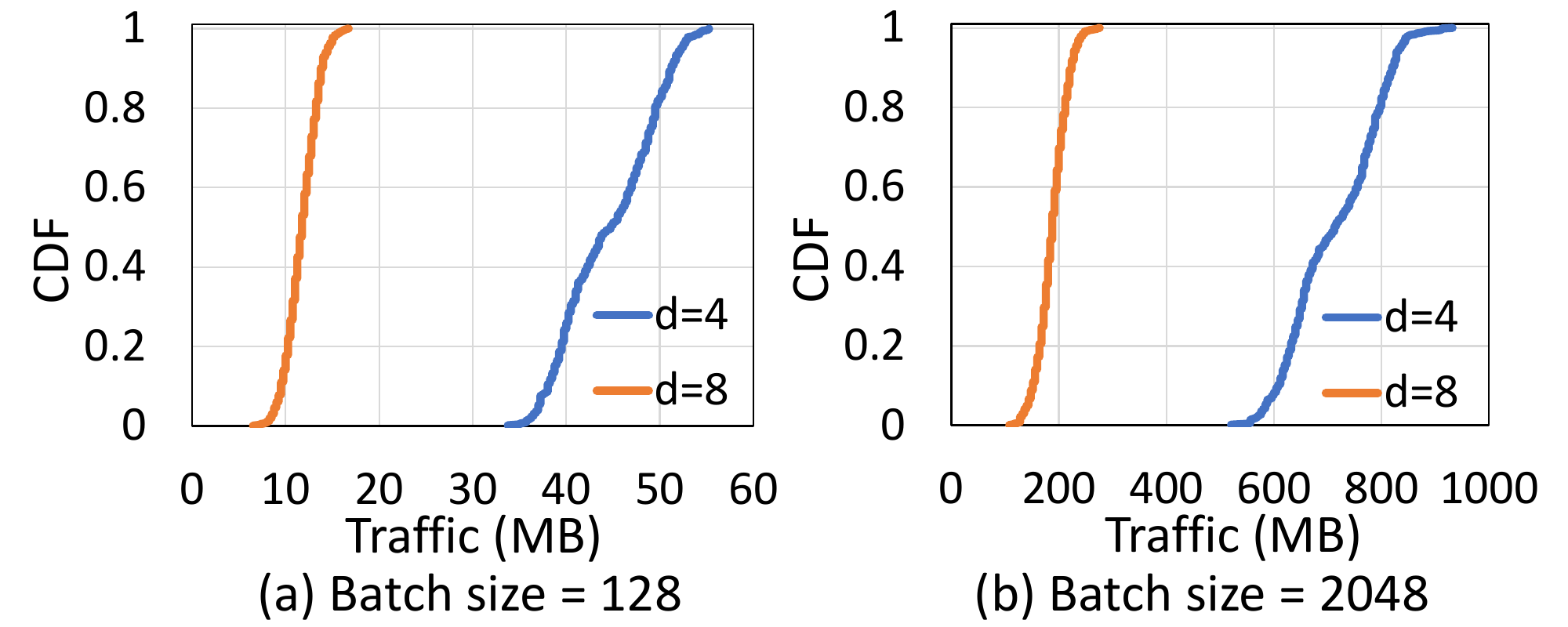}
\vspace{\captionvspace}
\caption{Traffic distribution.}
\label{fig:load_balancing}
\end{figure}

\subsection{Performance on Shared Clusters}
\label{sec:sims_multi_job}

We now compare the performance of different network architectures when the cluster is shared across multiple DNN jobs. Following prior work~\cite{themis, pollux}, we run a series of simulations where 40\% of the jobs are DLRM, 30\% are BERT, 20\% are CANDLE, and 10\% are VGG16. We change the number of active jobs to represent the load on the cluster. Assuming each job requests 16 servers (64 GPUs), we execute 5, 10, 15, 20, and 27 jobs on the cluster to represent $20\%$, $40\%$, $60\%$, $80\%$ and $100\%$ load, respectively.

Figure~\ref{fig:432_d8_multi_job} compares the average and 99\%-tile iteration time at different loads for a cluster with 432 servers where $d$ = 8 and $B$ = 100~Gbps. SiP-ML does not support multiple jobs; hence, we omit it in this experiment. We omit OCS-reconfig and Expander networks, as they both show poor performance in this setting. Instead, we add the \OBE interconnect to demonstrate the impact of congestion on \fattree topologies. Figure~\ref{fig:432_d8_multi_job}a shows that \name improves the average iteration time by 1.7$\times$ and 1.15$\times$, compared to the \LBE and \OBE architectures,  respectively. We observe a similar trend for the tail iteration completion times, depicted in Figure~\ref{fig:432_d8_multi_job}b. At the extreme case when all servers are loaded, \name's tail training iteration time is $3.4\times$ faster compared to \LBE architecture. Averaging across all load values on the x-axis, \name improves the tail training iteration time by 3$\times$ and 1.4$\times$ compared to \LBE and \OBE architectures.

\begin{figure}[t]
\centering
\includegraphics[width=0.9\columnwidth]{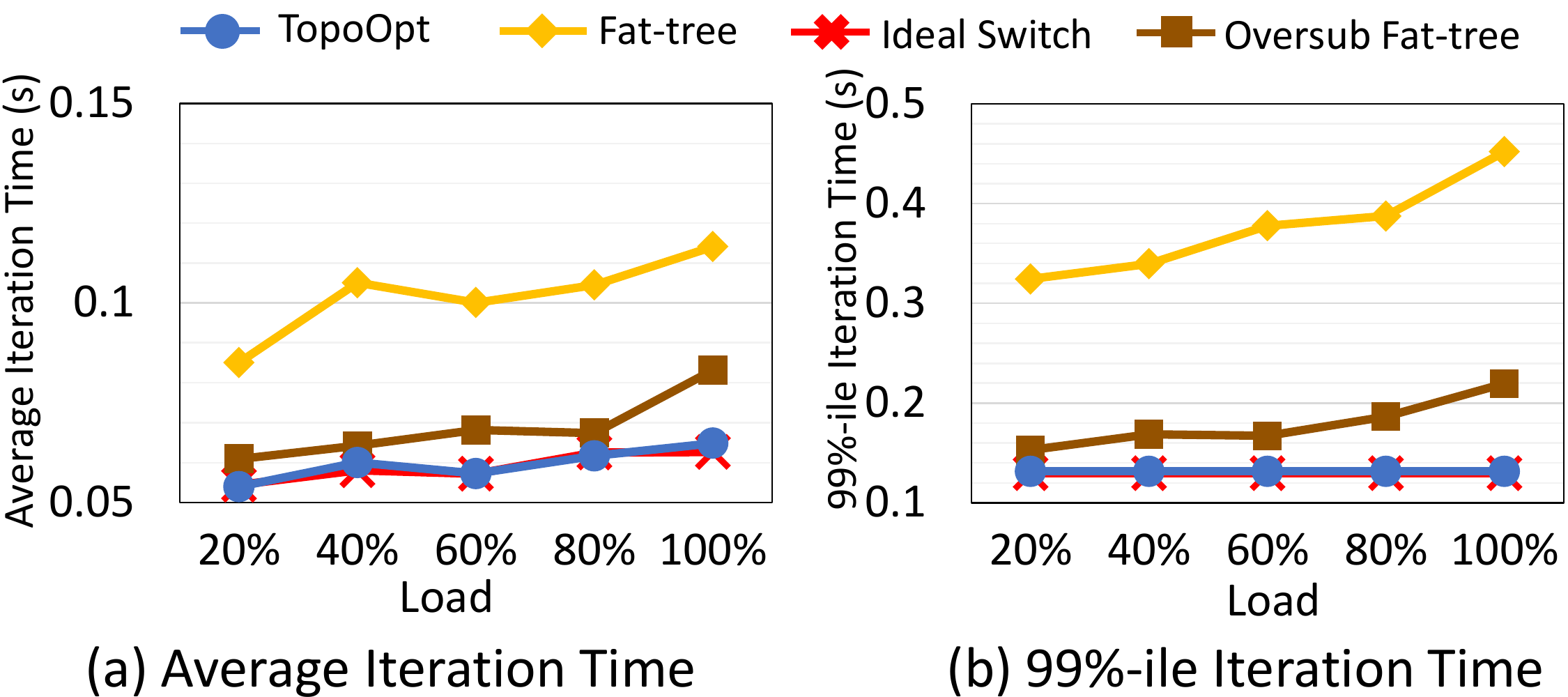}
\caption{Shared cluster of 432 servers ($d$ = 8, $B$ = 100~Gbps).}
\label{fig:432_d8_multi_job}
\end{figure}

\subsection{Impact of Reconfiguration Latency}
\label{sec:reconfig_time}

Figure~\ref{fig:reconfig_impact} shows the training iteration time of DLRM and BERT in the same setting as Figure~\ref{fig:128_d4_single_job}, while sweeping the reconfiguration latency of OCSs in OCS-reconfig from $1~\mu s$ to $10~ms$. The horizontal blue line corresponds to \name's iteration time; it remains constant as it does not reconfigure the network topology. We find host-based forwarding is challenging when the network is reconfigurable, as the circuit schedules need to account for forwarding the traffic while the topology reconfigures. Therefore, we evaluate the performance of OCS-reconfig with and without host-based forwarding. The purple line corresponds to OCS-reconfig with host-based forwarding (same as OCS-reconfig evaluated in Figure~\ref{fig:128_d4_single_job}), denoted by OCS-reconfig-FW. For the orange line, we disable host-based forwarding (similar to SiP-ML) and call it OCS-reconfig-noFW. 

We find enabling host-based forwarding when the topologies reconfigures within a training iteration is not always beneficial. For DLRM (Figure~\ref{fig:reconfig_impact}a), OCS-reconfig-FW achieves better performance than OCS-reconfig-noFW, as DLRM has all-to-all \MP transfers which benefit from host-based forwarding. However, for BERT (Figure~\ref{fig:reconfig_impact}b), enabling forwarding increases the chance of inaccurate demand estimation and imposes extra bandwidth tax, therefore increasing the iteration time of OCS-reconfig-FW by a factor of 1.4 compared to OCS-reconfig-noFW. 

Reducing the reconfiguration latency all the way to 1~$\mu$s enables OCS-reconfig-noFW to match the performance of \name. However, OCS-reconfig-FW still suffers from inaccurate demand estimations. Although fast reconfigurable switches are not yet commercially available, they are going to be essential in elastic scenarios where the cluster is shared across multiple jobs and servers join and leave different jobs unexpectedly, or when large, high-degree communication dominates the workload. We believe futuristic fast reconfigurable switches, such as Sirius~\cite{sirius}, are well-suited for this setting. Finding a parallelization algorithm that is aware of reconfigurability within training iterations is a challenging and exciting future research problem.

\begin{figure}[t]
\centering
\includegraphics[width=\columnwidth]{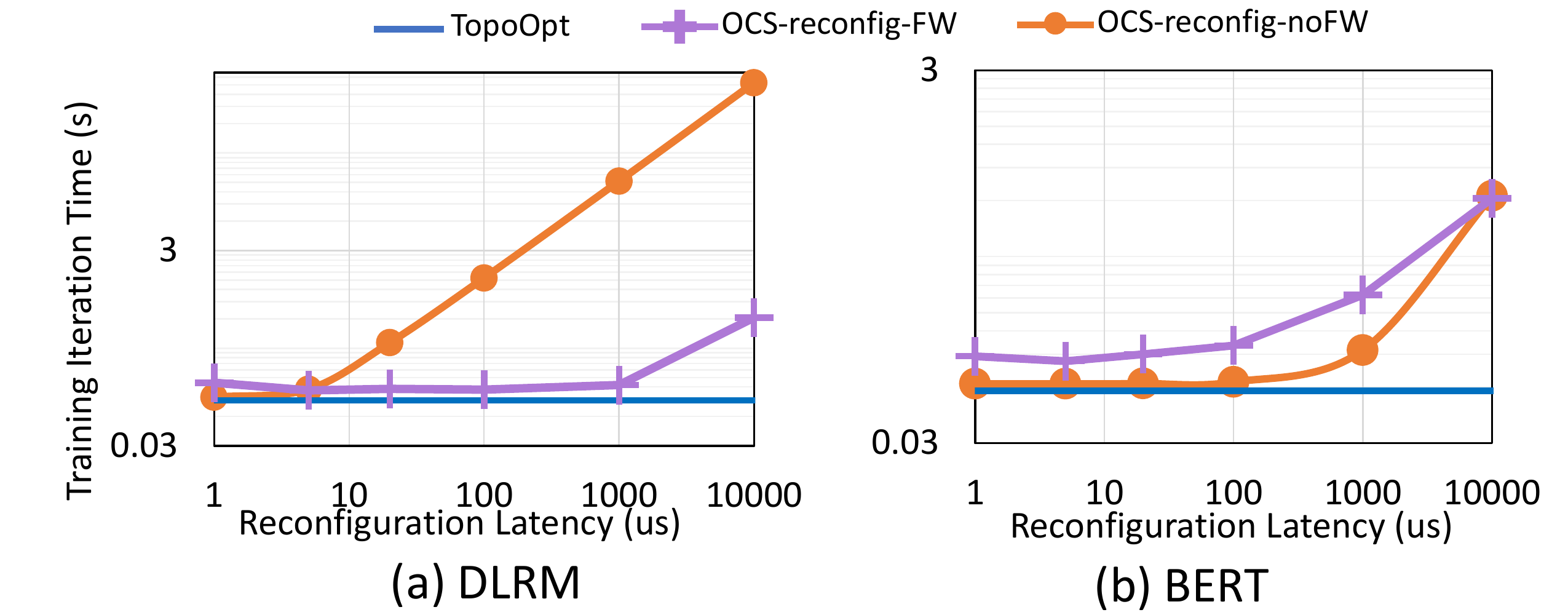}
\vspace{\captionvspace}
\caption{Impact of reconfiguration latency ($d$=8, $B$=100~Gbps).}
\vspace{\captionvspace}
\label{fig:reconfig_impact}
\end{figure}


\section{Prototype}
\label{sec:prototype}

\para{Testbed setup.} We build a prototype to demonstrate the feasibility of \name. Our prototype includes 12 ASUS ESC4000A-E10 servers and a G4 NMT patch panel~\cite{telescent}. Each server is equipped with one A100 Nvidia GPU~\cite{a100} (40~GB of HBM2 memory), one 100~Gbps HP NIC~\cite{hpe-nic}, and one 100~Gbps Mellanox ConnectX5 NIC. Our HP NICs are capable of supporting 4$\times$25~Gbps interfaces using a PSM4 transceiver with four breakout fibers~\cite{psm4_transceiver}, enabling us to build a \name system with degree $d = 4$ and $B = 25$~Gbps. We use RoCEv2 for communication, and enable DCB~\cite{dcb} and PFC on these interfaces to support a lossless fabric for RDMA. We build a completely functional \name prototype with our patch panel (Figure~\ref{fig:testbed_photo}). We compare \name's performance with two baselines: ($i$) Switch~100Gbps, where the servers are connected via 100~Gbps links to a switch, and ($ii$) Switch~25Gbps, where the servers are connected via 25~Gbps links to a switch. The Switch~100Gbps baseline corresponds to the \SBE case in our simulations.

\begin{figure*}[t]
\centering
\begin{minipage}{0.18\textwidth}
\includegraphics[width=\textwidth]{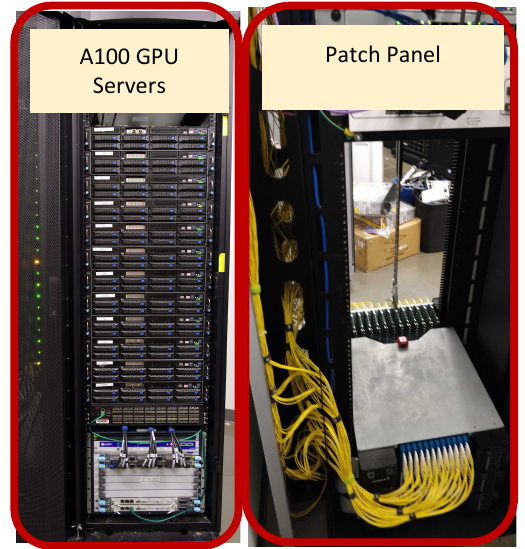}
\caption{Testbed photo.}
\label{fig:testbed_photo}
\end{minipage}
\hspace{0.2cm}
\begin{minipage}{0.25\textwidth}
\includegraphics[width=\textwidth]{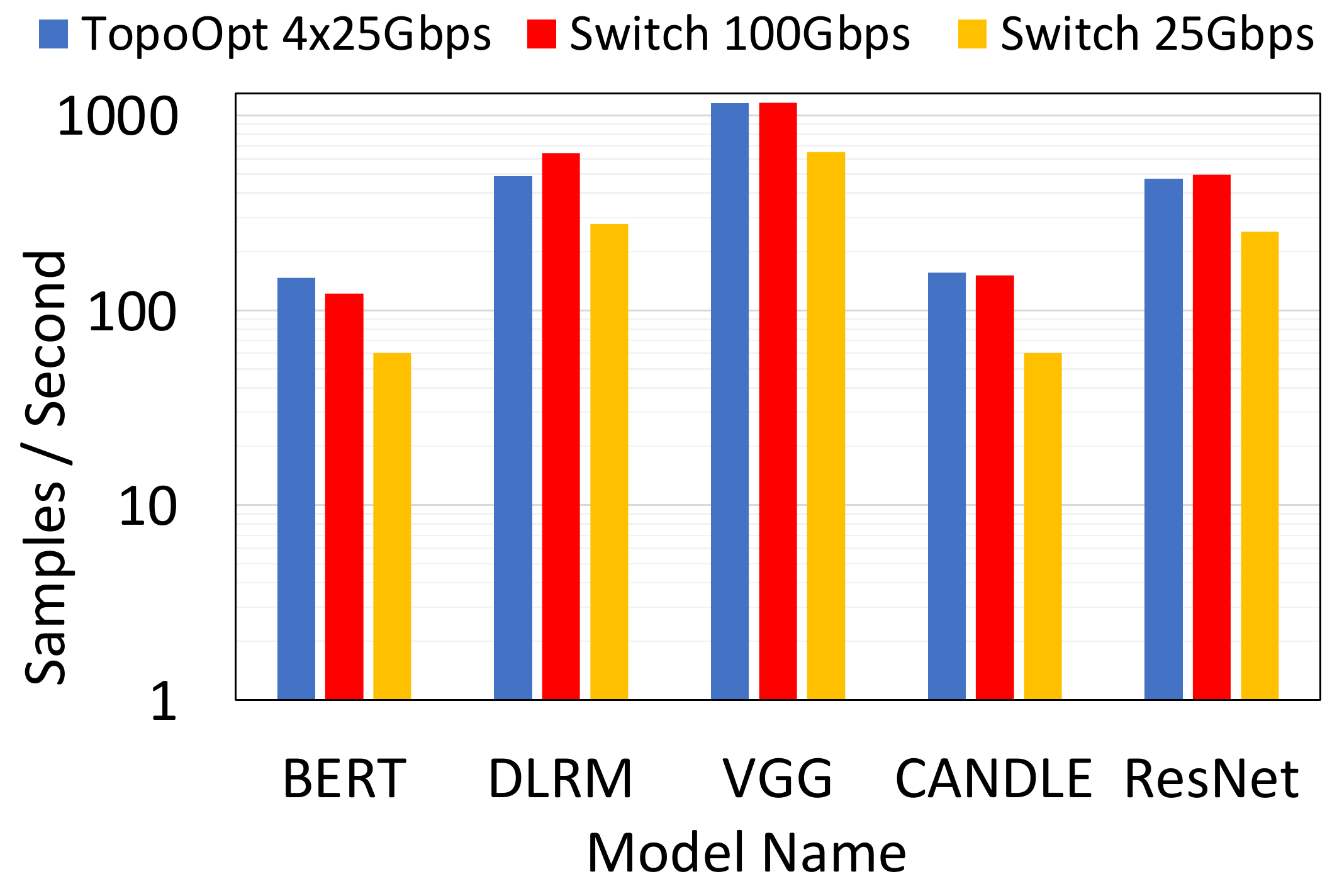}
\caption{Training throughput (samples/second).} 
\label{fig:testbed_results_tput}
\end{minipage} 
\hspace{0.2cm}
\begin{minipage}{0.25\textwidth}
\includegraphics[width=\textwidth]{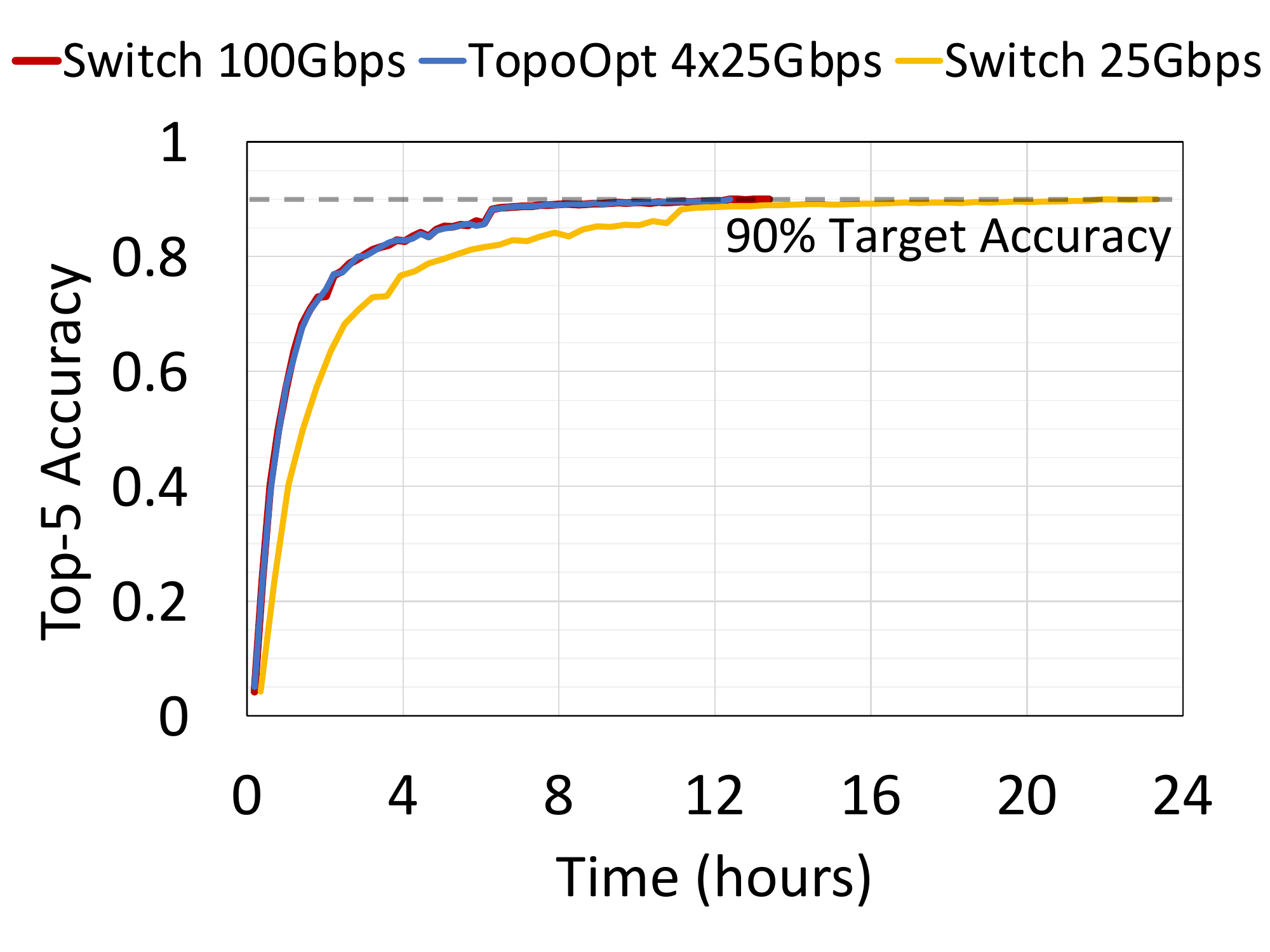}
\caption{Time-to-accuracy of VGG19 with ImageNet.}
\label{fig:time_to_acc_resnet}
\end{minipage}
\hspace{0.2cm}
\begin{minipage}{0.25\textwidth}
\includegraphics[width=\textwidth]{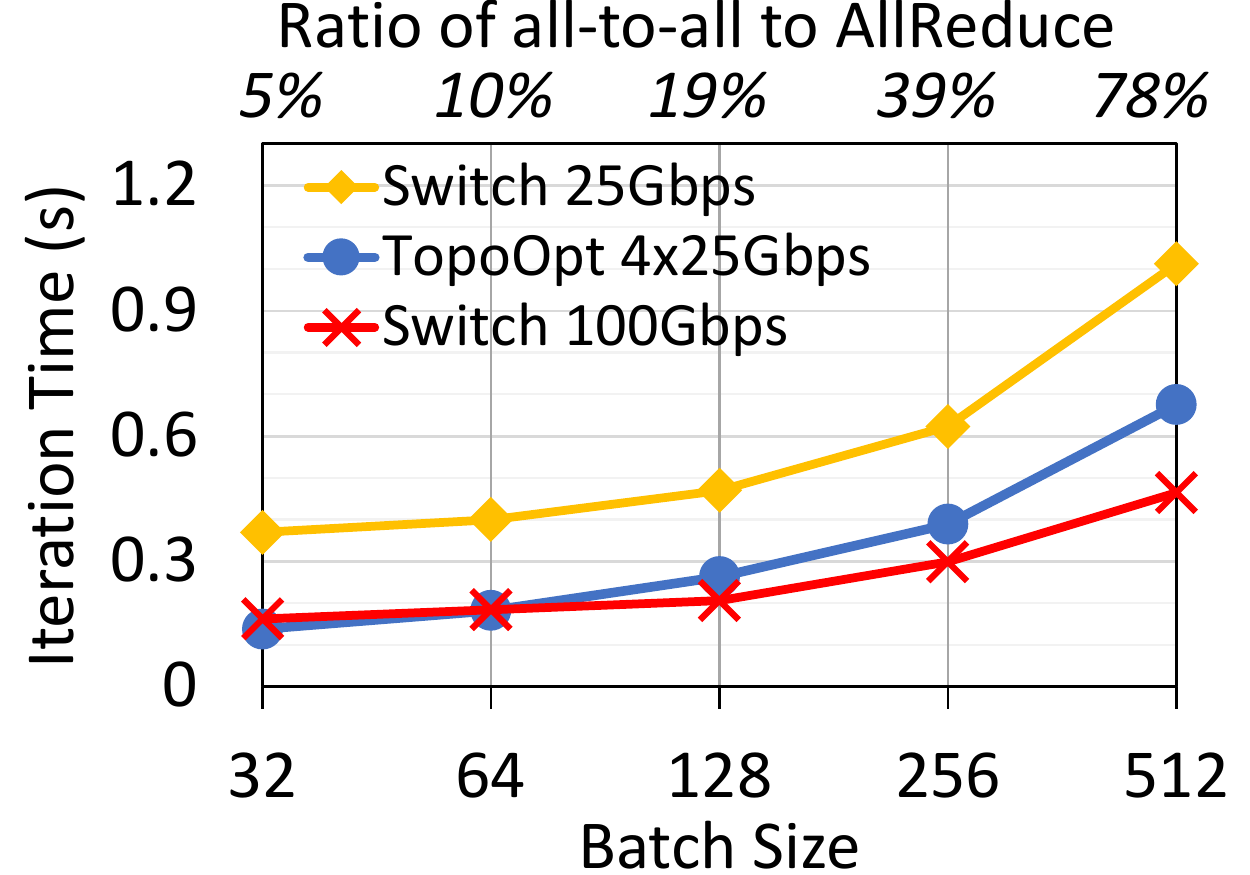}
\caption{Impact of all-to-all traffic in our testbed.}
\label{fig:testbed_a2a}
\end{minipage}
\end{figure*}

\para{Distributed training framework.} We use FlexFlow's training engine~\cite{flex_flow_ai},  based on Legion's parallel programming system~\cite{legion}, to train four DNN models: ResNet50~\cite{resnet}, BERT~\cite{bert}, VGG16~\cite{vgg}, and CANDLE~\cite{candle_uno}. For DLRM, we use Facebook's implementation from~\cite{dlrm}. Since our prototype is an order of magnitude smaller in scale than our simulation setup, we use smaller model and batch sizes. 

\para{Modifications to NCCL.} By default, the NCCL communication library~\cite{nccl} assumes all network interfaces are routable from other interfaces. This assumption is not ideal for \name because we have a specific routing strategy to optimize training time.  We modify NCCL to understand \name's topology and respect its routing preferences. Moreover, we integrate our \fancyperms\ \reduce permutations into NCCL and enable it to load-balance parameter synchronization across multiple ring-\reduce permutations.

\para{RDMA forwarding.} Implementing \name with today's RDMA NICs requires solving an engineering challenge, because the RDMA protocol assumes a switch-based network. Packet processing and memory access in RDMA protocol are offloaded to the NIC, and a RoCEv2 packet whose destination IP address is different from that of the host is assumed to be corrupted. Therefore, the NIC silently drops forwarded packets. To address this issue, we collaborated with engineers at Marvell who developed the firmware and driver of our HP NICs. Our solution uses a feature called network partitioning (NPAR)  which enables the NIC to separate host-based forwarding traffic from  direct traffic, and uses the Linux kernel to route them (details in Appendix~\ref{app:rdma_forwarding}). Our conversations with Marvell indicate that updating the firmware and the driver enables the NIC to route forwarded RoCEv2 packets, thereby bypassing the kernel entirely.

\para{Training performance.} Figure~\ref{fig:testbed_results_tput} demonstrates that \name's training throughput (samples/second) is similar to our Switch 100~Gbps baseline for all models. The performance of Switch~25Gbps baseline is lower because its available bandwidth is lower than \name. Figure~\ref{fig:time_to_acc_resnet} shows the time-to-accuracy plot of training VGG19 on the ImageNet~\cite{deng2009imagenet} dataset. As the figure indicates, \name reaches the target accuracy of 90\% $2.0\times$ faster than the Switch~25Gbps baseline. \name achieves similar performance to the Switch~100Gbps baseline, as the blue and red lines overlap in Figure~\ref{fig:time_to_acc_resnet}.

\para{Impact of all-to-all traffic.}
Similar to Section~\ref{eval:a2a}, we  evaluate the impact of all-to-all \MP traffic on our RDMA-forwarding enabled testbed by measuring the average iteration time across 320 iterations of a DLRM job distributed in our testbed. We vary the amount of all-to-all traffic by changing the batch size. To create worst-case traffic, we increase the embedding dimensions by 128$\times$ relative to the state-of-the-art~\cite{dlrm} (model details are in List~\ref{list:model_parameters}, Appendix~\ref{sec:dnn_details}). Figure~\ref{fig:testbed_a2a} shows the training iteration time for various batch sizes. The results are consistent with Figure~\ref{fig:sim_a2a_impact}, but since the bandwidth tax in our 12-server testbed is much smaller than a 128-server cluster in simulations, \name performs better relative to the switch-based architectures for a given all-to-all to \reduce traffic ratio. For instance, for batch size 512, the ratio of all-to-all traffic to \reduce is $78\%$, and the training iteration time with \name is 1.6$\times$ better than the Switch~25Gbps baseline. 
\section{Discussion}
\label{sec:discussion}

\para{Target workload.} The most suitable workload for a \name cluster is a set of large DNN training jobs with hybrid data and model parallelism (or simply data parallelism).  We assume the set of servers assigned to each job remains the same throughout the lifetime of the job, and the GPUs are not shared across multiple jobs.

\para{Storage and control plane traffic}. \net's training clusters consist of custom-designed servers, each with eight GPUs, eight dedicated NICs for training traffic (GPU NICs), and four additional NICs for storage and other traffic (CPU NICs)~\cite{mudigere2021highperformance}. Other companies, such as NVIDIA, have similar architectures~\cite{superpod}. \name only considers GPU NICs as server degree and partitions the network dedicated for training traffic. The CPU NICs are connected through a separate fabric to carry storage and other control plane traffic.

\para{Supporting dynamic scheduling and elasticity.} Others have demonstrated the benefits of dynamically choosing the training servers for elastic training jobs~\cite{pollux, themis}. Our target use case in \net is to leverage \name for the vast number of long-lasting training jobs that do not change dynamically. In cases where elasticity is required, instead of using patch panels, we use OCSs (or other fast reconfigurable optical switches) to change the servers participating in a job quickly. Note that dynamically changing the set of servers participating in a job while keeping both the topology and the parallelization strategy optimal requires augmenting the optimization space with an additional dimension, making the problem even more challenging. We leave this to future work.

\para{Handling failures.} Unlike SiP-ML's single ring topology~\cite{sip-ml}, a single link failure does not disconnect the graph in \name. When a fiber fails, \name can temporarily use a link dedicated to \MP traffic to recover an \reduce ring. In case of permanent failures, \name  reconfigures to swap ports and recover the failed connection.

\para{Supporting multi-tenancy.} To support multi-tenancy~\cite{MLSYS2021_9f61408e, salus}, \name can leverage NVIDIA's MIG~\cite{mig} to treat one physical server as multiple logical servers in its topology.

\para{\fancyperms in \fattrees.} Although our \fancyperms technique is well-suited for  reconfigurable optical interconnects, it may be of independent interest for \fattree interconnects as well since load-balancing the \reduce traffic across multiple permutations can help with network congestion.

\para{\namebf's limitations.} \name's approach assumes the traffic pattern does not change between iterations. However, this assumption may not hold for Graphic Neural Network (GNN) models~\cite{gnn} or Mixture-of-Expert (MoE) models~\cite{moe}. In addition, we plan to extend \name by bringing its demand-awareness design within training iterations. This is an open research question, and as shown in Section~\ref{sec:reconfig_time}, we will need fast-reconfigurable optical switches, as well as a more sophisticated scheduling algorithm. Another limitation of \name is that a single link failure within a \reduce ring causes the full ring to become inefficient for \reduce traffic. A fast optical switch addresses this problem by quickly reconfiguring the topology.

\section{Related Work}

\para{Optimizing DNN training.} To address the increasing computation and network bandwidth requirements of large training jobs, a plethora of frameworks have been proposed~\cite{flex_flow, pipedream, byteps_1, byteps_2, wang2020blink, efficient_comp_comm_neurips, placeto, gpipe, horovod, blueconnect, goyal, uber, firecaffe, topo_aware_sparse, kylix, deepspeedmoe, zeroinfinity}. These frameworks distribute the dataset and/or DNN model across accelerators while considering the available network bandwidth, but unlike \name, they do not consider optimizing the \textit{physical layer topology}. Specifically, Blink~\cite{wang2020blink} builds collectives for distributed ML, but it needs a physical topology to generate spanning trees. Zhao et al.~\cite{cloptzhao22} study the optimal topology for collective communication operations, but this does not apply for general \MP traffic. In addition, several methods have been proposed to quantize and compress the gradients to reduce the amount of communication data across servers~\cite{qsgd, adaptive_compression, gradient_sparsification}. While these approaches are effective, they are designed for data parallel strategies and do not consider the large amount of data transfers caused by model parallel training. Wang et al.~\cite{network_topology_on_performance} compare the performance of \fattrees and BCube topologies for distributed training workloads and highlight several inefficiencies in \fattrees. SiP-ML~\cite{sip-ml} demonstrates the benefits of 8~Tbps silicon photonics-based networks for distributed training. However, unlike \name, these proposed approaches do not \textit{co-optimize} topology and parallelization strategy. 

\para{DNN parallelization strategies.} Data and model parallelism are widely used by today's DNN frameworks (e.g., TensorFlow~\cite{tensorflow}, PyTorch~\cite{pytorch}, MXNet~\cite{mxnet}) to parallelize training across multiple devices. Recent work has also proposed \emph{automated frameworks} (e.g., FlexFlow~\cite{flex_flow}, ColocRL~\cite{colocRL}, MERLIN~\cite{merlin}) that find efficient parallelization strategies by searching over a comprehensive space of potential strategies. These frameworks rely on and are optimized for the conventional \fattree interconnects. \name proposes a new approach to building DNN training systems by jointly optimizing network topology and parallelization strategy.

\para{DNN training infrastructures and schedulers.} Several training infrastructures have been proposed recently, including NVIDIA DGX SuperPOD~\cite{superpod}, TPU cluster~\cite{tpu}, and supercomputers~\cite{summit}. All these systems assume non-reconfigurable network topologies, such as \fattree, Torus, and other traffic-oblivious interconnects. \name is the first DNN system to use commodity reconfigurable interconnects to accelerate DNN jobs.Gandiva~\cite{gandiva}, Themis~\cite{themis}, Tiresias~\cite{tiresias}, BytePS~\cite{byteps_1, byteps_2}, and Pollux~\cite{pollux} seek to improve the utilization of GPU clusters through scheduling algorithms. These approaches are complementary to ours, and many of their techniques can be applied to a \name cluster.

\para{Optical Interconnects.} Several papers have demonstrated the benefits of optically reconfigurable  interconnects for datacenters~\cite{sirius, rotornet, opera, helios, projector, megaswitch, mordia, reactor, solstice, sirius_ecoc, quartz,optlatreconfig21}. These designs lead to sub-optimal topologies for distributed DNN traffic. Similarly, \textit{traffic oblivious} interconnects, such as RotorNet~\cite{rotornet, opera}, are a great fit for datacenter workloads, but they are not suitable for DNN training jobs characterized by repetitive traffic demands. Hybrid electrical/optical datacenter proposals~\cite{helios, cthrough} can be used to route \reduce traffic through the optical fabric and \MP flows through a standard electrical \fattree network. But hybrid clusters are not cost effective and suffer from many problems, including TCP ramp-up inefficiencies~\cite{etalon}, segregated routing issues~\cite{algorithmic_compleixty}, and uncertainty in terms of how to divide the cluster between electrical and optical fabrics~\cite{projector, firefly}.

\vspace*{-2mm}

\section{Conclusion}
We present \name, a novel system based on optical devices that jointly optimizes DNN parallelization strategy and topology to accelerate training jobs. We design an alternating optimization algorithm to explore the large space of  \textit{Computation $\times$ Communication $\times$ Topology} strategies for a DNN workload, and demonstrate \name obtains up to 3.4$\times$ faster training iteration time than \LBE.

\vspace*{-2mm}

\section{Acknowledgments}
We thank our shepherd Sangeetha Abdu Jyothi and anonymous reviewers for their valuable feedback. We also acknowledge Meta for supporting this research. In particular, we thank Omar Baldonado, Gaya Pradeep Sindhu, and Jahangir Hasan. In addition, we thank Alan Gibbemeyer, Bob Shine, Karl Kuhn and Ramiro Voicu from Telescent for their support on the Telescent NTM-G4. We also thank Arial Elior, Karl Erickson, and Nishant Lodha from Marvell for their help on RDMA forwarding. The MIT-affiliated authors are supported by ARPA-E ENLITENED PINE DE-AR0000843, DARPA FastNICs 4202290027, NSF grants CNS-2008624, SHF-2107244, ASCENT-2023468, CAREER-2144766, PPoSS-2217099, CNS-2211382, Meta faculty award, Google faculty award, and Sloan fellowship FG-2022-18504.

\label{bodypage}
\bibliographystyle{plain}
\balance
\bibliography{reference}

\appendix
\newpage

\nobalance
\section{Tree-\reduce and Other \reduce Permutations}
\label{app:tree_reduce}

Section~\ref{sec:measurement} established that we can manipulate the traffic of a ring-\reduce collective by permuting the labeling of servers in the \reduce group. Here, we illustrate how to use the same technique on another \reduce algorithm, called tree-\reduce. 

In the tree-\reduce algorithm, the servers are connected logically to form a tree topology. The \reduce operation starts by running a reduce operation to the root node with recursive halving, followed by a broadcast to the rest of the cluster with recursive doubling~\cite{thakur2005optimization}. 

A common instantiation of tree-\reduce is the \text{double binary tree} (DBT) algorithm~\cite{sanders2009two}. In this algorithm, the first step is to create a balanced binary tree for the nodes. The properties of balanced binary trees guarantee that one half of the nodes will be leaf-nodes, and the other half will be in-tree; thus, a second binary tree is constructed by flipping the labeling of the leaf and in-tree nodes. This way, each node (except the root in both trees) has the same communication requirements for the \reduce operation, as described in the last paragraph, and bandwidth-optimally is achieved. Figure~\ref{fig:tree_permutations}a shows an example where in the first binary tree, the in-tree nodes are even, and the leaf nodes are odd, while the second tree flips the labeling.

The DBT itself is essentially an example of permuting the node labeling to achieve an \reduce operation with balanced communication load. We also note that we can permute the labeling \textit{for the entire set of nodes} for a pair of DBTs to create a new pair of trees that can perform the \reduce operation at the same speed. Figures~\ref{fig:tree_permutations}b and~\ref{fig:tree_permutations}c illustrate two other possible double binary trees, and their corresponding traffic demand matrix for the DLRM and CANDLE example shown in Figures~\ref{fig:dlrm_traffic_pattern_tree} and~\ref{fig:candle_traffic_pattern_tree} (\S\ref{sec:measurement}). Arbitrary permutations can be used, and to limit the cases, we could simply consider the cyclic permutations in the modular space as described in \fancyperms. 

In general, all \reduce operations can be described as a directed graph $G=(V, E)$ where $V$ is the set of nodes in the cluster, and $E$ denotes data dependencies. The \textit{permutable} property says every graph $G'=(V, E')$ that is isomorphic to $G$ can perform the \reduce operation equally well, where the homomorphism between $G$ and $G'$ is described by the symmetric group on $V$ (generally denoted by $Sym(V)$ in group theory).

\begin{figure}[t]
\centering
\footnotesize
\begin{minipage}{\columnwidth}
\includegraphics[width=0.98\columnwidth]{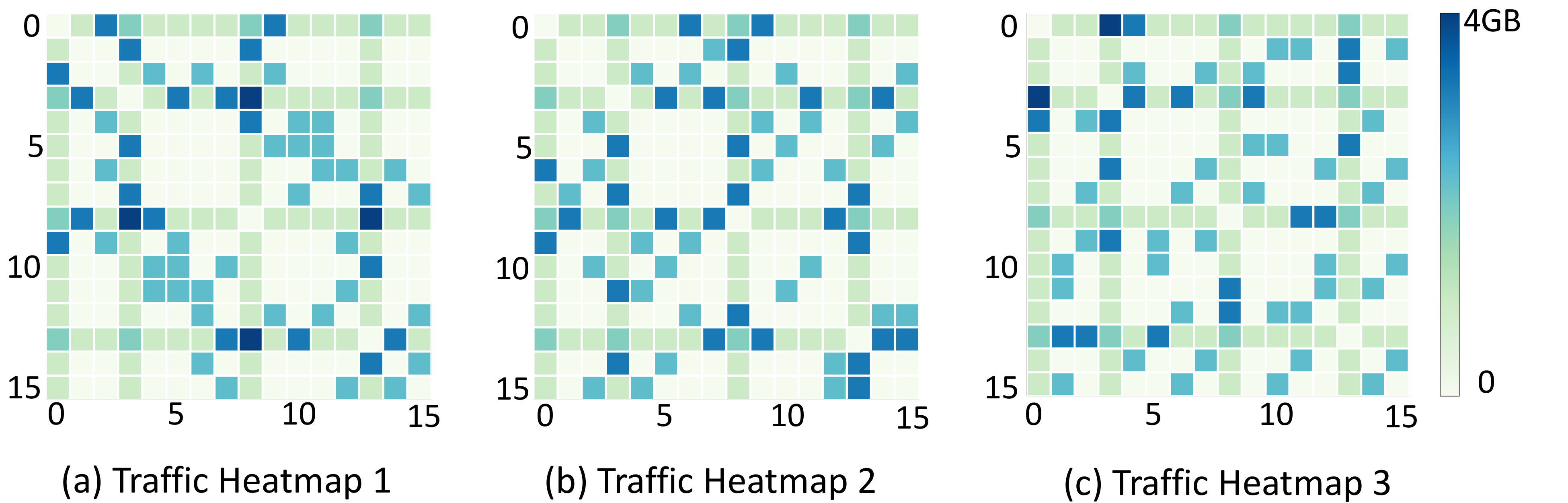}
\caption{DLRM traffic heatmaps with double binary tree \reduce.}
\vspace*{8mm}
\label{fig:dlrm_traffic_pattern_tree}
\end{minipage}
\vspace*{10mm}
\begin{minipage}{\columnwidth}
\centering
\includegraphics[width=0.98\columnwidth, trim= 0 0 0 35, clip]{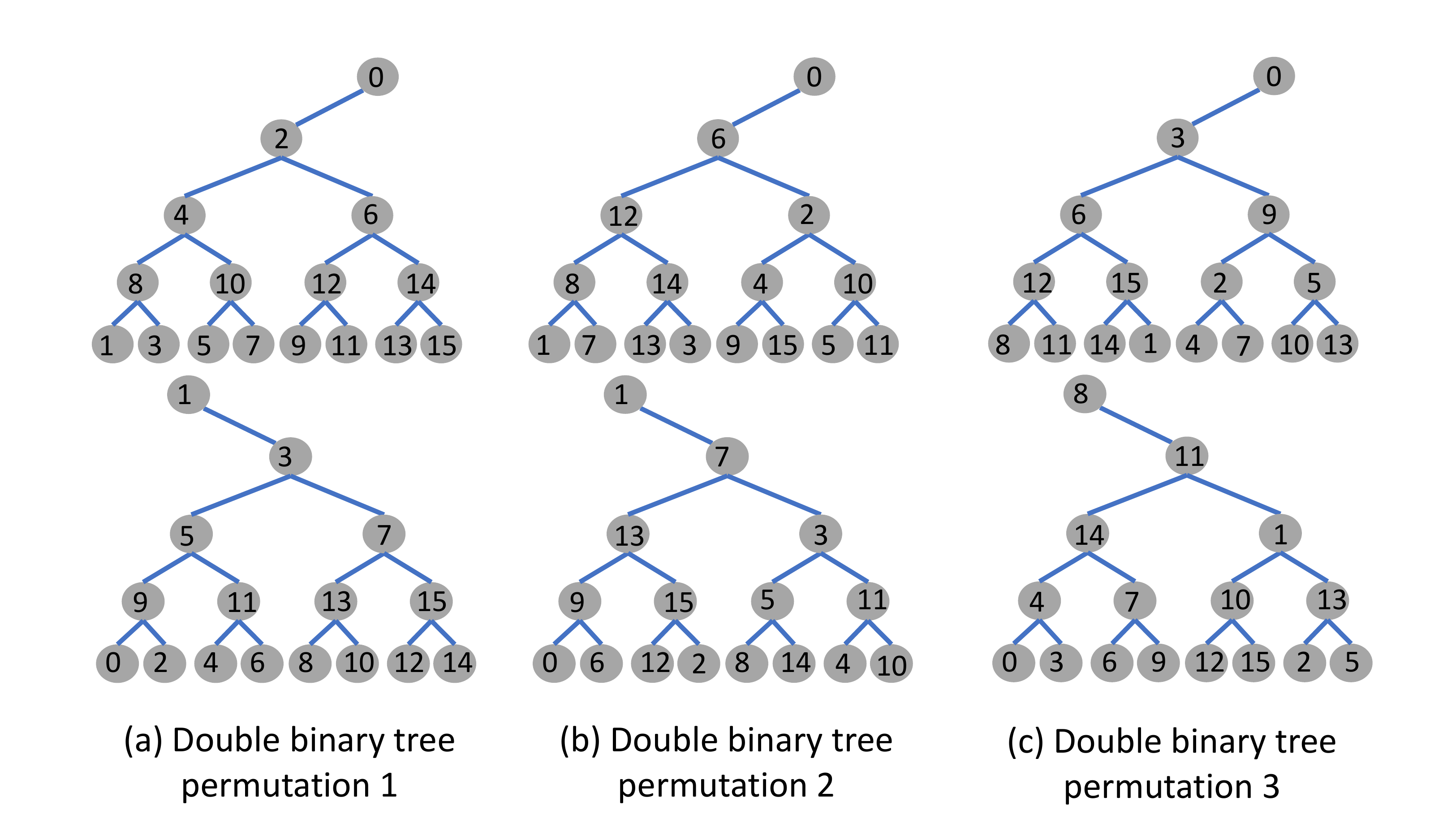}
\caption{Double binary tree (DBT) permutations.} 
\label{fig:tree_permutations}
\end{minipage}
\begin{minipage}{\columnwidth}
\centering
\includegraphics[width=0.98\columnwidth]{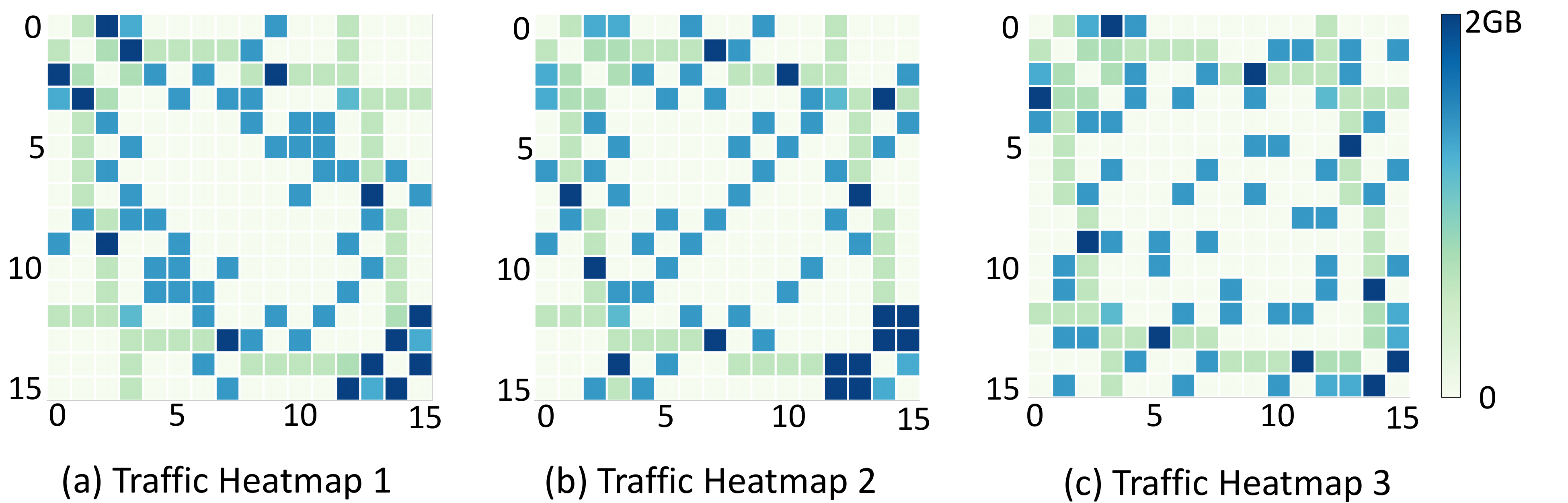}
\caption{CANDLE traffic heatmaps with double binary tree \reduce.}
\label{fig:candle_traffic_pattern_tree}
\end{minipage}
\end{figure}

\section{Commercially Available Patch Panels and Optical Circuit Switches}
\label{sec:patch_panel_details}

\para{Optical patch panels.} A patch panel is a device to facilitate connecting different parts of a system. For instance, electrical patch panels are used in recording studios and concert sound systems to connect microphones and electronic instruments on demand~\cite{patch_panel_wiki}. 
Fiber optic patch panels are commonly used for cable management, and have been proposed in recent datacenter topology designs~\cite{FatClique}. Reconfigurable optical patch panels are a new class of software-controlled patch panels and are already commercialized at scale~\cite{telescent-commercialized}. For instance, Telescent offers 1008 duplex ports with insertion loss less than 0.5~dB and cost $\approx$\$100K (\$100/port)~\cite{telescent-commercialized, kewitsch2009patchpanel}. Reconfiguration is performed using a robotic arm that grabs a fiber on the transmit side and connects it to a fiber on the receive side~\cite{kewitsch2009patchpanel}. 
However, the reconfiguration latency of optical patch panels is several minutes~\cite{telescent}.
Note that reliability is of utmost concern for operation in unmanned locations; for example, Telescent NTM patch panels have been certified to NEBS Level 3 and have over 1 billion port hours in operation~\cite{telescent_reliability}.

\para{3D MEMS-based Optical Circuit Switches (OCSs).} An OCS uses tiny mirrors to change the direction of light, thereby reconfiguring optical links. The largest optical circuit switch on the market has 384 duplex ports with $\approx$10~ms reconfiguration latency and is available for \$200K (\$520/port)~\cite{polatis}. However, the optical loss of these switches is 1.5--2.7~dB~\cite{polatis-datasheet}. Compared to patch panels, OCSs have the following disadvantages: ($i$) each port is five times more expensive; ($ii$) their insertion loss is higher; and ($iii$) their port-count is three times lower. The main advantage of OCSs is that their reconfiguration latency is \textit{four orders of magnitude} faster than patch panels. 

\section{Handling Sharding and Dynamic Job Arrivals in Shared Clusters}
\label{app:sharding}

\begin{figure}[t]
\includegraphics[width=0.9\columnwidth]{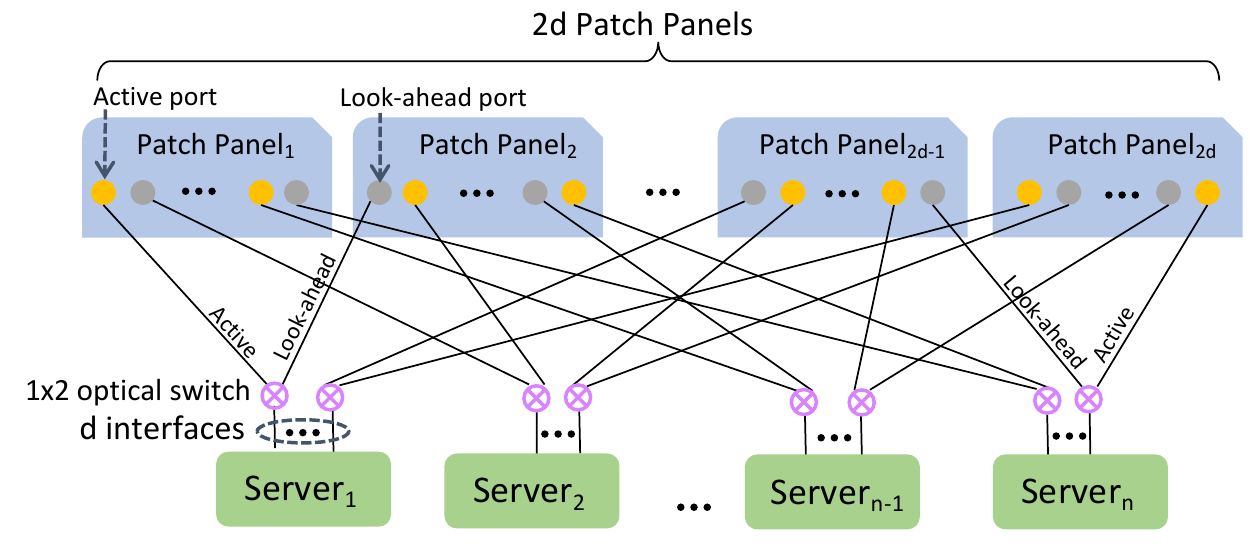}
\caption{Active \& Look-ahead ports for high reconfiguration latency.}
\label{fig:flat_active_look_ahead}
\end{figure}

\begin{figure}[t]
\centering
\includegraphics[width=0.9\columnwidth,trim= 0 0 0 65, clip]{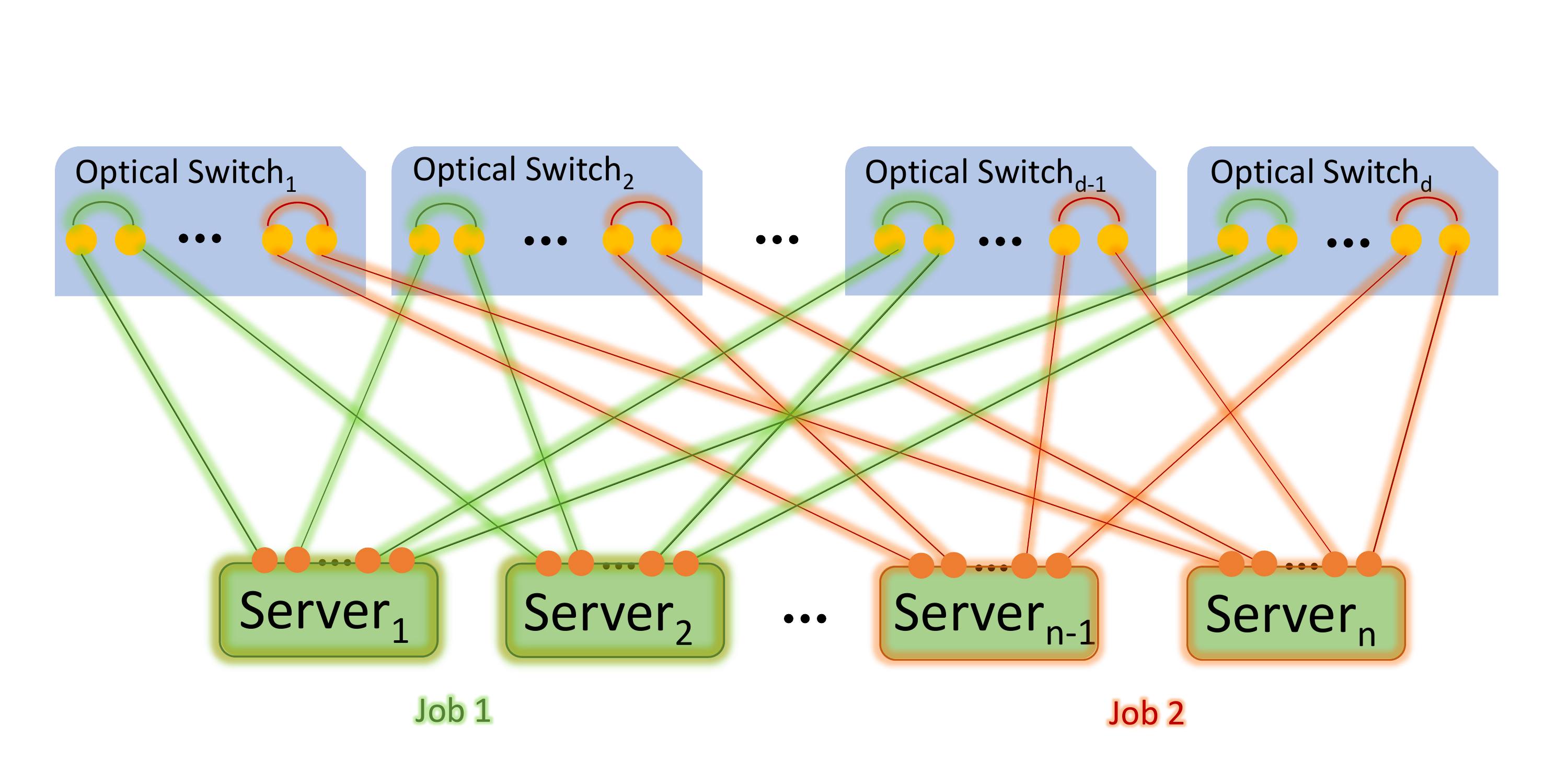}
\caption{Sharding \name cluster for two jobs.}
\label{fig:sharding}
\end{figure}

Section~\ref{sec:design} explained how \name can support multiple job sharing the cluster through sharding; here we provide a detailed explanation of how sharding works. Figure~\ref{fig:sharding} shows how a \name cluster is sharded to train two jobs together. In this scenario, the optical switches are configured in a way such that the green part (Server 1, 2 and their corresponding links) is completely disjointed from the red part (Server $n-1$, server $n$). The complete isolation ensures each job gets its dedicated resources, and benefits the performance (especially the tail latency) as shown in Section~\ref{sec:sims_multi_job}.

To start a job with $k$ servers, we need to reconfigure the interconnection between these $k$ servers before the job starts. This can be done quickly when OCSs are used, but when patch panels are used, there could be several minutes of delay before the job can start. To address this challenge, we use a look-ahead approach to pre-provision the next topology while current jobs are running. More specifically, we use a simple 1$\times$2 mechanical optical switch~\cite{one_by_two_switch} at each server's interface to choose between \textit{Active} and \textit{Look-ahead} ports. These 1$\times$2 switches are inexpensive (\$25) and have 0.73~dB optical loss measured in our prototype. 
Unlike optical splitters~\cite{fiber_spliter}, that incur 3~dB loss, these switches choose where to send light between their two output ports. 
We then connect the two ends of each 1$\times$2 switch to different patch panels, as shown in Figure~\ref{fig:flat_active_look_ahead}. 
As a result, a \name cluster with $n$ servers, each with $d$ interfaces, has $2d$ patch panels where each interface is split into two parts: Active and Look-ahead. At any point in time, only one end of each 1$\times$2 switch is participating in the active topology; the other end is pre-provisioning the topology for the next job.  Since the topology and parallelization strategy are calculated off-line, we already know the sequence of job arrivals and the number of servers required by each job. This design allows each server to participate in two independent topologies. Hence, when a set of servers uses one topology for a training job, \name pre-provisions the next topology, optimized for the next task by reconfiguring Look-ahead ports. 
Once all the servers for the new job are ready, \name immediately \textit{flips} to the new topology by reconfiguring the corresponding 1$\times$2 switches.%
\section{Model Configurations and Transfer Sizes}
\label{sec:dnn_details}

List~\ref{list:model_parameters} summarizes the parameters we used in our simulation and testbed. Model parameters and batch sizes are selected based on common values used in \net for simulations. For the prototype, we reduce parameter values and batch sizes to fit the models in our 12-node cluster.

\begin{List}[ht]
\begin{myframe}
\noindent \small\textbf{VGG:}\\
\footnotesize
\indent Batch/GPU: 64 (\S\ref{sec:sims_single_job}, \S\ref{sec:sims_multi_job}), 32 (\S\ref{sec:prototype})

\noindent \small\textbf{ResNet50:}\\
\footnotesize
\indent Batch/GPU: 128(\S\ref{sec:sims_single_job}), 20 (\S\ref{sec:prototype})

\noindent \small\textbf{BERT:}\\
\footnotesize
\indent Batch/GPU: 16 (\S\ref{sec:sims_single_job}, \S\ref{sec:sims_multi_job}), 2 (\S\ref{sec:prototype}) \\
\indent \#Trans. blks: 12 (\S\ref{sec:sims_single_job}), 6 (\S\ref{sec:sims_multi_job}, \S\ref{sec:prototype}) \\
\indent Hidden layer: 1024 (\S\ref{sec:sims_single_job}), 768 (\S\ref{sec:sims_multi_job}), 1024(\S\ref{sec:prototype}) \\
\indent Seq. length: 64 (\S\ref{sec:sims_single_job}), 256 (\S\ref{sec:sims_multi_job}), 1024(\S\ref{sec:prototype}) \\
\indent \#Attn. heads: 16 (\S\ref{sec:sims_single_job}), 6 (\S\ref{sec:sims_multi_job}), 16(\S\ref{sec:prototype}) \\
\indent Embed. size: 512 (\S\ref{sec:sims_single_job}, \S\ref{sec:sims_multi_job}, \S\ref{sec:prototype})

\noindent \small\textbf{DLRM:}\\
\footnotesize
\indent Batch/GPU: 128  (\S\ref{sec:sims_single_job}),[32$,\cdots,$2048] (\S\ref{eval:a2a}), 256 (\S\ref{sec:sims_multi_job}), [64$,\cdots,$512] (\S\ref{sec:prototype}) \\
\indent \#Dense layer: 8 (\S\ref{sec:sims_single_job}, \S\ref{sec:sims_multi_job}), 4 (\S\ref{sec:prototype}) \\
\indent Dense layer size: 2048 (\S\ref{sec:sims_single_job}), 1024 (\S\ref{sec:sims_multi_job}, \S\ref{sec:prototype})\\ 
\indent \#Dense feat. layer: 16  (\S\ref{sec:sims_single_job}, \S\ref{sec:sims_multi_job}), 8 (\S\ref{sec:prototype}) \\
\indent Feat. layer size: 4096 (\S\ref{sec:sims_single_job}), 2048 (\S\ref{sec:sims_multi_job}, \S\ref{sec:prototype}) \\
\indent Embed.: $128\times10^7$ (\S\ref{sec:sims_single_job}), $256\times 10^7$ (\S\ref{sec:sims_multi_job}), 32768$\times 10^5$ (\S\ref{sec:prototype}) \\
\indent \#Embed. tables: 64 (\S\ref{sec:sims_single_job}), 16 (\S\ref{sec:sims_multi_job}), 128 (\S\ref{eval:a2a}) , 12 (\S\ref{sec:prototype}) 

\noindent \small\textbf{CANDLE:}\\
\footnotesize
\indent Batch/GPU: 256 (\S\ref{sec:sims_single_job}, \S\ref{sec:sims_multi_job}), 10 (\S\ref{sec:prototype}) \\
\indent \#Dense layer: 8 (\S\ref{sec:sims_single_job}, \S\ref{sec:sims_multi_job}), 4 (\S\ref{sec:prototype}) \\
\indent Dense layer size: 16384 (\S\ref{sec:sims_single_job}), 4096 (\S\ref{sec:sims_multi_job}, \S\ref{sec:prototype}) \\
\indent \#Dense feat. layer: 16  (\S\ref{sec:sims_single_job}, \S\ref{sec:sims_multi_job}), 8 (\S\ref{sec:prototype}) \\
\indent Feat. layer size: 16384  (\S\ref{sec:sims_single_job}), 4096 (\S\ref{sec:sims_multi_job}, \S\ref{sec:prototype})

\noindent \small\textbf{NCF:}\\
\footnotesize
\indent Batch/GPU: 128 (\S\ref{sec:sims_single_job})\\
\indent \#Dense layer: 8 (\S\ref{sec:sims_single_job}) \\
\indent Dense layer size: 4096 (\S\ref{sec:sims_single_job})\\
\indent \#User embedding table (MF, MLP): 32, 32 (\S\ref{sec:sims_single_job}) \\
\indent \#User per table: $10^6$  (\S\ref{sec:sims_single_job}) \\
\indent \#Item embedding table (MF, MLP): 32, 32 (\S\ref{sec:sims_single_job}) \\
\indent \#Item per table: $10^6$  (\S\ref{sec:sims_single_job}) \\
\indent MF embedding dimension: 64  (\S\ref{sec:sims_single_job}) \\
\indent MLP embedding dimension: 128  (\S\ref{sec:sims_single_job})

\end{myframe}
\captionof{List}{DNN models used in our simulations and testbed. \label{list:model_parameters}}
\end{List}

In most workloads observed in \net, the size of \reduce transfers is larger than the size of \MP transfers for each iteration, because in most cases, it would not be worthwhile if \MP transfers were as large as \reduce transfers. Consider the DLRM example in Section~\ref{sec:mutability_reducetopo} with 20~GB embedding tables with double-precision floating parameters. If we were to distribute this embedding table using data parallelism, each server would need to send and receive 37.5~GB of data for the \reduce operation. On a 100~Gbps fabric, this would take 3 seconds by itself, but if we put it on one server, it would only need to transfer 32~MB/server (assume we have a per-server batch size of 8192; then, \MP traffic is calculated as 16 servers $\times$ 8192 samples/server $\times$ 512 activation per sample $\times$ 8 bytes per activation / 16 \text{servers} = 32~MB). We note that adding \textit{pipeline parallelism} can increase the amount of \MP traffic as it overlaps forward and backward passes. Efficient ways to pipeline batches remains an active research area~\cite{pipedream, gpipe} especially when hybrid parallelism is employed. Pure model parallelism creates another type of sparse traffic pattern where only accelerators with inter-layer dependencies need to communicate. Our \algo algorithm can support such communication patterns.

Conceptually, however, when the network bandwidth goes to infinity, other overheads in the system (e.g. CUDA kernel launch) will dominate the latency. In such cases, it might be beneficial to choose model parallelism instead of data parallelism, to reduce the amount of system overheads. In particular, prior work shows 10~Tbps Silicon Photonics links enable more aggressive model parallelism where the size of \MP traffic is significant~\cite{sip-ml}. \name's approach to distribute the degree between the \MP and \reduce sub-topologies enables us to accommodate this case as well.

\section{Algorithm Details}

\subsection{\algobf}
\label{app:algo_details}

\para{Using group theory to find \reduce permutations.} For a ring-\reduce group with $n$ servers labeled $S_0, ..., S_{n-1}$, a straightforward permutation  is $(S_0 \rightarrow S_1 \rightarrow S_2 \cdots \rightarrow S_{n-1} \rightarrow S_0)$. We denote this permutation by a ring generation rule as: $S_{i} \rightarrow S_{(i+1)~mod~n}$. Since the servers form a ring, the index of the starting server does not matter. For instance, these two rings are equivalent: $(S_0 \rightarrow S_1 \rightarrow S_2 \rightarrow S_3 \rightarrow S_0)$ and $(S_1 \rightarrow S_2 \rightarrow S_3 \rightarrow S_0 \rightarrow S_1)$.\footnote{Given that ring-\reduce is the dominant  \reduce collective, we describe our algorithms based on ring-\reduce. Appendix~\ref{app:algo_details} explains how to extend our algorithm to other \reduce communication collectives.}

We first provide the mathematical foundation of the ring permutation rule.

\begin{theorem}[Ring Generation]
\label{thm:ringgen}
For a cluster of $n$ nodes $V=\{S_0, S_1, \cdots, S_{N-1}\}$, all integer numbers $p < n$, where $p$ is co-prime with $n$ (i.e. $gcd(p ,n) = 1$) represent a unique ring-\reduce permutation rule.
\end{theorem}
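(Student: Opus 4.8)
The plan is to translate the combinatorial statement into the language of the additive cyclic group $\mathbb{Z}_n = \mathbb{Z}/n\mathbb{Z}$ and reduce the claim to the classical fact that $p$ generates $\mathbb{Z}_n$ exactly when $\gcd(p,n)=1$. First I would make the ring generation rule precise: the rule $S_i \rightarrow S_{(i+p)\bmod n}$ defines a map $\sigma_p : \mathbb{Z}_n \to \mathbb{Z}_n$ by $\sigma_p(i) = (i+p)\bmod n$. Since $\sigma_p$ is translation by $p$ in an abelian group, it is a bijection with inverse $\sigma_{-p}$, so it is \emph{always} a permutation of the $n$ servers. The real content of the theorem is that it is the right \emph{kind} of permutation, namely a single $n$-cycle that visits every server exactly once, which is precisely what a valid ring-\reduce ordering requires.

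Next I would determine the cycle structure of $\sigma_p$. Starting from any server $S_a$, repeated application produces the orbit $\{a, a+p, a+2p, \dots\} \pmod n$, which is exactly the coset $a + \langle p \rangle$ of the cyclic subgroup $\langle p \rangle \le \mathbb{Z}_n$ generated by $p$. The length of each such cycle is therefore the additive order $|\langle p \rangle|$, which a standard argument shows equals $n/\gcd(p,n)$ (the least $m$ with $n \mid mp$). Because the orbits are cosets of one fixed subgroup, they all share this length and partition $\mathbb{Z}_n$ into $\gcd(p,n)$ equal cycles. Hence $\sigma_p$ is a single Hamiltonian cycle through all $n$ servers if and only if $\gcd(p,n)=1$, establishing that exactly the residues coprime to $n$ yield a valid ring-\reduce permutation; there are $\phi(n)$ of them, matching the count used by \fancyperms.

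For the uniqueness claim I would argue that distinct admissible $p$ give distinct rings: if $p \ne p'$ with $1 \le p, p' \le n-1$, then $\sigma_p(0) = p \ne p' = \sigma_{p'}(0)$, so the successor of $S_0$ differs and the two rules describe different permutations. I would also flag the one genuine subtlety: if rings are regarded as \emph{undirected} cycles, then $p$ and $n-p$ trace the same cycle in opposite directions, so one should restrict to $p \le n/2$ (or fix an orientation) to recover an exact bijection between admissible residues and distinct rings.

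I do not expect a serious obstacle here, since the result is a direct repackaging of the generator criterion for cyclic groups; the main care is definitional, namely pinning down that a ``ring-\reduce permutation rule'' means a single $n$-cycle and then cleanly invoking $|\langle p \rangle| = n/\gcd(p,n)$ rather than re-deriving it. The only step needing an explicit sentence of justification is why all orbits have the same length (they are cosets of $\langle p \rangle$), which is what rules out a mixture of short and long cycles and forces the all-or-nothing dichotomy between a single ring and $\gcd(p,n)>1$ disjoint sub-rings.
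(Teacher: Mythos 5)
Your proposal is correct and follows essentially the same route as the paper: both reduce the claim to the fact that $p$ generates the cyclic group $\mathbb{Z}_n$ exactly when $\gcd(p,n)=1$ (you derive this via the orbit/coset structure of translation by $p$, while the paper cites the fundamental theorem of cyclic groups and wraps it in a graph isomorphism), and both prove uniqueness by the same observation that the successor of $S_0$ is $p$ itself, which the paper phrases as $(0,p_i)\in E_{p_i}$ but $(0,p_j)\notin E_{p_i}$. Your extra touches---showing the converse (non-coprime $p$ yields $\gcd(p,n)$ equal-length sub-cycles) and flagging the $p$ versus $n-p$ orientation issue---go slightly beyond the paper, whose directed-edge formulation of rings renders the orientation caveat moot.
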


\begin{proof}
Consider the integer modulo $n$ group with addition $\mathbb{Z}_n^+=\{0, 1, \cdots, (n-1)\}$. $\mathbb{Z}_n^+$ is a cyclic group. By the fundamental theorem of cyclic groups, $p$ is a generator of $\mathbb{Z}_n^+$ if and only if $gcd(p, n) = 1$. Hence we can cover the entire $\mathbb{Z}_n^+$ by repeatedly adding $p$ to itself. 

Now consider the graph $G_{\mathbb{Z}_n^+, p}=(V_{\mathbb{Z}_n^+}, E_p)$ where the set of vertices $V_{\mathbb{Z}_n^+}=\mathbb{Z}_n^+$ and $E_p=\{(a\times p, (a+1)\times p) \in V_{\mathbb{Z}_n^+}^2,\ a\in \mathbb{Z}_n^+\}$. The set $E_p$ forms a cycle on $G_{\mathbb{Z}_n^+, p}$. Now denote our cluster as $G=(V, E)$ where $V$ is defined as above and $E$ represents a set of directed links. Then $G_{\mathbb{Z}_n^+, p}$ is isomorphic to $G$, hence following the rule in $E_p$ we can define a valid ring in $G$. Furthermore, since $\forall p_i \neq p_j$ we can guarantee that $(0, p_i) \in E_{p_i}$ and $(0, p_j) \notin E_{p_i}$, and each $p_i$ is guaranteed to describe a unique ring.
\end{proof}

One way to extend our approach to other \reduce algorithms is to generalize \fancyperms (Algorithm~\ref{alg:fancyperms}) so that the $E_p$ described in theorem~\ref{thm:ringgen} simply represents a \textit{permutation} which we apply to the original node labeling, while keeping the edge relation, to create an isomorphic graph that describes the new \reduce topology.

\subsection{Bounding maximum hop count with \fancyperms}
\label{app:hopcount_bound}
In this section, we argue that fitting a geometric sequence for choosing permutation provides an approximately $O(d\sqrt[d]{n})$ bound for the maximum diameter of a cluster with $n$ nodes and degree $d$. Denote $x\equiv \sqrt[d]{n}$. We simplify the question to the following: given a contiguous set of numbers $\mathcal{N}=\{1, \dots, n\}$ and a set of numbers from the geometric sequence $S=\{x^{0}, x^{1}, \dots, x^{d-1}\}$, choose $h$ numbers (allow repetition) $s_1, \cdot, s_k$ from $S$ so that $m=\sum_{i=1}^h s_i$ for some $m\in \mathcal{N}$. Let $h=\kappa(m)$, find $\min_{m\in \mathcal{N}}\kappa(m)$.

Again for simplicity, assume $x\in \mathbb{Z}$. Then for a given $m\in \mathcal{N}$, we get the recursive relation $\kappa(m)=1+\kappa(m-x^i)$ where $i=\mathrm{argmax}_{i \leq d, x^i\leq m}$. $m=n-1$ gives the maximum $\kappa(n-1)=dx$. 

The problem above is simpler than the one in \name. In \name, $x$ is rarely an integer, and $S$ is a projection of the geometric sequence  $S=\{x^{0}, x^{1}, \dots, x^{d-1}\}$ onto the candidates (co-prime numbers with the size of a subset of node participating in \reduce). The intuition still holds.

Note that when $\sqrt[d]{n}<2$, it is advantageous to choose $x=2$ and spend less degree to create a geometric sequence with a ratio of at least 2. In this case, the $d$ factor becomes the actually used degree $d=\log_2 n$, and the bound holds at $O(\log_2~n)$.

\subsection{Coin Change Routing}

Consider servers $S_i$ and $S_j$ that need to exchange \reduce transfers but do not have a direct edge between them. We use a modified version of the classical coin change problem~\cite{coin_change} to find an efficient routing path (line~\ref{algo:coin_change}). In classical coin change, the goal is to find the minimum number of \textit{coins} that would sum to a certain \textit{total value}. Our ring generation rules enable us to treat the routing problem similarly. In particular, the $p$ values of \reduce permutations that have been selected in the \reduce sub-topology are the coin values, and the difference between server $i$ and $j$ indices ($(j-i)\%n$) is the target total value that we want to achieve. The only difference is that our problem runs with $modulo\ n$ arithmetic, as the server IDs wrap around in the ring structure. Algorithm~\ref{alg:coin_change} lists the pseudocode of \texttt{CoinChangeMod}.

\begin{algorithm}[t]
\footnotesize
\begin{algorithmic}[1]
    \Procedure{CoinChangeMod}{$n$, $G$}
        \IOComment{\textbf{Input} $n$: Total number of nodes}
        \IOComment{\textbf{Input} $G$: Network Topology}
        \IOComment{\textbf{Output} $R$: Routings}
        \textcolor{cyan}{\OldStatex \ \ \ \ \ \ \(\triangleright\) \textit{$R$ is the routing result}}
        \State $R$ \psass $\{\}$
        \textcolor{cyan}{\OldStatex \ \ \ \ \ \ \(\triangleright\) \textit{Acquire the set of ``coins" from the topology, }}
        \textcolor{cyan}{\OldStatex \ \ \ \ \ \ \(\triangleright\) \textit{which are the choices of Algorithm~\ref{alg:SelectPermutations}}}
        \State $C$ \psass \texttt{GetCoins($G$)}
        \For {$i \in [1, n-1]$}
            \textcolor{cyan}{\OldStatex \ \ \ \ \ \ \ \ \ \ \(\triangleright\) \textit{$curr\_dist$ denotes the ``distance" of a value }}
            \textcolor{cyan}{\OldStatex \ \ \ \ \ \ \ \ \ \ \(\triangleright\) \textit{(node distance) counted by number of ``coins"}}
            \State $curr\_dist[i]$ \psass $\infty$
            \textcolor{cyan}{\OldStatex \ \ \ \ \ \ \ \ \ \ \(\triangleright\) \textit{$curr\_bt$ record a back-trace of ``coins" to }}
            \textcolor{cyan}{\OldStatex \ \ \ \ \ \ \ \ \ \ \(\triangleright\) \textit{get to a value (node distance)}}
            \State $curr\_bt[i]$ \psass $\infty$
        \EndFor
        \For {$c\in C$}
            \State $curr\_dist[c]$ \psass $0$
            \State $curr\_bt[c]$ \psass $c$
        \EndFor
        \While {$curr\_dist$ has at least one $\infty$ in it}
            \For {$i \in [1, n-1]$}
                \State $new\_dist[i]$ \psass $curr\_dist[i]$
                \State $new\_bt[i]$ \psass $curr\_bt[i]$
                \For {$c\in C$}
                    \If {$curr\_dist[(i-c)\% n] < new\_dist[i]$}
                        \State $new\_dist[i]$ \psass $cur\_dist[(i-c)\%n] + 1$
                        \State $new\_bt[i]$ \psass $c$
                    \EndIf
                \EndFor
            \EndFor
            \State $curr\_dist$ \psass $new\_dist$
            \State $curr\_bt$ \psass $new\_bt$
        \EndWhile
        \textcolor{cyan}{\OldStatex \ \ \ \ \ \ \ \ \(\triangleright\) \textit{Construct the routing for each node distance from the back-trace}}
        \State $R$ \psass \texttt{GetRouteSeq($curr\_bt$)}
        \State \Return $R$
    \EndProcedure
\end{algorithmic}
\caption{\texttt{CoinChangeMod} pseudocode \label{alg:coin_change}}
\end{algorithm}
\subsection{OCS-reconfig Heuristic}
\label{app:reconfig_heuristic}

Algorithm~\ref{alg:uheuristic} describes the heuristic we use for OCS-reconfig. As mentioned in Section~\ref{sec:topoopt_algorithms}, our goals are ($i$) to have enough bandwidth for large transfer demands, ($ii$) while also minimizing the latency of indirect routing for nodes that do not have a direct link between them.

To achieve this goal in a reconfigurable interconnect, we propose a utility function that finds a balance between the two goals by maximizing the number of parallel links between high demand nodes but with a \textit{diminishing return}. More formally, assume a network topology is represented by graph $G=(V, E)$ and each node has degree $d$. We define $L(i,j)$ to be the number of parallel links between node-pair $(i, j)$. 
Let $T(i,j)$ be the amount of unsatisfied traffic demand. We define a topology $G$'s utility function as follows:

\begin{equation}
\label{eq:utility}
\begin{aligned}
Utility(G) = \sum\limits_{\{i,j\} \in E} T(i,j) \times Discount(L(i, j))
\end{aligned}
\end{equation}

The $Discount$ function can be defined in different ways; in Algorithm~\ref{alg:uheuristic} and Algorithm~\ref{alg:topo_finder}'s \MP construction, we use 
\begin{equation}
\label{eq:exp_discount}
\begin{aligned}
Discount(l) = \sum\limits_{x=1}^{l}2^{-x}
\end{aligned}
\end{equation}
to reduce the utility of additional links exponentially. We can also explore other discount scaling, such as linear or factorial functions.

When the fabric is reconfigurable (as in OCS-reconfig), we collect the unsatisfied traffic demand every 50~ms and run Algorithm~\ref{alg:uheuristic} to decide the new network topology. After the new topology is computed, we pause all the flows for 10~ms representing the reconfiguration delay of the OCS, apply the new topology, and then resume the flows with one or more corresponding physical links across the flow source and destination. The two-edge replacement algorithm from OWAN~\cite{owan} in line~\ref{algo:twoer} ensures the topology is connected, when we enable host-based forwarding.

\begin{algorithm}[t]
\footnotesize
\begin{algorithmic}[1]
    \Procedure{OCS-reconfig}{$V$, $T$, $d$, $L$}
        \IOComment{\textbf{Input} $V$: Nodes in the network}
        \IOComment{\textbf{Input} $T$: Unsatisfied traffic demand matrix}
        \IOComment{\textbf{Input} $d$: Node degree limit}
        \IOComment{\textbf{Input} $L$: Number of links between ordered node-pair, initially zero}
        \IOComment{\textbf{Output} $E$: Allocated links, initially empty}
        \textcolor{cyan}{\OldStatex \ \ \ \ \ \ \(\triangleright\) \textit{Initially, $E$ is empty}}
        \State $E$ \psass $\{\}$
        \textcolor{cyan}{\OldStatex \ \ \ \ \ \ \(\triangleright\) \textit{Initially, each node has $d$ available tx and rx interfaces}}
        \For {$v \in V$}
            \State $available_{tx}[v]$ \psass $d$
            \State $available_{rx}[v]$ \psass $d$
        \EndFor
        \textcolor{cyan}{\OldStatex \ \ \ \ \ \ \(\triangleright\) \textit{Create new links according to the demand list}}
        \While{$\exists i, j < |V|: i \neq j, available_{tx}[v_i] > 0,available_{rx}[v_j] > 0$}
            \textcolor{cyan}{\OldStatex \ \ \ \ \ \ \ \ \ \ \ \(\triangleright\) \textit{allocate a direct connection for the highest demand pair}}
            \State $(v_1, v_2)$ \psass node-pair with highest demand in $T$
            \State $e$ \psass \textbf{NewLink($v_1$, $v_2$)} 
            \State $E$ \psass $E \cup \{e\}$
            \textcolor{cyan}{\OldStatex \ \ \ \ \ \ \ \ \ \ \ \(\triangleright\) \textit{Increment the number of parallel links from $v_1$ to $v_2$}}
            \State $L(v_1, v_2)$ \addeq 1 
            \textcolor{cyan}{\OldStatex \ \ \ \ \ \ \ \ \ \ \ \(\triangleright\) \textit{Scale the demand down by the number of links}}
            \State $T(v_1, v_2)$ \muleq $1/2$
            \textcolor{cyan}{\OldStatex \ \ \ \ \ \ \ \ \ \ \ \(\triangleright\) \textit{Update available interfaces}}
            \For{$v \in (v_1, v_2)$} 
                \State $available_{tx}[v_1]$ \subeq 1
                \State $available_{rx}[v_2]$ \subeq 1
                \textcolor{cyan}{\OldStatex \qquad \ \ \ \ \ \ \ \ \ \ \ \ \ \ \ \ \(\triangleright\) \textit{Stop considering nodes with zero available interfaces}}
                \If{$available_{tx}[v_1]$ \eqeq 0}
                    \For{$u \in V$}
                        \State Remove $(v_1, u)$'s entry from $T$
                    \EndFor 
                \EndIf 
                \If{$available_{rx}[v_2]$ \eqeq 0}
                    \For{$u \in V$}
                        \State Remove $(u, v_2)$'s entry from $T$
                    \EndFor 
                \EndIf 
            \EndFor
        \EndWhile
        \textcolor{cyan}{\OldStatex \ \ \ \ \ \ \(\triangleright\) \textit{Ensure the network graph is connected}}
        \State{\texttt{2-EdgeReplacement}($E$, $T$)} \label{algo:twoer}
        \textcolor{cyan}{\OldStatex \ \ \ \ \ \ \(\triangleright\) \textit{Updte route for host-based forwarding}}
        \State{\texttt{UpdateRoute}($E$)} 
        \State \Return $E$ 
  \EndProcedure
\end{algorithmic}
\caption{OCS-reconfig pseudocode \label{alg:uheuristic}}
\end{algorithm}

\section{Modifications to SiP-ML}
\label{app:sipml}

Since SiP-ML's SiP-Ring proposal is based on a physical ring topology, its reconfiguration algorithm has several constraints on wavelength allocation for adjacent nodes. Given that \name's physical topology is not a ring, directly applying SiP-Ring's optimization  using the original C++ code causes SiP-ML to perform extremely poorly in our setup. To give SiP-ML a leg up, we observe that its formulation tries to optimize a utility function very similar to Equation~\ref{eq:utility} without the $Discount$ part (i.e. $Discount = 1$), but with an integer liner program (ILP). While an ILP gives the optimal solution, its runtime makes it prohibitive for the amount of simulation parameters we explore. Therefore, we substitute the ILP with Algorithm~\ref{alg:uheuristic} with $Discount = 1$, a heuristic that tries to achieve a similar goal. 

Note that the SiP-ML paper has another design called SiP-OCS, which is similar architecturally to \name. In the paper, SiP-OCS is proposed as a one-shot reconfiguration approach due to the long reconfiguration latency of 3D-MEMS based OCSs. 

\section{Cost of Network Components}
\label{sec:cost_table}

\begin{table}[t]
\scriptsize
\centering
\renewcommand{\arraystretch}{0.92} 
\linespread{1.05}\selectfont\centering
    \begin{tabular}{|p{0.95cm}|p{0.7cm}|p{0.72cm}|p{0.8cm}|p{0.73cm}|p{0.7cm}|p{0.7cm}|}
    \hline
        Link bandwidth & Tran-sceiver (\$) & NIC (\$) & Electrical switch port (\$) & Patch panel port (\$) & OCS port (\$) & 1$\times$2 switch (\$)\\ \hline
        10~Gbps     & 20~\cite{ts10}  & 185~\cite{nic1_25} & 94~\cite{sw10} & 100~\cite{telescent} & 520~\cite{polatis} & 25~\cite{one_by_two_switch} \\\hline
        25~Gbps     & 39~\cite{ts25new}  & 185~\cite{nic1_25}  & 144~\cite{sw25} & 100~\cite{telescent}  & 520~\cite{polatis} & 25~\cite{one_by_two_switch} \\\hline
        40~Gbps     & 39~\cite{ts40new}  & 354~\cite{nic1_40} & 144~\cite{sw40}& 100~\cite{telescent}  & 520~\cite{polatis} & 25~\cite{one_by_two_switch} \\\hline
        100~Gbps    & 99~\cite{ts100new} & 678~\cite{nic1_100}  & 187~\cite{sw100} & 100~\cite{telescent}  & 520~\cite{polatis}& 25~\cite{one_by_two_switch} \\\hline
        200~Gbps\footnote{200~G transceivers and switch ports are estimated as 2$\times$ 100G cost.}  & 198~\cite{ts100new} & 815~\cite{nic2_100} & 374~\cite{sw100} & 100~\cite{telescent}  & 520~\cite{polatis} & 25~\cite{one_by_two_switch} \\\hline
    \end{tabular}
    \caption{Cost of network components.}
    \label{tab:cost_model}
\end{table}

Table~\ref{tab:cost_model} lists the cost of network components we use in  Section~\ref{sec:cost_model}, namely NICs, transceivers, fibers, electrical switches, patch panels, and optical switches. The cost of transceivers, NICs, and electrical switch ports is based on the lowest available prices in official retailer websites~\cite{fs, colfax}. Note that for 200~Gbps, we use more 100~Gbps ports and fibers, because they were less expensive than high-end 200~Gbps and 400~Gbps components, or their price was not available. To estimate the cost of electrical switch ports, we consider Edgecore bare metal switches with L3 switching and maximum number of ports to amortize the per port cost. The cost of NICs is taken from the Mellanox ConnectX series, and we consider two 2-port NICs as one 4-port NIC. We obtain the cost of the patch panel, OCS, and 1$\times$2 optical switch directly from their suppliers, Telescent~\cite{telescent} and Polatis~\cite{polatis} (with 40\% discount). The cost of transceivers matches that reported in Sirius~\cite{sirius}.

To compute the network cost of \LBE and \SBE, we consider number of nodes in a full bisection bandwidth \fattree. For example, a standard $k=8$ \fattree has 80 switches with 64 ports, or 640 switch ports in total, in addition to 1 NIC per host and one transceiver per NIC and switch port. A \name system of 128 nodes with degree $d$ uses $128\times d$ NICs and transceivers, but $128\times 2\times d$ patch panel ports because of the look-ahead design. Note that the cost of optical components stays constant as link bandwidth increases, an inherent advantage of optics. Following prior work, we estimate the cost of fiber optics cables as 30~cents per meter~\cite{projector} and select each fiber's length from a uniform distribution between 0 and 1000~meters~\cite{rail}. We calculate the cost of \name based on $2d$ patch panels and 1$\times$2 switches at each link to support its look-ahead design (\S\ref{app:sharding}). OCS-reconfig's cost is based on $d$ OCSs connected to all servers in a flat topology. 

\section{Impact of Server Degree on \name's Performance}
\label{sec:impact_of_degree}

\begin{figure*}[t]
    \centering
\includegraphics[width=\textwidth]{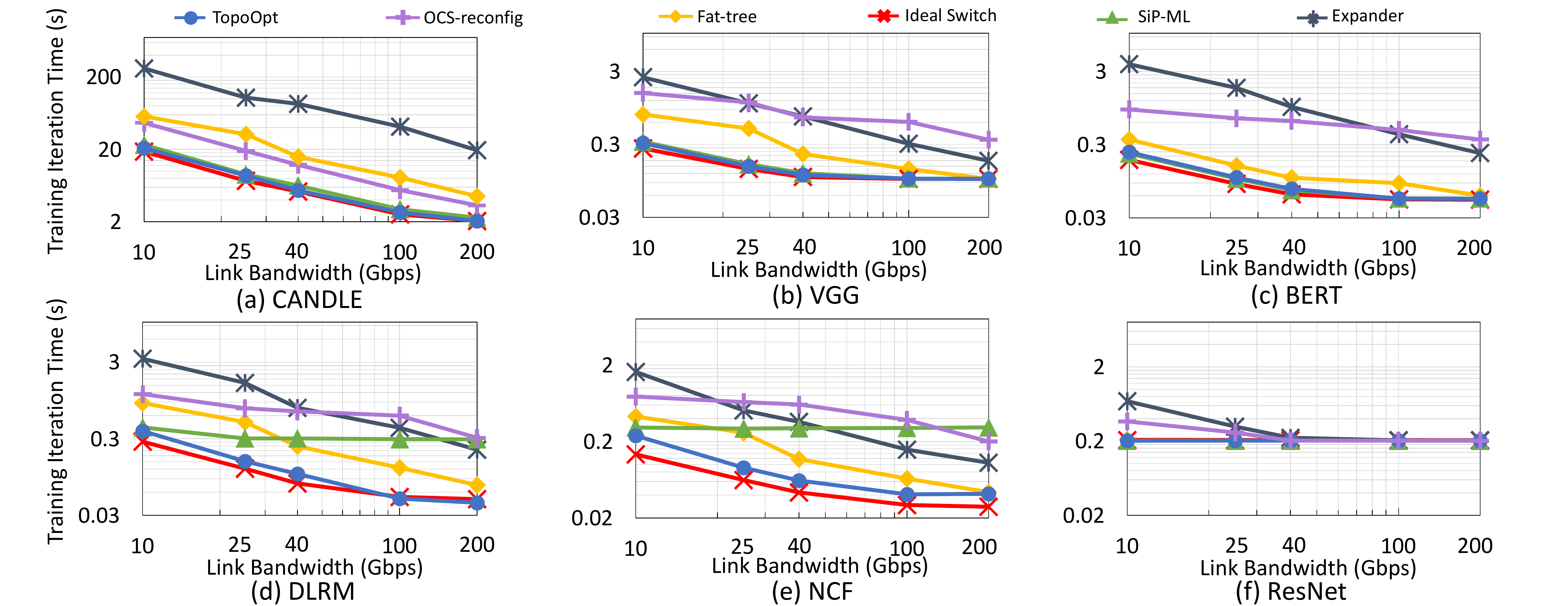}
\caption{Dedicated cluster of 128 servers ($d$ = 8).}
\label{fig:128_d8_single_job}
\end{figure*}

Figure~\ref{fig:128_d8_single_job} shows the same setting as Figure~\ref{fig:128_d4_single_job} except that each server has a degree of eight ($d=8$). The results show a similar trend: even though per server bandwidth has increased, the behavior of different network architectures remains consistent.

\begin{figure}[t]
\centering
\includegraphics[width=0.99\columnwidth]{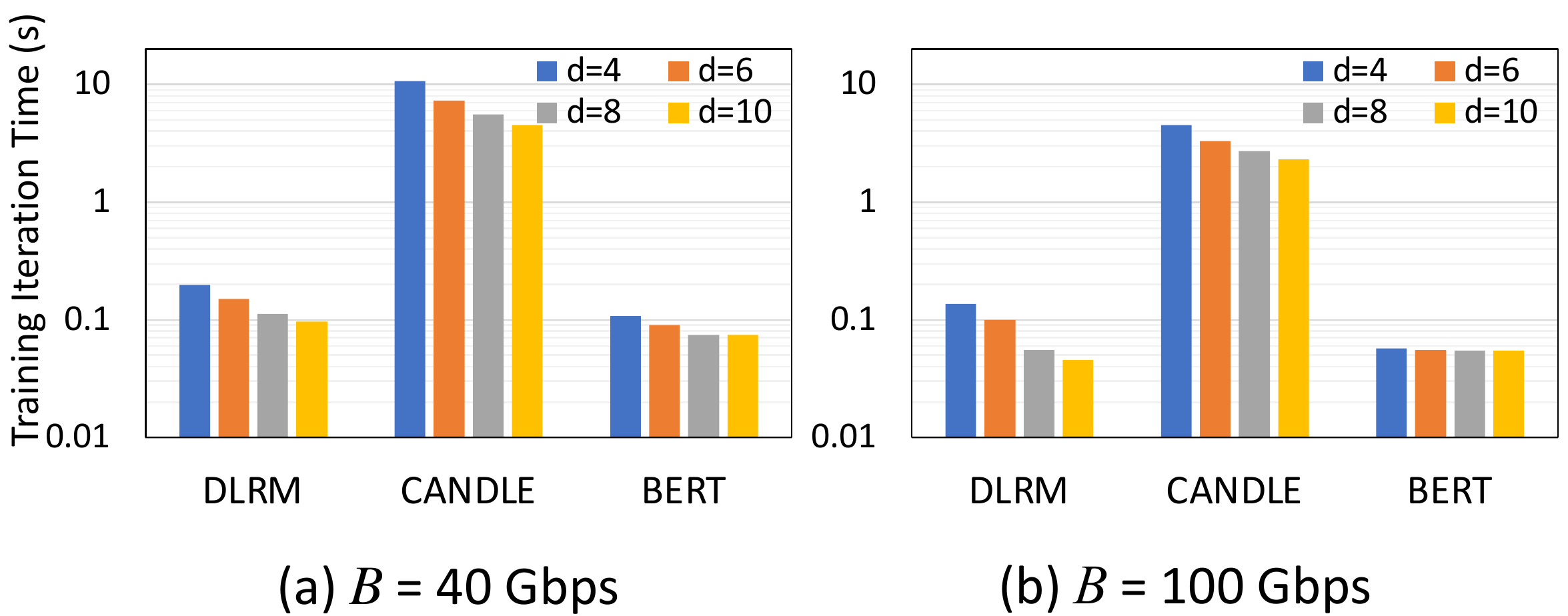}
\caption{Impact of server degree ($d$) on performance.}
\label{fig:degree_impact}
\end{figure}

Next we do a sensitivity analysis of impact of server degree $d$ on \name's performance. 
Specifically, we vary the degree of each server in \name for two link bandwidths: 40~Gbps and 100~Gbps. Figure~\ref{fig:degree_impact} shows the trend for different DNN models. Both DLRM and CANDLE are network-heavy;  therefore, they benefit more from the additional bandwidth obtained by increasing $d$. CANDLE's improvement is almost linear as degree goes up, as the strategy is closer to data parallel and the amount of bandwidth available to \reduce operation increases linearly as well. In the case of DLRM, we  observe a super-liner scaling when $B$ = 100~Gbp because DLRM has one-to-many and many-to-one \MP transfers which require a low hop count in the topology. As we increase $d$, \algo is able to find network topologies with much lower diameter, consequently benefiting the performance by both increasing bandwidth and reducing hop-count for \MP transfers. Finally, BERT is mostly compute bound at higher bandwidth; hence, increasing the server degree and bandwidth per node has marginal impact on its iteration time.

\section{Enabling Host-based Forwarding in RDMA}
\label{app:rdma_forwarding}

\balance

\begin{figure}[t]
\centering
\includegraphics[width=0.99\columnwidth]{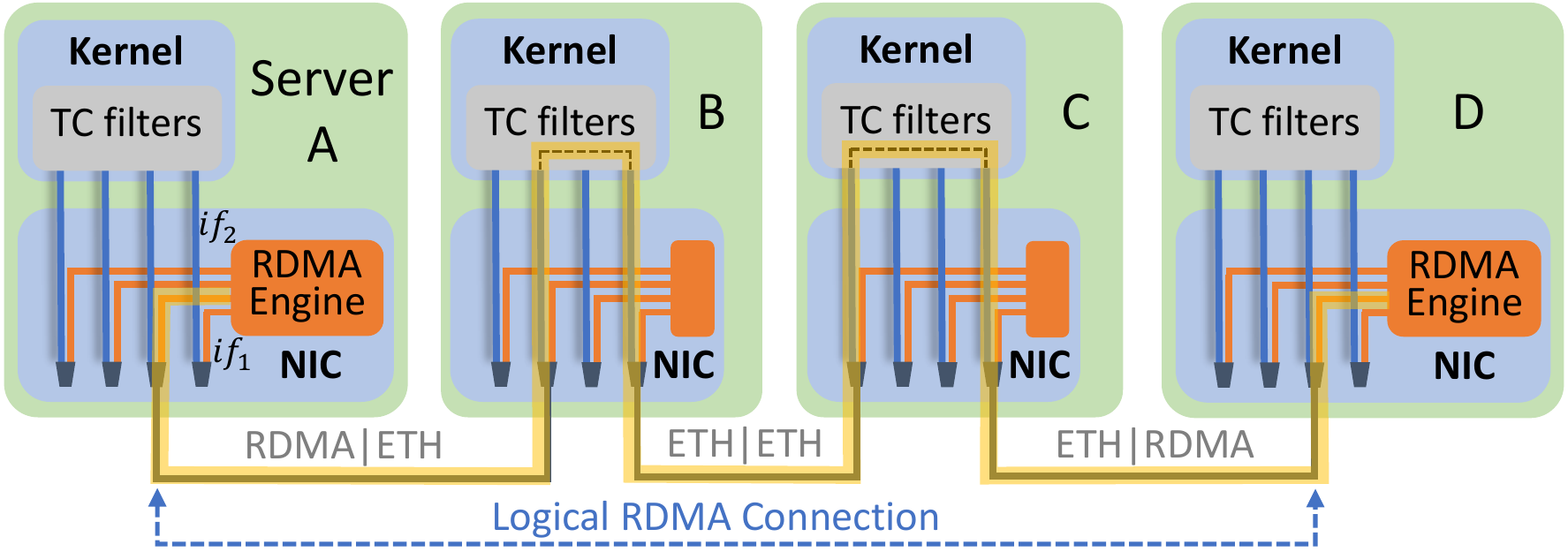}
\caption{Host-based RDMA forwarding to create a logical RDMA connection between end hosts.} 
\label{fig:forwarding}
\end{figure}

To support a multihop \name interconnect using host-based forwarding, we enable RDMA RoCEv2 forwarding on all our HP NICs. 
RoCEv2 is an implementation of RDMA on top of UDP/IP protocol, by utilizing a particular UDP port (4791) and encapsulating an InfiniBand (IB) data packet. Hence, each RoCEv2 packet can be routed with its source and destination IP addresses. However, host-based forwarding is challenging in RDMA protocol, as the packet processing and memory access are offloaded to the NIC, and the host does not have access to individual packets. More precisely, if a packet's IP destination IP address does not match the NIC's IP address, the RDMA engine silently drops the  packet. 

To address this issue, we collaborated with engineers from Marvell, the provider of the firmware and driver for our HP NICs. The solution that came out of our collaboration does not require proprietary software or firmware, and is applicable to commodity NICs with the same ASIC. We will release our scripts publicly. At a high-level, we use a feature called \textit{NPAR}, or \textit{network partitioning}. It allows us to split each 25~Gbps physical interface into two \textit{logical interfaces} in the hardware level: $if_1$ and $if_2$, as shown in the right-most port of server A in Figure~\ref{fig:forwarding}. $if_1$ is a normal RDMA interface, where the RDMA engine of the NIC bypasses the kernel, and it has an IP address. This enables the upper layer software to consider $if_1$ as a normal RDMA interface. However, $if_2$ does not have an IP address and RDMA is disabled. $if_2$ has a different MAC address from $if_1$, and we use this address to split the traffic across $if_1$ and $if_2$. The traffic that needs to be forwarded uses the MAC address of $if_2$ and hence is delivered to the host networking stack instead of NIC's RDMA engine.

Furthermore, we establish a set of \texttt{iproute}, \texttt{arp}, and \texttt{tc flower} rules in Linux to enable the proper forwarding of packets. If two servers are directly connected, such as the third port of server A and the second port of Server B in Figure~\ref{fig:forwarding}, we only need to indicate the outgoing interface on each of these servers. RDMA engines will handle the communication. However, for the connection between server A and D, we set the \texttt{iproute} and \texttt{arp} tables on server A and server D to dictate which port the traffic should go out, as well as the proper MAC address of the next server in the forwarding chain. In this case, the packets are delivered to the kernel. Then, on servers B and C, we set the \texttt{tc flower} rules to forward the packets to the next server with the proper MAC address. In these \texttt{tc flower} rules, we look-up the final destination IP and assert the routing that was computed by our algorithm. 

\para{Walk-through of an example of a packet going from server A to server D.} In Figure~\ref{fig:forwarding}, the RDMA engine of server A assumes server D is connected on the third port. It uses the kernel's routing tables for the destination MAC address, which is set to the MAC address of $if_2$ of the second port on server B. Therefore, a packet which starts as an RDMA packet of server A is treated as an Ethernet packet when it arrives at server B, and goes to server B's kernel. In the kernel, based on the packet's final destination IP of server D, server B redirects the packet to the fourth port, with destination MAC address set to $if_2$ of server C. In this connection, the packet is treated as a normal Ethernet packet. Finally, on server C, the kernel rewrites the destination MAC address to that of $if_1$ on the third port of server D, and redirects it to that port. In this connection, the outgoing Ethernet packet is considered an RDMA packet because of the destination MAC address. For the reverse connection from server D to A, the same process happens in reverse, to support a bidirectional connection. 

With these forwarding rules, we construct logical RDMA connections between all pairs of servers. Upper layer communication libraries such as NCCL requires all-to-all connectivity, and they will utilize these connections. We also modify NCCL to be topology-aware, as certain pairs of servers are only connected through specific ports.

Compared to native point-to-point RDMA, this approach takes a performance penalty. Our experiments indicate the overhead is negligible when the amount of forwarded traffic is small. Our NICs currently support TCP forwarding offload. With firmware and driver modifications or future versions of the NICs, they will also support RDMA forwarding offload. This will further reduce the overhead of our approach.
\label{lastpage} 

\end{document}